\documentclass[10pt]{article}
\usepackage[english]{babel}
\usepackage{authblk}
\usepackage[letterpaper,top=2cm,bottom=2cm,left=2cm,right=2cm,marginparwidth=1.75cm]{geometry}

\usepackage{amsmath, amssymb, amsthm, mathrsfs}
\usepackage{graphicx}
\usepackage[colorlinks=true, allcolors=blue]{hyperref}
\usepackage{hyperref}
\usepackage[capitalise]{cleveref}
\usepackage{crossreftools}
\usepackage{color}
\usepackage{cancel}

\usepackage{soul}

\setlength{\parskip}{5pt} % space between paragraphs
\usepackage{layouts}

\def\R{\mathbb{R}}
\def\E{\mathbb{E}}
\newcommand{\dps}{\displaystyle}
\newcommand{\one}{\mathbf{1}}
\newcommand{\partition}{\mathcal{Z}}
\newtheorem{theo}{Theorem}[section]
\newtheorem{prop}[theo]{Proposition}

\newtheorem{rem}[theo]{Remark}

\title{Sampling metastable systems using collective variables and Jarzynski--Crooks paths}
\author[1]{C. Sch\"{o}nle}
\author[1]{M. Gabrié}
\author[2]{T. Lelièvre}
\author[2]{G. Stoltz}
\affil[1]{CMAP, CNRS, École polytechnique, Institut Polytechnique de Paris, 91120 Palaiseau, France}
\affil[2]{CERMICS, Ecole des Ponts, IP Paris, Marne-la-Vallée, France \& MATHERIALS project-team, Inria Paris, France}
\date{}

\begin{document}
\maketitle

\begin{abstract}
We consider the problem of sampling a high dimensional multimodal target probability measure.
We assume that a good proposal kernel to move only a subset of the degrees of freedoms (also known as collective variables) is known a priori. This proposal kernel can for example be built using normalizing flows~\cite{noeBoltzmannGeneratorsSampling2019,liNeuralNetworkRenormalization2018,gabrieAdaptiveMonteCarlo2022}. We show how to extend the move from the collective variable space to the full space and how to implement an accept-reject step in order to get a reversible chain with respect to a target probability measure.
The accept-reject step does not require to know the marginal of the original measure in the collective variable (namely to know the free energy). The obtained algorithm admits several variants, some of them being very close to methods which have been proposed previously in the literature, in particular in~\cite{athenesComputationChemicalPotential2002,nilmeier2011nonequilibrium,chen_enhanced_2015,chen2015generalized,neal2005taking}. We show how the obtained acceptance ratio can be expressed in terms of the work which appears in the Jarzynski--Crooks equality, at least for some variants.  Numerical illustrations demonstrate the efficiency of the approach on various simple test cases and allow us to compare the variants of the algorithm.
\end{abstract}

\section{Introduction}
Metastability poses a major hurdle in establishing the thermodynamical properties of physical systems. In the presence of metastability, the Boltzmann distribution is multimodal and therefore challenging to sample. Indeed, common Markov chain Monte Carlo (MCMC) approaches, such as molecular dynamics, rely on local updates in the space of configuration and fail to converge within an acceptable time as they get trapped in local energy minima. This difficulty is exacerbated as the dimension of the system increases and it becomes less and less likely to transition by chance between the modes.

Two complementary classes of strategies have been developed to tackle the challenge of sampling multimodal distributions. A first class of methods leverages the idea of tempering, combining fast-mixing moves at high-temperature with slow-mixing moves at low-temperature (see for instance \cite{swendsenReplicaMonteCarlo1986,nealAnnealedImportanceSampling2001,delmoralSequentialMonteCarlo2006}, as well as~\cite{woodard_conditions_2009,syed_non-reversible_2022} for recent contributions). In the context of molecular dynamics, this method is called replica exchange and is known to require a careful choice of the numerical parameters (exchange attempt frequency, ladder of temperatures) to be efficient, especially when the dimension increases~\cite{rosta2009error,Henin_Lelievre_Shirts_Valsson_Delemotte_2022}.  A second class of method exploits the observation that the multimodality is often only present along a few degrees of freedom. These degrees of freedom are called collective variables (CV) (or order parameters). Since the CV space is lower dimensional, it is then much simpler to explore and it can be used to bridge modes using enhanced sampling algorithms such as umbrella sampling~\cite{torrie-valleau-77} and the related histogram methods~\cite{SC08,KBSKR92}, thermodynamic integration~\cite{kirkwood-35} or free energy adaptive biasing methods~\cite{darve-pohorille-01,henin-chipot-04,laio-parrinello-02,barducci-bussi-parrinello-08}. Compared to a CV-free method such as parallel tempering, these techniques are known to be less sensitive to the dimension of the problem, at least if the CVs indexing the various modes of the target measure are known. The bottleneck in these approaches thus resides in identifying an appropriate set of CVs, which, in addition, cannot be very high dimensional because these techniques are typically plagued by the curse of dimensionality in the CV space.

Recently, a third category of approaches has emerged based on deep generative models called normalizing flows (NFs) \cite{papamakarios2021}. They allow for efficient sampling of independent realizations and have a tractable probability density. Trained to approach the Boltzmann distribution, they can thus be used to propose candidate configurations which are subsequently reweighted or incorporated in an MCMC sampler such as a Metropolis--Hastings sampler \cite{liNeuralNetworkRenormalization2018,noeBoltzmannGeneratorsSampling2019,gabrieAdaptiveMonteCarlo2022}. Since the proposal samples are generated independently across metastable states, this approach directly addresses the challenge of metastability.
The main price to pay here is the additional cost of training the flow to sufficient accuracy, which is amortized in part by the subsequent performance gain during sampling. However, reaching this level of accuracy becomes increasingly difficult as the dimension grows~\cite{deldebbioEfficientModellingTrivializing2021,greniouxSamplingApproximateTransport2023b,schonleOptimizingMarkovChain2023}, currently placing the limit of applications for these approaches between a few hundreds and a few thousands of degrees of freedom. 

A promising idea is therefore to combine dimensionality reduction and efficient samplers, such as the ones in this third category based on NFs, by applying them in CV-space. On the one hand, targeting a CV space (rather than the full configuration space) reduces the dimensionality and eases the learning of the NF. On the other hand, a NF can drive exploration in a much larger CV-space than traditional enhanced samplers for which CVs are capped at a few dimensions. This is advantageous as capturing the metastability of the system with a representation of a few tens or hundreds of dimensions is significantly simpler. However, sampling directly in CV space is typically impossible since the image measure of the Boltzmann distribution in the CV space (related to the free energy) is intractable. A natural question, and the focus of the present work, is then how CV-configurations proposed by a smart sampler can be employed to build an unbiased sampler in the full configuration space. 

Several works in literature already address this question~\cite{athenesComputationChemicalPotential2002,chen2015generalized,chen_enhanced_2015,neal2005taking,vandecasteele_micro-macro_2023,nilmeier2011nonequilibrium}. We explain the general philosophy behind constructing an unbiased algorithm and describe in detail different versions, some of them corresponding directly to those found in literature. The general idea is always to complete a proposal in CV space to a proposal in full space using a specific protocol. One instance of such an algorithm was given by \cite{vandecasteele_micro-macro_2023} and applied practically in a pseudo-marginal setting in \cite{vandecasteele_pseudo-marginal_2024}. We follow a different approach, relying on a non-equilibrium move in configuration space.
The overall move to the endpoint is accepted or rejected in a Metropolis--Hastings sampler with an acceptance ratio that fulfills detailed balance.
Since it is typically intractable to marginalize over all possible paths between two endpoints, detailed balance is instead considered at the level of moves following specific forward and backward paths between two endpoints -- which implies the traditional detailed balance. 
The works~\cite{athenesComputationChemicalPotential2002, karagiannis_annealed_2013,neal2005taking,chen2015generalized,nilmeier2011nonequilibrium} all fall into this paradigm, focusing on different contexts. A specific choice of non-equilibrium protocol was presented by \cite{neal2005taking}, where the system is dragged from one value of the CV to another by following an interpolation of the target measures between the start and end point. In a Bayesian framework of model selection, this idea was further developed by \cite{karagiannis_annealed_2013} in the context of reversible jump Markov Chain algorithms.
The work \cite{chen_enhanced_2015} explored the possibility to steer the system along a path in CV space with deterministic transformations, which could be further extended to use a combination of deterministic and stochastic steps as seen already in~\cite{vaikuntanathan_escorted_2011}.

The aforementioned algorithms can also be understood from a different angle via the Jarzynski equality~\cite{jarzynski-97}, which provides an unbiased estimator of $\exp(-\beta \Delta F)$, where $\Delta F$ is the free energy difference between two values of the CV. Indeed, following a deterministic schedule between the two CV values of interest combined with a specific equilibration protocol for the remaining coordinates, one obtains a work $\mathcal{W}$ which satisfies $\E[\exp(-\beta \mathcal{W})]=\exp(-\beta \Delta F)$ (at inverse temperature $\beta$). From there, it is natural to formulate a prototypical Metropolis--Hastings algorithm
where, in the acceptance ratio, the exponential of the free energy difference is replaced by the exponential of the work. As we show in this article, some of the previously mentioned algorithms do exactly that: their accept/reject criterion is indeed given by the exponential of a work which precisely fulfills the Jarzynski equality. This point of view offers another intuitive way to understand how these algorithms work. Let us however stress that the reversibility proof relies on the more general Jarzynski--Crooks equality~\cite{Crooks99}  between functionals of the forward and backward paths.

The main contributions of this work are as follows: 
we cast existing algorithms into a common framework and show how they can be viewed through the Jarzynski--Crooks lens (\cref{sec:generalframework,sec:contalgo}). We explicitly describe different variants of the algorithm that can be implemented numerically
and we prove the reversibility of the associated Markov Chain with respect to the target measure (\cref{sec:discretetime}). Through numerical experiments (\cref{sec:numerics}), we provide practical guidance on which algorithm to choose and how to optimize its parameters. Last, we exemplify how proposals are constructed for a concrete model from statistical physics.

In a parallel work \cite{samuel}, the symmetric path algorithm described below is tested on a toy molecular system. There, the framework is extended to allow for the adaptive training of a normalizing flow targeting the CV space along the MCMC procedure.

\section{General framework}
\label{sec:generalframework}

We introduce the main notation in Section~\ref{sec:target_measure} and then the prototype Metropolis--Hastings algorithm we will consider in the following in Section~\ref{sec:algo}. Finally, we provide in Section~\ref{sec:prototype_proof} a general framework to prove the reversibility of the type of Metropolis--Hastings algorithms we will consider.

\subsection{Target measure}\label{sec:target_measure}

Our aim is to sample a Boltzmann measure~$\nu$ over~$\R^d$ of the form
\begin{align}
    \nu({\rm d}x) = \partition^{-1} \exp(-\beta V(x)) \, {\rm d}x,
    \label{eq:boltzmannmeasure}
\end{align}
with $V: \R^d \to \R$ the potential energy function and $\partition$ the normalizing partition function. We assume the prior knowledge of a collective variable~$\xi:\R^d \to \R^\ell$, where $0 < \ell < d$.
For simplicity, we focus on a simple geometry where the collective variable consists of the first $\ell$ degrees of freedom, $\xi(x^1,\dots,x^d) = (x^1,\dots,x^\ell)$, but note that the extension to more complicated (nonlinear) collective variables is possible (see \cref{rem:general_CV}).
Let us denote the remaining coordinates by $x^\perp=(x^{\ell+1}, \ldots, x^d) \in \R^{d-\ell}$ so that a complete microstate can be written as $x=(z,x^\perp)$. The image of the measure $\nu$ by $\xi$ is given by
\begin{align}
    (\xi \# \nu)({\rm d}z) = \nu_{\rm CV}({\rm d}z) = \exp(-\beta F(z)) \, {\rm d}z. 
\end{align}
with the associated free energy
\begin{align}
    F(z)=-\beta^{-1} \ln \left( \int_{\R^{d-\ell}} \partition^{-1} \exp(-\beta V(z,x^\perp)) \, {\rm d}x^\perp\right).
    \label{eq:freeEnergySimple}
\end{align}   
Let us also introduce the family of conditional measures on the variables orthogonal to the CVs:
\begin{align}
\nu_{\perp}({\rm d}x^\perp \, | \, z)=\frac{\partition^{-1}\exp(-\beta V(z, x^\perp))\,{\rm d}x^\perp}{\exp(-\beta F(z))}.
\label{eq:condmeasure}
\end{align}

Collective variables are designed to reduce the dimensionality while capturing the complexity of the target distribution. In particular, a good collective variable should allow to discern the multiple basins in the potential energy landscape of a metastable system. 
Ideally, all the conditional measures $\nu_{\perp}({\rm d}x^\perp \, | \, z)$ should be easy to sample whatever the value of $z$, i.e. should not exhibit metastability, see~\cite{Lelievre13}. We come back to a more precise discussion of what is actually required on the collective variable for the algorithm to be efficient in \cref{sec:app_cv_choice_failure}.
Assuming one has access to a proposal Markov density $Q:\R^\ell \times \R^\ell \to \R_+$ to propose moves in the $\xi$-variable, we are addressing here the question of how to use it to build a sampling algorithm over the whole space~$\R^d$.

\subsection{A prototype algorithm}
\label{sec:algo}
\begin{figure}
    \centering
    \includegraphics{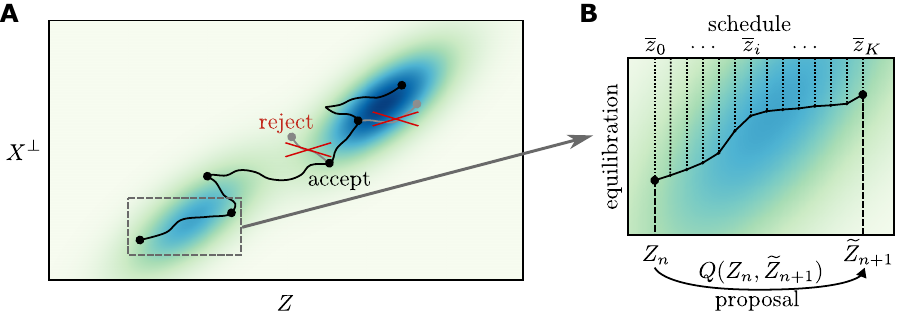}
    \caption{Illustration of the sampling algorithm. Panel A: Starting from a microstate $(Z_n,X^\perp_n)$, a new CV value~$\widetilde Z_{n+1}$ is proposed (Step~1) and a (continuous or discrete) path in the full space to that point is constructed in Step~2. Finally, the move is accepted or rejected in Step~3. Panel B illustrates how to construct such a path with a fixed schedule in $z$, here chosen discrete in time.}
    \label{fig:overview}
\end{figure}
Assume that we are given a proposal Markov density~$Q:\R^\ell \times \R^\ell \to \R_+$ on the collective variable space (for all $z \in \R^\ell$, the measure~$Q(z,\widetilde z) \, {\rm d} \widetilde z$ is a probability distribution on~$\R^\ell$)\footnote{We present the algorithm assuming that the proposal kernel has a density for simplicity; this can be readily extended to more general situations as soon as the Metropolis--Hastings acceptance ratio is well defined.}. 
Since estimating the free energy difference is generally a hard problem, one does not have access to the image measure $\exp(-\beta F(z)) \, {\rm d}z$ and, therefore, one cannot use a simple Metropolis--Hastings algorithm to directly build an MCMC sampler on the collective variables. Instead, the approach we focus on is based on so-called Jarzynski--Crooks moves. They provide an unbiased estimator of the free energy difference between two values of the collective variable via sampled paths in $\R^d$ which follow an imposed schedule in the CV space. The algorithm takes the form of an MCMC sampler, yielding a Markov chain of microstates~$(X_n)_{n \ge 0}$ with~$X_n = (Z_n, X^\perp_n)$, through iterative updates. Let us describe the general idea behind one such MCMC step. Starting from the current state $X_n = (Z_n, X^\perp_n)$ at iteration $n$, the next state is determined in three steps:
\begin{itemize}
    \item (Step 1) Propose a move in the collective variable space:
      \[
      \widetilde{Z}_{n+1} \sim Q(Z_n,\widetilde{z}) \, {\rm d}\widetilde{z},
      \]
    where $Q(z,\widetilde{z}) \in \R^\ell \times \R^\ell \to \R_+$ denotes a Markov transition kernel.
    \item (Step 2) Build a Jarzynski--Crooks path in full space $\R^d$ between collective variable values $Z_n$ and $\widetilde{Z}_{n+1}$. This yields a proposal state $\widetilde{X}_{n+1} = (\widetilde{Z}_{n+1}, \widetilde{X}^\perp_{n+1})$ and a work $\mathcal{W}_{n+1}$ associated with the move. 
\item (Step 3) Draw a random variable $U_{n+1}$ with uniform law over $[0,1]$ and
\begin{itemize}
    \item If 
    \begin{equation}\label{eq:AR}
    U_{n+1} \le \exp(-\beta \mathcal{W}_{n+1})) \displaystyle \frac{Q(\widetilde{Z}_{n+1}, Z_n)}{Q(Z_n,\widetilde{Z}_{n+1})},
    \end{equation}
    then accept the proposal: $X_{n+1}=\widetilde{X}_{n+1}$.
    \item Otherwise, reject the proposal: $X_{n+1} = X_n$.
\end{itemize}
\end{itemize}

The algorithm is schematically visualized in \cref{fig:overview}. We will define Step 2 more precisely later on. For now, let us explain the general intuition behind this update procedure. It is possible to build a path $(\overline X_t)_{0 \le t \le T}$ in full space such that $\overline X_0=X_n$ and $\xi(\overline X_T)=\widetilde{Z}_{n+1}$ (the proposal state being then $\widetilde X_{n+1}=\overline X_T$), associated with a work $\mathcal{W}_{n+1}$ that fulfills the Jarzynski equality: if $X^\perp_n$ is distributed according to $\nu_{\perp}(\cdot|Z_n)$, then, for any function~$\varphi: \R^{d} \to \R$, 
\begin{align}\label{eq:Jarz_consistency}
\E\left(\varphi(\widetilde{X}_{n+1}) \exp(-\beta \mathcal{W}_{n+1}) \, \middle| \, (Z_n,\widetilde Z_{n+1}) \right) = \exp(-\beta (F(\widetilde{Z}_{n+1})-F(Z_n)) \int_{\R^{d-\ell}} \varphi\left(\widetilde Z_{n+1}, x^\perp\right) \nu_{\perp}\left({\rm d}x^\perp \middle| \widetilde Z_{n+1}\right).
\end{align}
On the left hand side, $\widetilde{X}_{n+1}=(\widetilde Z_{n+1}, \widetilde X^\perp_{n+1})$ and the expectation is taken over~$\widetilde X^\perp_{n+1}$, the initial condition~$X^\perp_n$, and the work $\mathcal{W}_{n+1}$, with fixed $Z_n$ and $\widetilde Z_{n+1}$.
In particular, taking $\varphi=1$ in~\eqref{eq:Jarz_consistency}, one recovers the fact that the work provides an estimator for the free energy difference: 
\begin{equation}\label{eq:Jarz_FE}
\E\left[\exp(-\beta \mathcal{W}_{n+1})\, \middle| \, (Z_n,\widetilde Z_{n+1}) \right] = \exp\left[-\beta \left(F(\widetilde{Z}_{n+1})-F(Z_n)\right)\right].
\end{equation}
This motivates the formula used in Step~3 for the Metropolis--Hastings ratio: if the free energy was known, one would just follow a standard Metropolis--Hastings algorithm with acceptance ratio $\exp(-\beta (F(\widetilde{Z}_{n+1})-F(Z_n))) \frac{Q(\widetilde{Z}_{n+1}, Z_n)}{Q(Z_n,\widetilde{Z}_{n+1})}$. Instead, one uses the consistent estimator provided by the work, $\exp(-\beta \mathcal{W}_{n+1})) \frac{Q(\widetilde{Z}_{n+1}, Z_n)}{Q(Z_n,\widetilde{Z}_{n+1})}$. 

It will turn out further on that the equality~\eqref{eq:Jarz_consistency} is not sufficient to prove the reversibility of the Markov chain (as one should actually use the Jarzynski--Crooks equality on path space). Conversely, we will also present algorithms with time-discretized paths which yield reversible Markov chains but fit the framework presented here only in the limit of an infinitely small time step. Still, the above picture provides an intuitive way of understanding the algorithms we will introduce.

In \cref{sec:contalgo}, we present a concrete algorithm using Jarzynski--Crooks moves with a continuous-in-time path schedule and prove that the resulting Markov chains are reversible with respect to the target measure. `Continuous-in-time' refers here to the path built in Step~2 of the algorithm; the overall Markov chain is always discrete.

In \cref{sec:discretetime}, we consider practical algorithms with discrete-in-time paths and prove the reversibility of the resulting Markov chains with respect to the target measure. We show that, under certain conditions, some of them can be equally understood as an instance of the algorithm presented in this section, while for others, one recovers the framework of this section only in the limit of an infinitely small time step.

In all cases, we are focusing on the reversibility of the algorithms, and hence the invariance of the target probability measure. Proving ergodicity would require irreducibility~\cite{MeynTweedie93,DMPS18} -- loosely speaking that any point in target space can be reached from any other point in a finite number of steps. This is not unreasonable to expect for the types of proposal samplers we are interested in. Since we are considering a general `good' proposal sampler $Q$ and are mainly interested in its optimal use, however, we leave a proof of irreducibility for future work.
\subsection{A prototype  proof of reversibility}\label{sec:prototype_proof}

For pedagogical purposes, we give in this section a high-level viewpoint on the essential properties one needs in order for the prototype algorithm presented in the previous section (and the algorithms that will be considered below) to generate a Markov Chain $(X_n)_{n \ge 0}$ which is reversible with respect to $\nu$.

In order to state this general reversibility result, we rewrite the prototype algorithm of the previous section in a slightly more general setting, so that it covers all the variants that will be introduced below. Starting from the current state $X_n = (Z_n, X^\perp_n)$ at iteration $n$, the next state is determined in three steps:
\begin{itemize}
\item (Step 1) Propose a move in the collective variable space:
  $\widetilde{Z}_{n+1} \sim Q(Z_n,\widetilde{z}) \, {\rm d}\widetilde{z}.$
\item (Step 2) Draw the proposal state $\widetilde X_{n+1}^\perp$ and additional random variables $Y_{n+1}$ according to some density~$P_{X_n \to \widetilde Z_{n+1}}$:
  \[
  \left(\widetilde X^\perp_{n+1},Y_{n+1}\right) \sim P_{X_n \to \widetilde Z_{n+1}}(\widetilde x^\perp,y) \, {\rm d} \widetilde x^\perp \, {\rm d}y.
  \]
  Compute the Metropolis--Hastings ratio $R(X_n,\widetilde X_{n+1},Y_{n+1})$, where, for all $x=(z,x^\perp)$, $\widetilde x=(\widetilde z,\widetilde x^\perp)$, and $y$,
  \begin{equation}\label{eq:R}
    R(x,\widetilde x,y)=\frac{\exp(-\beta V(\widetilde x)) P_{\widetilde x\to z}(x^\perp,y)}{\exp(-\beta V(x)) P_{x\to \widetilde z}(\widetilde x^\perp,y)}.
  \end{equation}
\item (Step 3) Draw a random variable $U_{n+1}$ with uniform law over $[0,1]$ and
  \begin{itemize}
  \item If 
    \begin{equation}\label{eq:AR2}
      U_{n+1} \le R\left(X_n,\widetilde X_{n+1},Y_{n+1}\right) \frac{Q(\widetilde{Z}_{n+1}, Z_n)}{Q(Z_n,\widetilde{Z}_{n+1})},
    \end{equation}
    accept the proposal: $X_{n+1}=\widetilde{X}_{n+1}$.
  \item Otherwise, reject the proposal: $X_{n+1} = X_n$.
  \end{itemize}
\end{itemize}

Let us comment on Step 2. Notice first that the introduced density is indexed by $X_n$ and $\widetilde Z_{n+1}$, and that the two sampled random variables $\widetilde X^\perp_{n+1}$ and $Y_{n+1}$ are coupled. The random variable $\widetilde X^\perp_{n+1}$ is used to complete the proposal to $\widetilde X_{n+1}=(\widetilde Z_{n+1},\widetilde X^\perp_{n+1})$. The additional random variable $Y_{n+1}$ contains all the additional noise needed in order to compute the Metropolis--Hastings ratio $R(X_n,\widetilde X_{n+1},Y_{n+1})$ (to make a link with Section~\ref{sec:algo}, compare~\eqref{eq:AR2} with~\eqref{eq:AR}). Notice that a simple consequence of~\eqref{eq:R} is that, for all $z$ and $\widetilde z$,
\[
\exp(-\beta F(z)) \int_{y, x^\perp, \widetilde x^\perp} R(x,\widetilde x,y) P_{x\to \widetilde z}(\widetilde x^\perp,y) \nu_\perp({\rm d} x^\perp | z) \, {\rm d} y \, {\rm d} \widetilde x^\perp = \exp(-\beta F(\widetilde z)),
\]
which shows some consistency of the Metropolis--Hastings ratio $R$ with the standard Metropolis--Hastings ratio that one would use on the collective variable space. This also makes a link with the Jarzynski equality~\eqref{eq:Jarz_FE} mentioned above.

The main result of this section is the following.\footnote{We drew inspiration for this result from discussions with Guanyang Wang (Rutgers University).}
\begin{prop}
  The Markov chain $(X_n)_{n \ge 0}$ generated by the above algorithm is reversible with respect to $\nu$.
\end{prop}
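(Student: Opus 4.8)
The plan is to verify the detailed balance relation directly at the level of the full transition kernel, using the augmented randomness $Y_{n+1}$ as a bookkeeping device. Let me write the transition kernel of the chain as $p(x, {\rm d}\widetilde x) = p_{\rm acc}(x,{\rm d}\widetilde x) + r(x)\,\delta_x({\rm d}\widetilde x)$, where $p_{\rm acc}$ is the sub-probability kernel corresponding to an accepted move and $r(x)$ is the rejection probability. The rejection part trivially satisfies detailed balance with respect to any measure, so the whole content is in showing that $\nu({\rm d}x)\,p_{\rm acc}(x,{\rm d}\widetilde x)$ is symmetric under $x \leftrightarrow \widetilde x$.

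The key step is to write $p_{\rm acc}$ as an integral over the auxiliary variable $y$. For a move from $x=(z,x^\perp)$ to $\widetilde x=(\widetilde z,\widetilde x^\perp)$, the density (against ${\rm d}\widetilde z\,{\rm d}\widetilde x^\perp$) of landing at $\widetilde x$ and accepting is
\begin{equation*}
  Q(z,\widetilde z) \int_y P_{x\to\widetilde z}(\widetilde x^\perp, y)\, \min\!\left(1,\ R(x,\widetilde x,y)\frac{Q(\widetilde z,z)}{Q(z,\widetilde z)}\right) {\rm d}y.
\end{equation*}
Multiplying by $\nu({\rm d}x) = \partition^{-1}\exp(-\beta V(x))\,{\rm d}x$ and substituting the definition~\eqref{eq:R} of $R$, the crucial algebraic identity is that $\exp(-\beta V(x))\,Q(z,\widetilde z)\,P_{x\to\widetilde z}(\widetilde x^\perp,y)\,R(x,\widetilde x,y)\frac{Q(\widetilde z,z)}{Q(z,\widetilde z)}$ equals $\exp(-\beta V(\widetilde x))\,Q(\widetilde z,z)\,P_{\widetilde x\to z}(x^\perp,y)$, and more generally $\exp(-\beta V(x))\,Q(z,\widetilde z)\,P_{x\to\widetilde z}(\widetilde x^\perp,y)\,R(x,\widetilde x,y) = \exp(-\beta V(\widetilde x))\,Q(\widetilde z,z)\,P_{\widetilde x\to z}(x^\perp,y)$, which is exactly the statement that $R(x,\widetilde x,y)$ and $R(\widetilde x, x, y)$ are reciprocals. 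Using the standard identity $a\min(1, b/a) = b\min(1,a/b) = \min(a,b)$ with $a = \exp(-\beta V(x))Q(z,\widetilde z)P_{x\to\widetilde z}(\widetilde x^\perp,y)$ and $b$ its swapped counterpart, the integrand of $\partition\,\nu({\rm d}x)\,p_{\rm acc}(x,{\rm d}\widetilde x)/({\rm d}x\,{\rm d}\widetilde x)$ becomes manifestly symmetric in $(x,\widetilde x)$ for each fixed $y$, hence so is the $y$-integral.

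The main subtlety — and the step I would be most careful about — is that the \emph{same} auxiliary variable $y$ must appear in both $P_{x\to\widetilde z}$ and $P_{\widetilde x\to z}$ for the reciprocal identity $R(x,\widetilde x,y)R(\widetilde x,x,y)=1$ to hold pointwise in $y$. This is guaranteed by the design of Step~2: $Y_{n+1}$ is defined precisely so that the backward density $P_{\widetilde x\to z}$ evaluated at $(x^\perp, y)$ describes the reverse path using the same noise realization (this is the Crooks-type pathwise relation, rather than the expectation-level Jarzynski relation~\eqref{eq:Jarz_FE}, which is why the excerpt emphasizes that~\eqref{eq:Jarz_consistency} alone is insufficient). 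Concretely one must check that $(\widetilde x^\perp, y) \mapsto (x^\perp, y)$ is the natural involution exchanging forward and backward moves and that it has unit Jacobian, so that no extra density factor enters. Once this pointwise reciprocity of $R$ is in hand, the rest is the routine Metropolis--Hastings symmetrization above; integrating the symmetric expression over $y$ and recognizing the result as $\nu({\rm d}\widetilde x)\,p_{\rm acc}(\widetilde x,{\rm d}x)$ completes the verification of detailed balance, and reversibility with respect to $\nu$ follows.
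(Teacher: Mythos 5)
Your proposal is correct and takes essentially the same route as the paper's proof: both reduce reversibility to the symmetry, for each fixed auxiliary variable $y$, of $\bigl[\exp(-\beta V(x))Q(z,\widetilde z)P_{x\to\widetilde z}(\widetilde x^\perp,y)\bigr]\wedge\bigl[\exp(-\beta V(\widetilde x))Q(\widetilde z,z)P_{\widetilde x\to z}(x^\perp,y)\bigr]$ under the exchange $x\leftrightarrow\widetilde x$, obtained from the definition~\eqref{eq:R} of $R$ via the identity $a\,(1\wedge(b/a))=a\wedge b$, followed by integration over $y$. The only slip is in your ``more generally'' aside, where the displayed identity without the ratio $Q(\widetilde z,z)/Q(z,\widetilde z)$ has mismatched $Q$ factors (it would force $Q(z,\widetilde z)=Q(\widetilde z,z)$); the reciprocity $R(x,\widetilde x,y)R(\widetilde x,x,y)=1$ that you actually use is nonetheless immediate from~\eqref{eq:R}, and your main computation is unaffected.
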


\begin{proof}
  Consider $X_n$ distributed according to $\nu$ and $X_{n+1}$ generated by the algorithm above. The objective is to show that $(X_n,X_{n+1})$ has the same law as $(X_{n+1},X_n)$. Introduce a bounded measurable test function $\varphi:\R^d \times \R^d \to \R$. One has
  \begin{align*}
    \E(\varphi(X_n,X_{n+1})) &=
    \E\left(\varphi(X_n,\widetilde{X}_{n+1}) \one_{U_{n+1}\le \frac{Q(\widetilde{Z}_{n+1}, Z_n)}{Q(Z_n,\widetilde{Z}_{n+1})} R(X_n,\widetilde X_{n+1},Y_{n+1})}\right) \\
    &\quad + \E\left(\varphi(X_{n},X_n)\one_{U_{n+1}> \frac{Q(\widetilde{Z}_{n+1}, Z_n)}{Q(Z_n,\widetilde{Z}_{n+1})}R(X_n,\widetilde X_{n+1},Y_{n+1})}\right) \\
    &= \E\left(\varphi(X_n,\widetilde{X}_{n+1}) \left[1 \wedge \left( \frac{Q(\widetilde{Z}_{n+1}, Z_n)}{Q(Z_n,\widetilde{Z}_{n+1})}R(X_n,\widetilde X_{n+1},Y_{n+1})\right)\right]\right)\\
    &\quad + \E\left(\varphi(X_n, X_{n}) \left[ 1-  1 \wedge \left(\frac{Q(\widetilde{Z}_{n+1}, Z_n)}{Q(Z_n,\widetilde{Z}_{n+1})}R(X_n,\widetilde X_{n+1},Y_{n+1})\right)\right]\right).
  \end{align*}
  The reversibility of the Markov chain follows from the fact that the sum of the latter two terms is equal to~$\E(\varphi(X_{n+1},X_n))$ for all test functions~$\varphi$, which is the case if and only if
  \begin{align}
    & \E\left(\varphi(X_n,\widetilde{X}_{n+1}) \left[1 \wedge \left( \frac{Q(\widetilde{Z}_{n+1}, Z_n)}{Q(Z_n,\widetilde{Z}_{n+1})}R(X_n,\widetilde X_{n+1},Y_{n+1})\right)\right]\right)\notag\\
    & \qquad = \E\left(\varphi(\widetilde{X}_{n+1},X_n) \left[1 \wedge \left( \frac{Q(\widetilde{Z}_{n+1}, Z_n)}{Q(Z_n,\widetilde{Z}_{n+1})}R(X_n,\widetilde X_{n+1},Y_{n+1})\right)\right]\right).
    \label{eq:symmetry}
  \end{align}
  Notice that (remember the notation $x=(z,x^\perp)$ and $\widetilde x=(\widetilde z,\widetilde x^\perp)$)
  \begin{align*}
    &\partition \, 
    \E\left(\varphi(X_n,\widetilde{X}_{n+1}) \left(1 \wedge \left( \frac{Q(\widetilde{Z}_{n+1}, Z_n)}{Q(Z_n,\widetilde{Z}_{n+1})}R(X_n,\widetilde X_{n+1},Y_{n+1})\right)\right)\right)\\
    &=\int_{x,\widetilde x,y} \varphi(x,\widetilde x)
    \left(1 \wedge \left( \frac{Q(\widetilde z, z)}{Q(z,\widetilde z)}R(x,\widetilde x,y)\right)\right) \exp(-\beta V(x)) Q(z,\widetilde z) P_{x \to \widetilde z}(\widetilde x^\perp,y) \, {\rm d}x \, {\rm d}\widetilde x \, {\rm d}y\\
    &=\int_{x,\widetilde x,y} \varphi(x,\widetilde x)
    \left[ \exp(-\beta V(x)) Q(z,\widetilde z) P_{x \to \widetilde z}(\widetilde x^\perp,y)\right] \wedge  \left[ \exp(-\beta V(x))  P_{x \to \widetilde z}(\widetilde x^\perp,y)Q(\widetilde z, z)R(x,\widetilde x,y)\right] {\rm d}x \, {\rm d}\widetilde x \, {\rm d}y\\
    &=\int_{x,\widetilde x,y} \varphi(x,\widetilde x)
    \left[ \exp(-\beta V(x)) Q(z,\widetilde z) P_{x \to \widetilde z}(\widetilde x^\perp,y)\right] \wedge  \left[\exp(-\beta V(\widetilde x))  P_{\widetilde x \to  z}(x^\perp,y)Q(\widetilde z, z)\right] {\rm d}x \, {\rm d}\widetilde x \, {\rm d}y,
  \end{align*}
  where we used~\eqref{eq:R} to obtain the last equality. It is easy to check that the function
  \[
  (x,\widetilde x) \mapsto \int_{y} \left[ \exp(-\beta V(x)) Q(z,\widetilde z) P_{x \to \widetilde z}(\widetilde x^\perp,y) \right] \wedge \left[ \exp(-\beta V(\widetilde x)) Q(\widetilde z,z ) P_{\widetilde x \to  z}(x^\perp,y)\right] {\rm d}y
  \]
  is invariant under the exchange of~$x=(z,x^\perp)$ and~$\widetilde x=(\widetilde{z},\widetilde{x}^\perp)$. This shows that~\eqref{eq:symmetry} holds, and thus concludes the proof.
\end{proof}

In some sense, the aim of this work is to propose and numerically test the efficiency of various algorithms which enters the framework of the generic algorithm described in this section, by building couples of functions~$(P_{x \to \widetilde z},R(x,\widetilde x,y))$ such that~\eqref{eq:R} holds.

\begin{rem} The algorithm studied here is related to the so-called ``Stateless auxiliary variable MCMC''.\footnote{We would like to thank Andi Wang (University of Warwick) for pointing this out to us.} It enters the framework of~\cite{andrieu2020general}; see in particular~\cite[Proposition~1]{andrieu2020general} with~$\phi(x,y,\widetilde x)=(\widetilde x,y,x)$. It is also related to the so-called ``super-detailed balance'' of~\cite{frenkelSpeedupMonteCarlo2004}, which also inspired the proof of reversibility in \cite{nilmeier2011nonequilibrium}. 
\end{rem}

\section{The continuous-in-time path algorithm}
\label{sec:contalgo}

We now present a way of constructing a proposal $\widetilde{X}_{n+1}$ with associated work $\mathcal{W}_{n+1}$ for Step~2 of the prototype algorithm from \cref{sec:algo}. Here, we build a path with continuous-in-time dynamics and prove that the resulting overall algorithm is reversible with respect to the target measure.

\subsection{Construction of the constrained path}\label{sec:cont_path}

Let us denote the initial and final points in CV space by $z=Z_n$ and $\widetilde z=\widetilde{Z}_{n+1}$. From these two points, we build a deterministic path $(z(t))_{0 \le t \le T}$ in the CV space which is fully determined by the end points $(z,\widetilde z)$ and which satisfies the two following properties: $t \mapsto z(t)$ is $\mathcal C^1$ and $(z(0),z(T))=(z,\widetilde z)$. One could for example consider a linear path along a straight line, $z(t)=(1-t/T)z+ (t/T) \widetilde z$, where the duration $T$ is a deterministic function of $(z,\widetilde z)$.

The full path $(\overline{X}_t)_{0 \le t \le T}$ with $\overline{X}_t=(\overline Z_t, \overline X^\perp_t)$ is then built, starting from $\overline X_0 = X_n$, following the constrained overdamped Langevin dynamics. In the `true' reaction coordinate case where the collective variable $\xi$ is not just a subset of all coordinates, special care has to be taken to ensure that $\overline{Z}_t=\xi(\overline{X}_t)$ at all times. In our simple case the dynamics simply read 
\begin{equation}\label{eq:forward_process}
\left\{
\begin{aligned}
   {\rm d}\overline Z_t&=z'(t) \,{\rm d}t,\\
   {\rm d} \overline X^\perp_t&=-\nabla^\perp V(\overline X_t) \, {\rm d}t + \sqrt{2 \beta^{-1}}\, {\rm d} W_t.
\end{aligned}
\right. 
\end{equation}
The gradient is defined as $\nabla^\perp V=(\partial_{x_{\ell+1}} V, \ldots, \partial_{x_d} V)^\top$ and $(W_t)$ is a $(d-\ell)$-dimensional Brownian motion. Since the initial condition is $\overline Z_0 = z(0) = z$, one has $\overline Z_t=z(t)$ for all $t \in [0,T]$, and in particular $\overline Z_T=z(T)=\widetilde z$. The associated work writes:
\begin{align}
\mathcal{W}=\int_0^T \left\langle \nabla_z V(\overline X_t), z'(t)\right\rangle \, {\rm d}t,
\label{eq:jarzwork}
\end{align}
with $\nabla_z V = (\partial_{x_1}V,\, \dots, \partial_{x_\ell}V)^T$.
The endpoint of the constructed path is then set as the proposed move in Step~2 of the algorithm $\widetilde{X}_{n+1}=\overline X_T$ and the work $\mathcal{W}$ is used as $\mathcal{W}_{n+1}$ in Step~3 of the algorithm presented in \cref{sec:algo}; see~\eqref{eq:AR}.

We prove in \cref{sec:reversibility_continuous} that the resulting overall sampling algorithm is reversible with respect to the target measure. The proof uses the so-called Jarzynski--Crooks equality~\cite{jarzynski-97,Crooks99}, presented in \cref{sec:JarzCrook}. 

\subsection{Jarzynski and Jarzynski--Crooks equalities}
\label{sec:JarzCrook}

We assume in this section that $\overline X^\perp_0 = X^\perp_n$ is drawn from the equilibrium conditional measure
\begin{equation}\label{eq:Jarz_IC_eq}
\overline X_0^\perp \sim \nu_{\perp}({\rm d}x^\perp | z),
\end{equation}
see \eqref{eq:condmeasure}. Let us now give two standard results concerning the non-equilibrium path $(\overline X_t)_{0 \le t \le T}$.
The first result relates the works to the free energy difference~\cite{jarzynski-97}. 

\begin{prop}[Jarzynski equality]\label{prop:Jarz}
Consider the constrained process $(\overline X_t)_{0 \le t \le T}$ with initial condition satisfying~\eqref{eq:Jarz_IC_eq}.
Then, for any bounded measurable function $\varphi: \R^d \to \R$,
\[
\E\left[\varphi(\overline X_T) \exp(-\beta \mathcal{W})\right] = \exp(-\beta (F(\widetilde z)-F(z))) \int_{\R^{d-\ell}} \varphi(\widetilde z, x^\perp) \nu_{\perp}({\rm d} x^\perp \,|\,\widetilde z).
\]
\end{prop}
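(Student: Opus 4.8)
The plan is to work directly with the law of the constrained path and use the Feynman--Kac structure hidden in the definition of the work. First I would fix the schedule $(z(t))_{0\le t\le T}$ and view the process $(\overline X^\perp_t)_{0\le t\le T}$ as a (time-inhomogeneous) diffusion on $\R^{d-\ell}$ with generator $\mathcal L_t f = -\nabla^\perp V(z(t),\cdot)\cdot\nabla^\perp f + \beta^{-1}\Delta^\perp f$, started from $\nu_\perp(\cdot\,|\,z)$. The key observation is that, along a trajectory, $\frac{\mathrm d}{\mathrm dt}V(\overline X_t) = \langle\nabla_z V(\overline X_t),z'(t)\rangle + \langle\nabla^\perp V(\overline X_t),\mathrm d\overline X^\perp_t\rangle/\mathrm dt$, so the work $\mathcal W$ in~\eqref{eq:jarzwork} captures exactly the ``explicit time-dependence'' part of the energy change. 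I would therefore introduce the function $u(t,x^\perp)$ defined as the conditional expectation
\[
u(t,x^\perp)=\E\!\left[\varphi(\overline X_T)\exp\!\left(-\beta\!\int_t^T\langle\nabla_z V(\overline X_s),z'(s)\rangle\,\mathrm ds\right)\,\middle|\,\overline X^\perp_t=x^\perp\right],
\]
and check by the Feynman--Kac formula that $u$ solves the backward Kolmogorov equation $\partial_t u + \mathcal L_t u = \beta\langle\nabla_z V(z(t),x^\perp),z'(t)\rangle u$, with terminal condition $u(T,\cdot)=\varphi(\widetilde z,\cdot)$.

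The heart of the argument is then the standard ansatz $u(t,x^\perp)=\mathrm e^{\beta F(z(t))}\,v(t,x^\perp)$ (equivalently, tracking the unnormalized conditional density). Plugging this in, one computes $\partial_t\big(\mathrm e^{\beta F(z(t))}\big)=\beta F'(z(t))\cdot z'(t)\,\mathrm e^{\beta F(z(t))}$, and using that $\nabla_z F(z)=\int\nabla_z V(z,x^\perp)\,\nu_\perp(\mathrm dx^\perp|z)$ one sees the source term is exactly absorbed provided $v$ is, at each time, proportional to the density of $\nu_\perp(\cdot\,|\,z(t))$ against which $\varphi$ is integrated — more precisely, I would show the measure with density $\mathrm e^{-\beta F(z(t))}v(t,\cdot)\,\nu_\perp(\mathrm dx^\perp|z(t))$ is stationary under the adjoint flow. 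The cleanest route is actually the dual one: let $\rho_t(x^\perp)$ be the density of $\overline X^\perp_t$; show that $\rho_0=$ density of $\nu_\perp(\cdot|z)$ implies, via the Fokker--Planck equation for~\eqref{eq:forward_process}, that the \emph{reweighted} density $\widehat\rho_t(x^\perp):=\mathrm e^{-\beta(F(z(t))-F(z))}\,(\text{path weight density})$ stays equal to the density of $\nu_\perp(\cdot\,|\,z(t))$ for all $t$ — this is the content of Jarzynski's identity and is the analytic fact one really has to verify.

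Concretely, I would define the signed measure $\mu_t(\mathrm dx^\perp)$ on $\R^{d-\ell}$ by $\mu_t(g)=\E[g(\overline X^\perp_t)\exp(-\beta\int_0^t\langle\nabla_z V(\overline X_s),z'(s)\rangle\mathrm ds)]$, compute $\frac{\mathrm d}{\mathrm dt}\mu_t(g)$ using Itô's formula on $g(\overline X^\perp_t)$ times the exponential weight, obtain $\frac{\mathrm d}{\mathrm dt}\mu_t(g)=\mu_t(\mathcal L_t g)-\beta\,\mu_t\big(\langle\nabla_z V(z(t),\cdot),z'(t)\rangle g\big)$, and then verify that $\nu_\perp(\cdot\,|\,z(t))$ scaled by $\mathrm e^{-\beta(F(z(t))-F(z))}$ satisfies the same ODE with the same initial data; uniqueness gives the claim, and evaluating at $t=T$ with the weight applied to $\varphi$ yields the stated formula. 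The main obstacle is purely technical: justifying the Itô/Feynman--Kac manipulations (integrability of $\exp(-\beta\mathcal W)$, regularity of $F$ and $V$, non-explosion of the diffusion) under hypotheses that are not spelled out in the excerpt; I would either invoke standard growth/confinement assumptions on $V$ or simply state that all formal computations can be justified under suitable regularity, since the algebraic cancellation involving $\nabla_z F = \E_{\nu_\perp}[\nabla_z V]$ is the real point and is exact.
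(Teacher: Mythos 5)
Your argument is correct and would yield a complete proof (modulo the regularity caveats you already flag), but it follows a genuinely different route from the paper. The paper does not prove Proposition~\ref{prop:Jarz} from scratch: it cites \cite[Theorems~4.10 and~4.19]{lelievre-rousset-stoltz-book-10} and, within the text, obtains the Jarzynski equality as a corollary of the stronger path-space Jarzynski--Crooks identity (Proposition~\ref{prop:Jarz-Crooks}) by taking the path functional $\varphi_{[0,T]}((x_t)_{0\le t\le T})=\varphi(x_T)$; that identity compares the laws of the forward process~\eqref{eq:forward_process} and the backward process~\eqref{eq:backward_process} via a Radon--Nikodym/Girsanov computation. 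You instead prove the endpoint statement directly by a forward-in-time Feynman--Kac argument: the evolution equation $\frac{\mathrm d}{\mathrm dt}\mu_t(g)=\mu_t(\mathcal L_t g)-\beta\,\mu_t(\langle\nabla_z V(z(t),\cdot),z'(t)\rangle g)$ for the weighted law, combined with the facts that $\mathrm e^{-\beta V(z(t),\cdot)}\,\mathrm dx^\perp$ is invariant for $\mathcal L_t$ and that $\nabla_z F(z)=\int\nabla_z V(z,x^\perp)\,\nu_\perp(\mathrm dx^\perp|z)$, shows $\mathrm e^{-\beta(F(z(t))-F(z))}\nu_\perp(\cdot|z(t))$ solves the same equation with the same initial data, and uniqueness concludes. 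Your route has the advantage of never introducing the backward process, and it makes the algebraic cancellation behind Jarzynski's identity transparent; its limitation --- which the paper explicitly emphasizes --- is that this endpoint identity is \emph{not} enough for the reversibility proof of Theorem~\ref{th:reversibility}, which genuinely requires the full path-space Crooks statement. One cosmetic slip: your motivational line for $\frac{\mathrm d}{\mathrm dt}V(\overline X_t)$ omits the It\^o correction $\beta^{-1}\Delta^\perp V$, and $\mu_t$ is a positive (unnormalized) rather than signed measure; neither affects the argument, since the actual computation you rely on is the correct evolution equation for $\mu_t$.
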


To prove the consistency of the algorithm, we need a stronger result than Proposition~\ref{prop:Jarz}: the Jarzynski--Crooks generalized work fluctuation identity. This requires introducing the so-called backward process associated with the forward process~\eqref{eq:forward_process}. Let $(\overline X^{\rm b}_t)_{0 \le t \le T}$ satisfy (compare with~\eqref{eq:forward_process})
\begin{equation}\label{eq:backward_process}
\left\{
\begin{aligned}
    {\rm d}\overline Z^{\rm b}_t &=-z'(T-t) \,{\rm d}t\\
    {\rm d}\overline X^{\rm b, \perp}_t&=-\nabla^\perp V(\overline Z^{\rm b}_t,\overline X^{\rm b, \perp}_t) \, {\rm d}t + \sqrt{2 \beta^{-1}} \, {\rm d}W_t,
\end{aligned}
\right.
\end{equation}
with initial condition $\overline X^{\rm b}_0=(\widetilde z,\overline X^{\rm b, \perp}_0)$ where
\begin{equation}\label{eq:eq:Jarz_ICb_eq}
  \overline X^{\rm b, \perp}_0 \sim \nu_{\perp}(d \overline x^\perp|\widetilde{z}). 
\end{equation}
Notice that since $\overline Z^{\rm b}_0=z(T)=\widetilde z$, the path in CV space will just be the time-reversed version of the forward path, namely~$\overline{Z}_t^{\rm b} = z(T-t)$ (in particular, $\overline Z^{\rm b}_T=z(0)$).
The full path is denoted by $\overline X^b=(\overline X^{\rm b}_t)_{0 \le t \le T}$, where for all $t \in [0,T]$, $\overline X^{\rm b}_t=(\overline Z^{\rm b}_t,\overline X^{\rm b, \perp}_t) \in \R^\ell \times \R^{d-\ell}$. Then one has the following result~\cite{Crooks99}.

\begin{prop}[Jarzynski--Crooks equality]
  \label{prop:Jarz-Crooks}
Consider the forward process $\overline X = (\overline X_t)_{0 \le t \le T}$ satisfying~\eqref{eq:forward_process}--\eqref{eq:Jarz_IC_eq} and the associated backward process $\overline X^{\rm b} = (\overline X^{\rm b}_t)_{0 \le t \le T}$ satisfying~\eqref{eq:backward_process}--\eqref{eq:eq:Jarz_ICb_eq}.
Then, for any (bounded measurable) path functional $\varphi_{[0,T]}: {\mathcal C}([0,T],\R^d) \to \R$,
$$\exp(-\beta (F(\widetilde z)-F(z))) \E\left(\varphi^{\rm r}_{[0,T]}(\overline X^{\rm b})\right) = \E\left(\varphi_{[0,T]}(\overline X) \exp(-\beta \mathcal{W})\right),$$
where $\varphi^{\rm r}_{[0,T]}$ denotes the map $\varphi$ composed with the time reversal operator: for any path $(x_t)_{0 \le t \le T}$,
$$\varphi^{\rm r}_{[0,T]}( (x_t)_{0 \le t \le T})=\varphi_{[0,T]}((x_{T-t})_{0 \le t \le T}).$$
\end{prop}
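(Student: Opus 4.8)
The plan is to establish the Jarzynski--Crooks identity via the standard Girsanov-type argument comparing the path measure of the forward process with the time-reversed image of the path measure of the backward process. First I would fix the deterministic schedule $z(\cdot)$ and work conditionally on it, so that the only randomness is in the $x^\perp$-components, which evolve by an overdamped Langevin SDE with a time-dependent drift $-\nabla^\perp V(z(t),\cdot)$. The key point is that, once we condition on the endpoints $(z,\widetilde z)$, the forward process $\overline X$ and the backward process $\overline X^{\rm b}$ have $x^\perp$-dynamics that are exact time reversals of one another at the level of the drift vector field: the backward dynamics run the same potential $V(z(s),\cdot)$ but with $s$ replaced by $T-t$. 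So I would introduce the reference (Brownian) path measure and write both $\mathrm{Law}(\overline X^\perp)$ and $\mathrm{Law}(\overline X^{\rm b,\perp})$ as Girsanov densities with respect to it.

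The central computation is then the Radon--Nikodym derivative between the forward path law (with initial distribution $\nu_\perp(\cdot\,|\,z)$) and the time-reversal of the backward path law (with initial distribution $\nu_\perp(\cdot\,|\,\widetilde z)$). Carrying out the Girsanov/integration-by-parts manipulation, the stochastic integrals cancel and one is left with: a ratio of the initial densities, which contributes $\exp(-\beta V(z,\overline X^\perp_0))/\exp(-\beta V(\widetilde z,\overline X^{\rm b,\perp}_0))$ together with the normalizations $\exp(\beta F(z))$ and $\exp(\beta F(\widetilde z))$; and a boundary term $\exp(-\beta(V(\widetilde z,\overline X^\perp_T)-V(z,\overline X^\perp_0)))$ from the time-dependent potential. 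Combining these and using that $z'(t)$ drives $\overline Z_t$ deterministically, the net exponent reorganizes into precisely $-\beta(F(\widetilde z)-F(z))$ plus $\beta\mathcal W$, where $\mathcal W=\int_0^T\langle\nabla_z V(\overline X_t),z'(t)\rangle\,{\rm d}t$ appears exactly because $\frac{{\rm d}}{{\rm d}t}V(z(t),\overline X^\perp_t)=\langle\nabla_z V,z'(t)\rangle+\langle\nabla^\perp V,\dot{\overline X}^\perp_t\rangle$, and it is the first term that survives after the $\nabla^\perp V$ contributions are absorbed into the drift comparison. Testing against an arbitrary path functional $\varphi_{[0,T]}$ and changing variables by time reversal then yields the claimed identity; taking $\varphi_{[0,T]}$ depending only on the endpoint recovers \cref{prop:Jarz} as a corollary.

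The main obstacle is making the Girsanov change of measure rigorous and bookkeeping the time-reversal carefully: one must verify the relevant Novikov-type integrability condition (or argue by localization) so that the exponential martingale is a genuine density, and one must be scrupulous about which process the drift is evaluated along after time reversal — the subtlety being that $\overline X^{\rm b,\perp}$ under time reversal solves an SDE with an extra divergence term $\beta^{-1}\Delta^\perp$-type correction unless the comparison is set up symmetrically, which here it is because both processes use the gradient drift of the same family of potentials. A clean way to sidestep the divergence term entirely is to discretize time, prove the identity for the Euler--Maruyama scheme (where everything is an explicit Gaussian computation and the telescoping of $V(z(t_k),\cdot)$ is transparent), and pass to the limit; I would at least remark on this as the route that most directly connects to the discrete-time algorithms of \cref{sec:discretetime}. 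Since these are standard results (\cite{jarzynski-97,Crooks99}), I expect the paper to either cite them or present the formal computation, and I would follow suit, emphasizing the endpoint-conditioning structure that makes the identity plug directly into the reversibility proof via the choice $P_{x\to\widetilde z}$ being the conditional path law and $Y_{n+1}$ encoding the Brownian increments.
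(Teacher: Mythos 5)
The paper does not actually prove \cref{prop:Jarz-Crooks}: immediately after the statement it defers to~\cite[Theorems~4.10 and~4.19]{lelievre-rousset-stoltz-book-10}, so there is no in-paper argument to compare yours against line by line. Your sketch is nevertheless a correct outline of the standard path-measure proof, and you correctly anticipate both that the result would likely be cited and where the real technical work lies. The two points you flag are indeed the ones that need care: (i) the time reversal of an It\^o diffusion carries an extra drift correction, which only cancels here because the forward and backward dynamics use the gradient drift of the \emph{same} family of potentials $V(z(\cdot),\cdot)$, so the $\beta^{-1}\Delta^\perp V$ terms arising from the It\^o--Stratonovich conversion drop out of the Radon--Nikodym ratio; and (ii) the exponent reorganizes into $-\beta(F(\widetilde z)-F(z))+\beta\mathcal W$ precisely via the decomposition $\frac{\rm d}{{\rm d}t}V(z(t),\overline X^\perp_t)=\langle\nabla_z V,z'(t)\rangle+\langle\nabla^\perp V,\dot{\overline X}^\perp_t\rangle$, the boundary terms $V(z,\overline X^\perp_0)$ and $V(\widetilde z,\overline X^\perp_T)$ being absorbed by the initial densities $\nu_\perp(\cdot|z)$ and $\nu_\perp(\cdot|\widetilde z)$. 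Your proposed fallback --- prove the identity for the time-discretized scheme, where each factor reduces to the instantaneous detailed balance~\eqref{eq:rev_second} and the work appears by telescoping $V(\overline z_{k+1},\cdot)-V(\overline z_k,\cdot)$, then pass to the limit --- is exactly the mechanism the paper deploys for the discrete analogues in the proofs of \cref{eq:propchenrouxwork,eq:propStrangWorks}, so it connects naturally to \cref{sec:discretetime}. To turn the sketch into a complete proof one would still have to carry out the Girsanov bookkeeping (or the discrete limit) explicitly and verify the Novikov/localization step, but as a plan it is sound and consistent with the cited reference.
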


We refer to~\cite[Theorems~4.10 and~4.19]{lelievre-rousset-stoltz-book-10} for a proof of this result.
Let us emphasize that Proposition~\ref{prop:Jarz} is a corollary of Proposition~\ref{prop:Jarz-Crooks}, by choosing $\varphi_{[0,T]}((x_t)_{0 \le t \le T})=\varphi(x_T)$.

\subsection{The reversibility result}
\label{sec:reversibility_continuous}

Using \cref{prop:Jarz-Crooks}, we prove the following result (see \cref{proof:th:reversibility} for the proof). 
\begin{theo}\label{th:reversibility}
  Consider the algorithm from Section~\ref{sec:algo} with the continuous-in-time Jarzynski--Crooks move made precise in Section~\ref{sec:cont_path}. Assume that the schedule associated with the move from~$\widetilde z$ to~$z$ is the time reversed schedule associated with the move from~$z$ to~$\widetilde z$, \emph{i.e.}, for all $(z,\widetilde z) \in \R^2$,
  \begin{equation}\label{eq:hyp_rev_z}
   z^{(\widetilde z, z)}(t)= z^{(z, \widetilde z)}(T-t),
 \end{equation}
where we explicitly indicate the dependence of the schedule $(z(t))_{0 \le t \le T}$ on the endpoints in superscript. Then, the Markov chain $X_n = (Z_n, X^\perp_n)_{n \ge 0}$ is reversible with respect to the probability measure~$\nu$. In particular, it admits~$\nu$ as an invariant probability measure.
\end{theo}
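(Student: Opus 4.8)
The plan is to verify that the continuous-in-time Jarzynski--Crooks move fits the abstract framework of \cref{sec:prototype_proof} and then invoke the reversibility proposition proved there. Concretely, I must identify the auxiliary randomness $Y_{n+1}$, the transition density $P_{x \to \widetilde z}(\widetilde x^\perp, y)$, and the ratio $R(x,\widetilde x,y)$ of that abstract algorithm, and check that relation~\eqref{eq:R} holds with $R(x,\widetilde x,y) = \exp(-\beta \mathcal{W})$, since the acceptance rule~\eqref{eq:AR} must coincide with~\eqref{eq:AR2}. The natural candidate for $Y_{n+1}$ is the full Brownian path $(W_t)_{0 \le t \le T}$ driving~\eqref{eq:forward_process} (or an equivalent encoding of it), which together with the deterministic schedule $z^{(z,\widetilde z)}(t)$ and the initial point $x$ determines both $\widetilde X^\perp_{n+1} = \overline X^\perp_T$ and the work $\mathcal{W}$; the density $P_{x \to \widetilde z}$ is then the joint law of $(\overline X^\perp_T, W)$ under this dynamics started at $x$.

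First I would set up notation carefully, in particular choosing an appropriate representation of the path-space measure so that ``densities'' $P_{x\to\widetilde z}$ make sense (e.g. expressing everything relative to the Wiener measure reference, or directly working with path functionals and the Girsanov/time-reversal machinery encapsulated in \cref{prop:Jarz-Crooks}). Rather than literally writing a density on an infinite-dimensional space, the cleaner route is to recast the reversibility computation at the level of path functionals: show directly, for any bounded test function $\varphi:\R^d\times\R^d\to\R$, that
\[
\int \varphi(x,\widetilde x)\,\Big[ e^{-\beta V(x)} Q(z,\widetilde z)\, \mathrm{d}\mathbb{P}^{\mathrm f}_{x,\widetilde z} \Big] \wedge \Big[ e^{-\beta V(\widetilde x)} Q(\widetilde z, z)\, e^{-\beta\mathcal W}\, \mathrm{d}\mathbb{P}^{\mathrm f}_{x,\widetilde z} \Big]
\]
is symmetric under $(x,\widetilde x)\leftrightarrow(\widetilde x, x)$, where $\mathbb{P}^{\mathrm f}_{x,\widetilde z}$ is the law of the forward path started at $x$ with schedule from $z$ to $\widetilde z$. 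The key algebraic identity to establish is that the second integrand equals $e^{-\beta V(\widetilde x)} Q(\widetilde z,z)$ times the law of the \emph{backward} path of \eqref{eq:backward_process} started at $\widetilde x$; this is exactly the content of the Jarzynski--Crooks identity of \cref{prop:Jarz-Crooks} applied with a path functional that reads off the endpoint, combined with the detailed-balance (reversibility) structure of the constrained overdamped Langevin dynamics at fixed schedule. Hypothesis~\eqref{eq:hyp_rev_z} is what guarantees that ``the backward path from $\widetilde x$ with the $\widetilde z \to z$ schedule'' coincides with ``the time-reversal of the forward path from $z$ to $\widetilde z$'', so that the two integrands in the minimum are genuinely the mirror images of each other.

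I expect the main obstacle to be the rigorous bookkeeping of measures on path space: the abstract proposition is stated in terms of honest densities $P_{x\to\widetilde z}(\widetilde x^\perp, y)$, but the Brownian-driven path lives in infinite dimensions, so one must either (i) discretize, pass to the abstract result, and take a limit --- which the authors explicitly flag as delicate and somewhat unsatisfactory --- or (ii) redo the symmetry computation of the prototype proof directly at the path-functional level, carefully tracking the conditional measure $\nu_\perp(\cdot\mid z)$ versus the Lebesgue-type factor $e^{-\beta V(x)}\,\mathrm{d}x^\perp$ and ensuring the time-reversal of the diffusion part produces exactly the Boltzmann reweighting $e^{-\beta(V(\widetilde x)-V(x))}$ together with the work factor $e^{-\beta\mathcal W}$. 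The secondary subtlety is making sure the $\mathcal{C}^1$ schedule assumption and the simple-CV geometry ($\xi$ = first $\ell$ coordinates) are used where needed so that $\overline Z_t = z(t)$ holds pathwise and the work~\eqref{eq:jarzwork} is the correct one; the general nonlinear-CV case, deferred to \cref{rem:general_CV}, is where these manipulations become genuinely heavier. Once the symmetry of the displayed minimum is in hand, reversibility follows verbatim as in the proof of the prototype proposition, and invariance of $\nu$ is immediate.
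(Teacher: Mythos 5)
Your plan is essentially the paper's own proof: condition on the endpoints $(Z_n,\widetilde Z_{n+1})=(z,\widetilde z)$ so that $X_n^\perp\sim\nu_\perp(\cdot\,|\,z)$, factor the work out of the minimum, apply the Jarzynski--Crooks identity of \cref{prop:Jarz-Crooks} to a path functional recording the endpoints and the acceptance minimum, use~\eqref{eq:hyp_rev_z} to identify the backward process for the $(z,\widetilde z)$ schedule with the forward process for $(\widetilde z,z)$, and then swap the integration variables $z\leftrightarrow\widetilde z$ — which is exactly the route taken in \cref{proof:th:reversibility}, working with path functionals rather than literal path-space densities. One slip: in your displayed minimum the second bracket should carry $\mathrm{e}^{-\beta V(x)}$ (proposal density times the acceptance ratio $\mathrm{e}^{-\beta\mathcal W}Q(\widetilde z,z)/Q(z,\widetilde z)$), not $\mathrm{e}^{-\beta V(\widetilde x)}$; the Crooks identity is precisely what converts the former into $\mathrm{e}^{-\beta V(\widetilde x)}$ times the backward law, as your following sentence correctly states.
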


\begin{rem}\label{rem:general_CV}
    As mentioned above, in all this work, we focus on the case where the collective variable is simply the projection onto the first $\ell$ coordinates. In practice, nonlinear collective variables often come into play. All the algorithms we present can be generalized to the setting of a general collective variable, the only difficulty being to replace the path generating procedure~\eqref{eq:forward_process} by dynamics in the full space which are constrained to follow the prescribed schedule on the collective variable. We refer for example to~\cite{lelievre-rousset-stoltz-07-a} and~\cite[Section~3.2]{lelievre-rousset-stoltz-book-10} for details on how such algorithms can be built, for both continuous-time paths and time-discretized paths.
\end{rem}

\section{Discrete-in-time path algorithms}
\label{sec:discretetime}
In practice, we need to discretize the continuous dynamics from Section~\ref{sec:contalgo} to construct a path. The general idea is to follow an imposed schedule in CV space from $Z_n$ to $\widetilde{Z}_{n+1}$ which is intertwined with equilibration steps of the remaining coordinates $x^\perp$. As was noted by \cite{nilmeier2011nonequilibrium}, one can define unbiased sampling schemes at the discrete level by fulfilling a detailed balance condition of type \eqref{eq:R} involving the probability to propose $(\widetilde{Z}_{n+1}, \widetilde X_{n+1}^\perp)$ starting from $(Z_n, X^\perp_n)$ and the probability to propose the reverse move, following related forward and backward paths. 
We stress that even though one might say that this concerns the probability of proposing a certain path, this algorithm has nothing to do with sampling in path space (path sampling) since we are not dealing with path measures and are just ultimately interested in the distributions of the endpoints of the constructed paths. The general approach is somewhat similar to Hamiltonian Monte Carlo (HMC)~\cite{DUANE1987216} in that it follows specific dynamics to generate a proposal update (with the particularity for HMC that, due to the exact reversibility of the dynamics, the intermediate trajectory values do not enter into the acceptance criterion).  

A variety of algorithms can be derived as discrete-in-time analogs to the algorithm discussed in \cref{sec:contalgo}. Here, we present three different algorithms schematically represented in \cref{fig:SketchAlgos}, respectively in Sections~\ref{sec:Nilmeier},~\ref{sec:ChenRoux} and~\ref{sec:symm}. They differ in the exact succession of movements in collective variable space and equilibration steps and, in this way, which forward and backward paths are considered for the detailed balance condition. For each algorithm, we show reversibility of the corresponding Markov chain with respect to the target Boltzmann measure.

The first algorithm, inspired by \cite{nilmeier2011nonequilibrium}, uses a protocol where the backward path is different from the forward path, which is why we call it the `asymmetric algorithm'. The second is a generalization of \cite{chen2015generalized}. It uses a protocol with a random coin flip to decide between two paths, which makes its backward and forward paths symmetric, and hence we term it `stochastically symmetric'. We also introduce a third algorithm with an additional equilibration step, where the backward and forward paths are the same, which motivates the name `symmetric algorithm'. Such a symmetric strategy was already considered in \cite{athenesComputationChemicalPotential2002}. As it will turn out, given that the kernels in the equilibration step are appropriately chosen, the symmetrization of the forward and backward paths allows to exactly re-frame the last two algorithms in the context of the prototypical algorithm of \cref{sec:algo}, including a well-defined notion of a `work' for Step~3. This can be understood as a natural way to avoid that the acceptance degrades with the number of steps of the path.

All algorithms share a very similar structure. To avoid any ambiguity, we present them and prove their reversibility separately, even though the proofs follow a very similar pattern, as already highlighted in Section~\ref{sec:prototype_proof}. For the stochastically symmetric and the symmetric algorithms, we show how they constitute instances of our prototype algorithm of \cref{sec:algo}.

\begin{figure}
  \centering
  \includegraphics{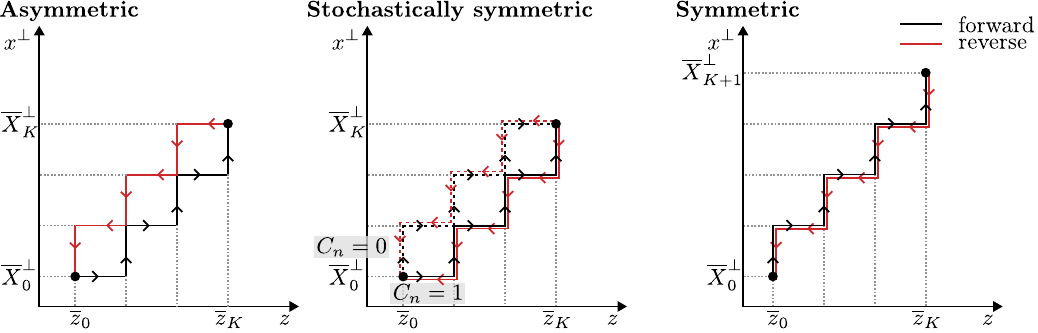}
  \caption{Schematic representations of how paths are built in the presented discrete algorithms between CV coordinate values $Z_n=\overline z_0$ and $\widetilde Z_{n+1}=\overline z_K$. Red lines designate the reverse path being considered in the acceptance ratio. This figure assumes that the schedule satisfies the symmetry property~\eqref{eq:rev_chemin_prime}.}
    \label{fig:SketchAlgos}
\end{figure}

\subsection{The asymmetric path algorithm}\label{sec:Nilmeier}
This first algorithm is inspired by what is proposed in~\cite{nilmeier2011nonequilibrium}. At iteration $n$, let us consider the current state $X_n=(Z_n,X_n^\perp)$. Given a Markov kernel $Q(z,\widetilde{z}) \in \R^\ell \times \R^\ell \to \R_+$ on CV space, and a CV-dependent Markov kernel\footnote{Again, we assume here that the kernels have  densities for simplicity but extensions to more general kernels are straightforward.} on the remaining degrees of freedom $P_{z}(x^\perp,{\widetilde{x}}^\perp)$, the Markov kernel generating the next state $X_{n+1}=(Z_{n+1},X_{n+1}^\perp)$ is determined in three steps:
\begin{itemize}
    \item (Step 1) Propose a move in the collective variable space: $\widetilde{Z}_{n+1} \sim Q(Z_n,\widetilde{z}) \, {\rm d}\widetilde{z}$.
    \item (Step 2) \begin{itemize}
\item Introduce a time-discrete schedule\footnote{For simplicity, $K$ is here and thereafter assumed to be fixed, but a generalization to a number of steps which depends on $(Z_n,\widetilde Z_{n+1})$ is straightforward.} $(\overline z_0=Z_n, \overline z_1, \ldots, \overline z_{K-1}, \overline z_K=\widetilde{Z}_{n+1})$ which is deterministically defined as a function of $(Z_n,\widetilde{Z}_{n+1})$. Here and in the sequel, we use the notation~$\overline z_k(Z_n,\widetilde Z_{n+1})$ for $k \in \{0, \ldots, K\}$ to make explicit the dependence on~$(Z_n,\widetilde Z_{n+1})$.
\item Generate a sequence of states $(\overline X^\perp_{k})_{0 \le k \le K}$ as~$\overline X^\perp_0=X_n^\perp$ and~$\overline X^\perp_{k+1}$ sampled with respect to $P_{\overline z_{k+1}}(\overline X_k^\perp,x^\perp)\mathrm{d}x^\perp$ for $k=0,\ldots,K-1$.
\item Set $\widetilde X^\perp_{n+1}=\overline X^\perp_{K}$. \end{itemize}
\item (Step 3) Draw a random variable $U_{n+1}$ with uniform law over $[0,1]$ and
\begin{itemize}
    \item If $\displaystyle U_{n+1} \le  \frac{\dps \exp(-\beta V(\widetilde Z_{n+1},\widetilde X_{n+1}^\perp)) Q(\widetilde{Z}_{n+1}, Z_n) \prod_{k=1}^K P_{\overline z_k(\widetilde Z_{n+1}, Z_n )}(\overline X^\perp_{K-k+1},\overline X^\perp_{K-k})}{\dps \exp(-\beta V(Z_n,X_n^\perp)) Q(Z_n,\widetilde{Z}_{n+1}) \prod_{k=1}^K P_{\overline z_k(Z_n,\widetilde Z_{n+1})}(\overline X^\perp_{k-1},\overline X^\perp_{k})}$,\\ accept the proposal: $(Z_{n+1},X^\perp_{n+1})=(\widetilde Z_{n+1},\widetilde X^\perp_{n+1})$.
    \item Otherwise, reject the proposal: $(Z_{n+1},X^\perp_{n+1})=(Z_{n}, X^\perp_{n})$.
\end{itemize}
\end{itemize}
The acceptance ratio in Step~3 involves in its denominator the probability of a backward path that follows the same order of successive blocks of one CV update followed by one equilibration update, resulting in a mirroring path (see the left panel in \cref{fig:SketchAlgos}), hence the denomination `asymmetric algorithm'. 

The reversibility of the `asymmetric algorithm' is ensured by the following result, proved in~ \cref{proof:eq:PropNilmeier}. 
\begin{prop}
\label{eq:PropNilmeier}
    The Markov chain $(X_n)_{n \ge 0}$ generated by the above algorithm is reversible with respect to $\nu({\rm d}x) = \partition^{-1} \exp(-\beta V(z,x^\perp)) \, {\rm d}z \, {\rm d}x^\perp$.   
\end{prop}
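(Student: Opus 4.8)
The plan is to verify that the asymmetric algorithm is a special case of the prototype algorithm of \cref{sec:prototype_proof}, so that \cref{eq:PropNilmeier} follows directly from the reversibility proposition proved there. To do this, I must exhibit the auxiliary randomness $Y_{n+1}$ and the proposal density $P_{x \to \widetilde z}$ of Step~2 of the prototype algorithm, and check that the acceptance ratio used in Step~3 of the asymmetric algorithm coincides with $R(x,\widetilde x,y)\,\frac{Q(\widetilde z,z)}{Q(z,\widetilde z)}$ for the function $R$ defined by \cref{eq:R}.

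The natural choice is to let $Y_{n+1} = (\overline X^\perp_1, \ldots, \overline X^\perp_{K-1})$ be the intermediate states of the path (the endpoint $\overline X^\perp_K = \widetilde X^\perp_{n+1}$ being the actual proposal, and $\overline X^\perp_0 = X^\perp_n$ being already determined by $X_n$). With $x = (z,x^\perp)$ and $\widetilde x = (\widetilde z, \widetilde x^\perp)$, I would set
\[
P_{x \to \widetilde z}(\widetilde x^\perp, y) = P_{x \to \widetilde z}\bigl(\overline x^\perp_K, (\overline x^\perp_1,\ldots,\overline x^\perp_{K-1})\bigr) = \prod_{k=1}^K P_{\overline z_k(z,\widetilde z)}(\overline x^\perp_{k-1}, \overline x^\perp_k),
\]
with the convention $\overline x^\perp_0 = x^\perp$ and $\overline x^\perp_K = \widetilde x^\perp$. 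This is indeed a probability density in the variables $(\overline x^\perp_1, \ldots, \overline x^\perp_K)$, as required. Then $P_{\widetilde x \to z}(x^\perp, y)$ is the density of the backward construction: it must be the density of generating the reversed sequence of intermediate points, namely $\prod_{k=1}^K P_{\overline z_k(\widetilde z, z)}(\overline x^\perp_{K-k+1}, \overline x^\perp_{K-k})$ — and here the symmetry hypothesis on the schedule, $\overline z_k(\widetilde z,z) = \overline z_{K-k}(z,\widetilde z)$ (the discrete analogue of \cref{eq:hyp_rev_z}, i.e. \cref{eq:rev_chemin_prime}), is what makes the backward path visit the same sequence of CV values in reverse, so that the $Y_{n+1}$ produced along the forward move is exactly the auxiliary variable entering the backward transition density. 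Plugging these two densities into \cref{eq:R} and comparing with the explicit ratio in Step~3 of the asymmetric algorithm, the products of $P_{\overline z_k}$ kernels in numerator and denominator match term by term, and the Boltzmann factors and $Q$-ratios match as well; hence the acceptance probability is exactly $1 \wedge \bigl(R(X_n,\widetilde X_{n+1},Y_{n+1})\,\frac{Q(\widetilde Z_{n+1},Z_n)}{Q(Z_n,\widetilde Z_{n+1})}\bigr)$.

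Once this identification is made, \cref{eq:PropNilmeier} is an immediate consequence of the reversibility proposition of \cref{sec:prototype_proof}. The main point requiring care — and the step I would expect to be the real obstacle — is bookkeeping the index reversal: one must check that, when the prototype's reversibility proof swaps the roles of $x$ and $\widetilde x$, the same realization $y = (\overline x^\perp_1, \ldots, \overline x^\perp_{K-1})$ has the correct density under $P_{\widetilde x \to z}$, which hinges precisely on the schedule symmetry $\overline z_k(\widetilde z, z) = \overline z_{K-k}(z,\widetilde z)$. If the schedule did not satisfy this symmetry, the forward-generated intermediate points would not be the ones the backward transition density assigns probability to, and the cancellation in \cref{eq:R} would fail; this is the discrete counterpart of assumption \cref{eq:hyp_rev_z} in \cref{th:reversibility}. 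Alternatively, if one prefers a self-contained argument, one can repeat verbatim the computation in the proof of the prototype proposition — introducing a test function $\varphi$, writing $\E(\varphi(X_n,X_{n+1}))$ as accept plus reject contributions, reducing to the symmetry identity \cref{eq:symmetry}, and using the explicit product form above together with the change of variables $(\overline x^\perp_0,\ldots,\overline x^\perp_K) \mapsto (\overline x^\perp_K,\ldots,\overline x^\perp_0)$ to see that the relevant integrand is invariant under exchanging $x$ and $\widetilde x$ — but invoking the already-proved prototype result is cleaner.
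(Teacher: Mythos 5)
Your main route --- exhibiting the asymmetric algorithm as an instance of the prototype of \cref{sec:prototype_proof} --- does not go through, and this is precisely why the paper does not take it. In the prototype, $P_{a\to b}$ is a \emph{single} family of densities indexed by a full state $a$ and a target CV value $b$: the sampling step uses $P_{x\to\widetilde z}$, and the ratio \cref{eq:R} uses the \emph{same} family $P_{\widetilde x\to z}$ evaluated at the \emph{same} realization $y$. Take $y=(\overline x^\perp_1,\dots,\overline x^\perp_{K-1})$ in generation order, so that the sampling constraint forces $P_{x\to\widetilde z}(\widetilde x^\perp,y)=\prod_{k=1}^K P_{\overline z_k(z,\widetilde z)}(\overline x^\perp_{k-1},\overline x^\perp_k)$. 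Then the same formula with the endpoints swapped gives $P_{\widetilde x\to z}(x^\perp,y)=P_{\overline z_1(\widetilde z,z)}(\overline x^\perp_K,\overline x^\perp_1)\,P_{\overline z_2(\widetilde z,z)}(\overline x^\perp_1,\overline x^\perp_2)\cdots P_{\overline z_K(\widetilde z,z)}(\overline x^\perp_{K-1},\overline x^\perp_0)$: a path whose first transition is the long jump $\overline x^\perp_K\to\overline x^\perp_1$. This is \emph{not} the mirrored product $\prod_{k=1}^K P_{\overline z_k(\widetilde z,z)}(\overline x^\perp_{K-k+1},\overline x^\perp_{K-k})$ appearing in Step~3 of \cref{sec:Nilmeier}, so the claimed term-by-term match fails for $K\ge 2$. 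If instead you store $y$ in reversed order, the numerator matches but the denominator no longer equals the actual sampling density; no single ordering convention fixes both. The schedule symmetry \cref{eq:rev_chemin_prime} does not repair this (the obstruction is in the ordering of the $x^\perp$ transitions, not in the $\overline z$ indices), and in any case the proposition assumes no such symmetry --- the paper explicitly remarks that the proof works for an arbitrary deterministic schedule, so a proof conditional on \cref{eq:rev_chemin_prime} would establish a weaker statement. This structural mismatch between forward and backward constructions is exactly what the name ``asymmetric'' refers to, and it is why the paper reserves the prototype identification for the coin and symmetric variants only.

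Your fallback is the correct approach and is in fact the paper's proof (\cref{proof:eq:PropNilmeier}), but you defer the only step that matters. The argument is: write $\E(\varphi(X_n,X_{n+1}))$ as accept plus reject contributions, and reduce reversibility to the symmetry of the accepted-transition density
\[
\mathcal Q\bigl((z_0,\overline x^\perp_0),(z_K,\overline x^\perp_K)\bigr)=\partition^{-1}\!\int \Bigl[\mathrm e^{-\beta V(z_0,\overline x_0^\perp)}Q(z_0,z_K)\prod_{k=1}^{K}P_{\overline z_k(z_0,z_K)}(\overline x^\perp_{k-1},\overline x^\perp_k)\Bigr]\wedge\Bigl[\mathrm e^{-\beta V(z_K,\overline x_K^\perp)}Q(z_K,z_0)\prod_{k=1}^{K}P_{\overline z_k(z_K,z_0)}(\overline x^\perp_{K-k+1},\overline x^\perp_{K-k})\Bigr]\prod_{k=1}^{K-1}{\rm d}\overline x^\perp_k
\]
under the exchange of $(z_0,\overline x^\perp_0)$ and $(z_K,\overline x^\perp_K)$, which follows from the change of variables $\overline x^\perp_k\mapsto\overline x^\perp_{K-k}$ on the intermediate points together with swapping the two terms of the minimum. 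Note that here the mirrored backward product is taken as given by the algorithm's acceptance ratio --- it is never required to coincide with a swapped-endpoint evaluation of the forward family --- and no hypothesis on the schedule is used. You should carry out this computation rather than rest the proof on the prototype identification.
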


Let us now make three remarks which will actually also be relevant for the next two algorithms presented in the following two sections. First, the proof does not require that the kernel $P_{z}(x^\perp, \widetilde x^\perp) \, {\rm d} \widetilde x^\perp$ leaves $\nu_\perp({\rm d}x^\perp|z)$  invariant. Kernels with different invariant measures may therefore be used, which constitutes one opportunity of optimization. This is however not the focus of the present work and is left for future study.

Second, the proof does not require either that the proposed schedule is reversed in time upon inverting the end points $Z_n$ and $\widetilde Z_{n+1}$. It could be any deterministic schedule. However, in order for the acceptance probability not to be too small, we would like $\prod_{k=1}^K P_{\overline z_k(\widetilde Z_{n+1}, Z_n )}(\overline X^\perp_{K-k+1},\overline X^\perp_{K-k}) $ not to be negligible compared to $\prod_{k=1}^K P_{\overline z_k(Z_n,\widetilde Z_{n+1})}(\overline X^\perp_{k-1},\overline X^\perp_{k})$.

For this, and this is the third remark, it is enlightening to rewrite the acceptance ratio assuming a symmetric schedule, namely that (this is the time-discrete equivalent of~\cref{eq:hyp_rev_z})
\begin{equation}
  \label{eq:rev_chemin_prime}
  \forall (z,\widetilde z), \quad \forall k \in \{0, \ldots, K\}, \qquad \overline z_k(z,\widetilde z)=\overline z_{K-k}(\widetilde z,z).
\end{equation}
In this case, the last factor of the acceptance ratio in Step~3 can be rewritten as 
\begin{align*}
    \prod_{k=1}^K \frac{P_{\overline z_{k-1}(Z_n,\widetilde Z_{n+1})}(\overline X^\perp_{k},\overline X^\perp_{k-1})}{P_{\overline z_k(Z_n,\widetilde Z_{n+1})}(\overline X^\perp_{k-1},\overline X^\perp_{k})}.
\end{align*}
A high acceptance rate can be attained if the individual ratios in the product for each $k$ are close to 1. Yet, the ratio of probabilities with different laws for the forward and backward paths appears (note the shift in $\overline z$ schedule indices) and this is, again, why we call this the `asymmetric algorithm'. This is represented in the left panel of \cref{fig:SketchAlgos}: the law of the backward path (in red) is not the law of the forward path (in black). 
Under the symmetric schedule assumption~\eqref{eq:rev_chemin_prime}, this algorithm follows the non-equilibrium candidate Monte Carlo framework of \cite{nilmeier2011nonequilibrium} with CV space updates corresponding to `perturbations' and equilibration steps corresponding to `propagations'.

\subsection{The stochastically symmetric (or coin) path algorithm}\label{sec:ChenRoux}
The second algorithm can be seen as a generalization of the algorithm introduced in~\cite{chen2015generalized}. In the word of the authors of the latter, the law over protocols now explicitly involves choosing ``a protocol that starts with propagation or perturbation with equal probability".
Due to this kind of coin flip we also use the shorthand denomination `coin' algorithm in the following. In the algorithm presented below, the forward and backward paths are exactly the same if the schedule in CV space is symmetric \eqref{eq:rev_chemin_prime}. 

At iteration $n$, let us consider the current state $X_n=(Z_n,X_n^\perp)$. Given the same ingredients as in Section~\ref{sec:Nilmeier}, the iteration then proceeds in three steps:
\begin{itemize}
\item (Step 1) Propose a move in the collective variable space: $\widetilde{Z}_{n+1} \sim Q(Z_n,\widetilde{z}) \, {\rm d}\widetilde{z}$.
\item (Step 2)
  \begin{itemize}
  \item Introduce a time-discrete schedule $(\overline z_0=Z_n, \overline z_1, \ldots, \overline z_{K-1}, \overline z_K=\widetilde{Z}_{n+1})$ which is deterministically defined as a function of $(Z_n,\widetilde{Z}_{n+1})$. 
  \item Draw a Bernoulli random variable $C_n \in \{0,1\}$ with parameter~$1/2$.
  \item Generate a sequence of states $(\overline X^\perp_{k})_{0 \le k \le K}$ as $\overline X^\perp_0=X_n^\perp$ and~$\overline X^\perp_{k+1}$ sampled with respect to $P_{\overline z_{k+C_n}}(\overline X_k^\perp,x^\perp) \, {\rm d} x^\perp$ for~$k=0,\ldots,K-1$. 
  \item Set $\widetilde X^\perp_{n+1}=\overline X^\perp_{K}$.
  \end{itemize}
\item (Step 3) Draw a random variable $U_{n+1}$ with uniform law over $[0,1]$ and
  \begin{itemize}
  \item If $\displaystyle U_{n+1} \le  \frac{\dps \exp(-\beta V(\widetilde Z_{n+1},\widetilde X_{n+1}^\perp)) Q(\widetilde{Z}_{n+1}, Z_n) \prod_{k=0}^{K-1} P_{\overline z_{k+1-C_n}(\widetilde Z_{n+1}, Z_n )}(\overline X^\perp_{K-k},\overline X^\perp_{K-k-1})}{\dps \exp(-\beta V(Z_n,X_n^\perp)) Q(Z_n,\widetilde{Z}_{n+1}) \prod_{k=1}^{K} P_{\overline z_{k+C_n-1} (Z_n,\widetilde Z_{n+1})}(\overline X^\perp_{k-1},\overline X^\perp_{k})}$,\\ accept the proposal: $(Z_{n+1},X^\perp_{n+1})=(\widetilde Z_{n+1},\widetilde X^\perp_{n+1})$.
    \item Otherwise, reject the proposal: $(Z_{n+1},X^\perp_{n+1})=(Z_{n}, X^\perp_{n})$.
\end{itemize}
\end{itemize}
The random variable $C_n$ determines whether the path will start by an update of $\overline Z$ or of $\overline X^\perp$ (see \cref{fig:SketchAlgos}). Formally, this choice is reflected by the shift in the index of the schedule $\overline z_k$ to consider in the equilibration kernel of $\overline X^\perp$ involved in each factor of the acceptance ratio.

\begin{prop}
  \label{eq:PropChenRoux}
  The Markov chain $(X_n)_{n \ge 0}$ generated by the above algorithm is reversible with respect to $\nu({\rm d}x) = \partition^{-1} \exp(-\beta V(z,x^\perp)) \, {\rm d}z \, {\rm d}x^\perp$.   
\end{prop}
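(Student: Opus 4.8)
The plan is to verify that the stochastically symmetric algorithm fits the framework of the prototype proof in \cref{sec:prototype_proof}, i.e., to exhibit the auxiliary randomness $Y_{n+1}$ and the density $P_{X_n\to\widetilde Z_{n+1}}$ so that the acceptance ratio in Step~3 coincides with $R(X_n,\widetilde X_{n+1},Y_{n+1})\,Q(\widetilde Z_{n+1},Z_n)/Q(Z_n,\widetilde Z_{n+1})$ for $R$ given by~\eqref{eq:R}. Then \cref{prop:reversibility_prototype} (the unnumbered Proposition in \cref{sec:prototype_proof}) applies directly. The auxiliary variable here consists of the coin $C_n$ together with the intermediate states $(\overline X^\perp_1,\dots,\overline X^\perp_{K-1})$; that is, $Y_{n+1}=(C_n,\overline X^\perp_1,\dots,\overline X^\perp_{K-1})$, while $\widetilde X^\perp_{n+1}=\overline X^\perp_K$ is the completed proposal.

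First I would write out explicitly the joint density of $(\widetilde X^\perp_{n+1},Y_{n+1})$ generated in Step~2 conditionally on $(X_n,\widetilde Z_{n+1})$: it is
\[
P_{X_n\to\widetilde Z_{n+1}}\bigl(\widetilde x^\perp,(c,\overline x^\perp_1,\dots,\overline x^\perp_{K-1})\bigr)=\frac12\prod_{k=1}^{K}P_{\overline z_{k+c-1}(Z_n,\widetilde Z_{n+1})}\bigl(\overline x^\perp_{k-1},\overline x^\perp_{k}\bigr),
\]
with $\overline x^\perp_0=X_n^\perp$ and $\overline x^\perp_K=\widetilde x^\perp$, and where the factor $\tfrac12$ is the probability of the coin value $c$. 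Next I would compute the corresponding backward density $P_{\widetilde X_{n+1}\to Z_n}$ in which the roles of start and end are exchanged: the coin is redrawn, and following the reverse schedule $\overline z_j(\widetilde Z_{n+1},Z_n)$ the intermediate states appear in reverse order. The key bookkeeping step is to match indices: using the symmetric-schedule relation~\eqref{eq:rev_chemin_prime}, $\overline z_j(\widetilde Z_{n+1},Z_n)=\overline z_{K-j}(Z_n,\widetilde Z_{n+1})$, and identifying the backward coin with $1-c$ and the reversed intermediate states $\overline x^\perp_{K-k}$, one checks that
\[
\frac{P_{\widetilde X_{n+1}\to Z_n}\bigl(X_n^\perp,Y_{n+1}\bigr)}{P_{X_n\to\widetilde Z_{n+1}}\bigl(\widetilde X^\perp_{n+1},Y_{n+1}\bigr)}=\frac{\prod_{k=0}^{K-1}P_{\overline z_{k+1-C_n}(\widetilde Z_{n+1},Z_n)}(\overline X^\perp_{K-k},\overline X^\perp_{K-k-1})}{\prod_{k=1}^{K}P_{\overline z_{k+C_n-1}(Z_n,\widetilde Z_{n+1})}(\overline X^\perp_{k-1},\overline X^\perp_{k})},
\]
so that $R(X_n,\widetilde X_{n+1},Y_{n+1})=\exp(-\beta(V(\widetilde X_{n+1})-V(X_n)))$ times this ratio is exactly the factor multiplying $Q(\widetilde Z_{n+1},Z_n)/Q(Z_n,\widetilde Z_{n+1})$ in Step~3. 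Note the $\tfrac12$ factors cancel in the ratio, which is why the coin is a legitimate auxiliary variable. Crucially, the map on auxiliary variables $(c,\overline x^\perp_1,\dots,\overline x^\perp_{K-1})\mapsto(1-c,\overline x^\perp_{K-1},\dots,\overline x^\perp_1)$ is an involution preserving Lebesgue-times-counting measure, which is what the prototype proof needs when it integrates over $y$.

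Alternatively — and this is probably cleaner to present — one can bypass the prototype proposition and just repeat its computation directly: introduce a bounded test function $\varphi$ on $\R^d\times\R^d$, expand $\E(\varphi(X_n,X_{n+1}))$ into the accept and reject contributions, reduce to checking the symmetry identity~\eqref{eq:symmetry}, multiply through by $\partition$, write the $1\wedge(\cdot)$ as a minimum of two explicit integrands, and apply~\eqref{eq:R} to see that the integrand is invariant under swapping $(x,\widetilde x)$ and simultaneously applying the auxiliary-variable involution $c\leftrightarrow 1-c$, $(\overline x^\perp_k)\leftrightarrow(\overline x^\perp_{K-k})$. The main obstacle — really the only nontrivial point — is the index arithmetic on the schedule: one must carefully track how the shift $k\mapsto k+C_n-1$ in the forward kernel lines up, after time-reversal and the coin flip $C_n\mapsto 1-C_n$, with the shift $k\mapsto k+1-C_n$ appearing in the backward product, and verify this holds uniformly for both coin values using~\eqref{eq:rev_chemin_prime}. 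Everything else (the $\partition^{-1}$ cancellation, the measure-preservation of the reversal map, the reduction to~\eqref{eq:symmetry}) is identical to the asymmetric case in \cref{eq:PropNilmeier} and the prototype proof.
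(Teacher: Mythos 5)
Your second, ``direct'' route is essentially the paper's own proof: expand $\E(\varphi(X_n,X_{n+1}))$, isolate the density $\mathcal Q$ of the accepted pair as a sum of two terms indexed by the coin value, and check its symmetry by reversing the intermediate states, swapping the two summands (i.e.\ flipping the coin), and swapping the two arguments of the minimum. Your first route, via the framework of Section~\ref{sec:prototype_proof}, is a legitimate repackaging of the same computation, but note that it does not literally fall under the Proposition of that section: in~\eqref{eq:R} the \emph{same} auxiliary variable $y$ appears in the forward and backward densities, whereas here the backward density must be evaluated at $\sigma(y)=(1-c,\overline x^\perp_{K-1},\dots,\overline x^\perp_1)$ rather than at $y$. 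You correctly identify that the needed generalization is that $\sigma$ be a measure-preserving involution (so that the change of variables $y\mapsto\sigma(y)$ restores the symmetry of $\int_y[\,\cdot\,]\wedge[\,\cdot\,]\,{\rm d}y$ under $x\leftrightarrow\widetilde x$), but you should state the generalized version of~\eqref{eq:R} explicitly --- as written, your displayed ratio has $Y_{n+1}$ in both numerator and denominator while the prose describes $\sigma(Y_{n+1})$.

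The one substantive flaw is your invocation of the symmetric-schedule relation~\eqref{eq:rev_chemin_prime} in both versions of the argument. The proposition is stated, and proved in the paper, \emph{without} any symmetry assumption on the schedule: the backward product in the acceptance ratio is already expressed in terms of the backward schedule $\overline z_{\cdot}(\widetilde Z_{n+1},Z_n)$, so no conversion to the forward schedule is needed anywhere --- the index matching $k+c-1$ versus $k+1-c$ works out purely from the coin flip $c\mapsto 1-c$ and the reversal of the states. If your bookkeeping genuinely relied on $\overline z_j(\widetilde z,z)=\overline z_{K-j}(z,\widetilde z)$, you would only have proved reversibility for symmetric schedules, which is strictly weaker than the claim. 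Drop that step; everything else goes through unchanged.
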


The proof is given in \cref{proof:eq:PropChenRoux} and again it does not require any specific assumption on the kernels, nor any symmetry property on the schedule in~$z$ when interchanging the end points.

It is again enlightening to consider this algorithm assuming that the schedule satisfies the symmetry property~\eqref{eq:rev_chemin_prime} to ensure a non-vanishing acceptance as~$K$ grows, as was done by \cite{chen2015generalized}.
The name given to this algorithm derives from the fact that, assuming \eqref{eq:rev_chemin_prime}, the final factor of the acceptance ratio in Step~3 can be rewritten as 
\begin{align*}
    \prod_{k=1}^{K} \frac{P_{\overline z_{k+C_n-1} (Z_n,\widetilde Z_{n+1})}(\overline X^\perp_{k},\overline X^\perp_{k-1})}{P_{\overline z_{k+C_n-1} (Z_n,\widetilde Z_{n+1})}(\overline X^\perp_{k-1},\overline X^\perp_{k})},
\end{align*}
which in the numerator involves the likelihood of the forward path followed backwards in time, for the successive forward kernels (this is illustrated in \cref{fig:SketchAlgos}: conditionally on $C_n$, the red path is exactly the black path followed backwards).

In order to obtain an even more interpretable formula of the acceptance ratio, let us assume additionally that the equilibration kernels are reversible with respect to the family of conditional target measures $\nu_\perp({\rm d} x^\perp |z) $, namely that
\begin{equation}\label{eq:rev_second}
  \forall z \in \R^d, \qquad  P_{z}(x^\perp, {\rm d}\widetilde x^\perp)\exp(-\beta V(z,x^\perp)) \, {\rm d}x^\perp  = P_{z}(\widetilde x^\perp, {\rm d} x^\perp)\exp(-\beta V(z,\widetilde x^\perp)) \, {\rm d}\widetilde x^\perp.
\end{equation}
Equilibration kernels satisfying that property can for example be built using a Metropolis algorithm, for instance one step of the Metropolis Adjusted Langevin Algorithm (MALA)~\cite{RDF78,roberts1996} with target measure $\nu_\perp({\rm d} x^\perp |z)$. In that case, one can directly link the acceptance ratio to the Jarzynski--Crooks works and understand it as an instance of the prototypical algorithm of \cref{sec:algo}, as stated in the following proposition.

\begin{prop}
\label{eq:propchenrouxwork}
    Under Assumptions~\eqref{eq:rev_chemin_prime} and~\eqref{eq:rev_second}, it holds
    \begin{equation}\label{eq:work}
      \frac{\dps \exp(-\beta V(\widetilde Z_{n+1},\widetilde X_{n+1}^\perp)) Q(\widetilde{Z}_{n+1}, Z_n) \prod_{k=0}^{K-1} P_{\overline z_{k+1-C_n}(\widetilde Z_{n+1}, Z_n )}(\overline X^\perp_{K-k},\overline X^\perp_{K-k-1})}{\dps \exp(-\beta V(Z_n,X_n^\perp)) Q(Z_n,\widetilde{Z}_{n+1}) \prod_{k=1}^{K} P_{\overline z_{k+C_n-1} (Z_n,\widetilde Z_{n+1})}(\overline X^\perp_{k-1},\overline X^\perp_{k})}= \frac{Q(\widetilde{Z}_{n+1}, Z_n)}{Q( Z_n,\widetilde{Z}_{n+1})}
      \exp(-\beta {\mathcal W}^{C_n}_{n+1}),
    \end{equation}
   where
   \[
   \mathcal W^1_{n+1}=\sum_{k=0}^{K-1} V\left(\overline z_{k+1}(Z_n, \widetilde Z_{n+1}),\overline X^\perp_k\right) - V\left(\overline z_{k}(Z_n, \widetilde Z_{n+1}),\overline X^\perp_k \right),
   \]
   and
   \[
   \mathcal W^0_{n+1}=\sum_{k=0}^{K-1} V\left(\overline z_{k+1}(Z_n, \widetilde Z_{n+1}),\overline X^\perp_{k+1}\right) - V\left(\overline z_{k}(Z_n, \widetilde Z_{n+1}),\overline X^\perp_{k+1}\right).
   \]
   Both these works satisfy the Jarzynski equality~\eqref{eq:Jarz_consistency}.
\end{prop}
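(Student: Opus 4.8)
The plan is to establish the identity~\eqref{eq:work} by direct manipulation of the products of equilibration kernels, using the detailed balance relation~\eqref{eq:rev_second} on each factor, and then to identify the residual telescoping sum as the announced works. First I would use the symmetry property~\eqref{eq:rev_chemin_prime} to rewrite the numerator's schedule: since $\overline z_k(\widetilde Z_{n+1},Z_n) = \overline z_{K-k}(Z_n,\widetilde Z_{n+1})$, each kernel $P_{\overline z_{k+1-C_n}(\widetilde Z_{n+1},Z_n)}(\overline X^\perp_{K-k},\overline X^\perp_{K-k-1})$ in the numerator becomes $P_{\overline z_{K-k-1+C_n}(Z_n,\widetilde Z_{n+1})}(\overline X^\perp_{K-k},\overline X^\perp_{K-k-1})$. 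Reindexing with $j = K-k$ (so $k$ running from $0$ to $K-1$ corresponds to $j$ running from $K$ down to $1$) turns the numerator product into $\prod_{j=1}^{K} P_{\overline z_{j-1+C_n}(Z_n,\widetilde Z_{n+1})}(\overline X^\perp_{j},\overline X^\perp_{j-1})$. Thus, after cancelling the $Q$ factors and noting $\exp(-\beta V(\widetilde Z_{n+1},\widetilde X_{n+1}^\perp))/\exp(-\beta V(Z_n,X_n^\perp)) = \exp(-\beta V(\overline z_K,\overline X^\perp_K))/\exp(-\beta V(\overline z_0,\overline X^\perp_0))$, the left-hand side of~\eqref{eq:work} (divided by $Q(\widetilde Z_{n+1},Z_n)/Q(Z_n,\widetilde Z_{n+1})$) equals
\[
\frac{\exp(-\beta V(\overline z_K,\overline X^\perp_K))}{\exp(-\beta V(\overline z_0,\overline X^\perp_0))}\prod_{k=1}^{K}\frac{P_{\overline z_{k-1+C_n}(Z_n,\widetilde Z_{n+1})}(\overline X^\perp_{k},\overline X^\perp_{k-1})}{P_{\overline z_{k-1+C_n}(Z_n,\widetilde Z_{n+1})}(\overline X^\perp_{k-1},\overline X^\perp_{k})},
\]
which is exactly the compact form of the ratio already displayed in the text (with $\overline z_{k+C_n-1}$).

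Next, I would apply~\eqref{eq:rev_second} at fixed CV value $z = \overline z_{k-1+C_n}(Z_n,\widetilde Z_{n+1})$ to each ratio in the product, obtaining
\[
\frac{P_{\overline z_{k-1+C_n}}(\overline X^\perp_{k},\overline X^\perp_{k-1})}{P_{\overline z_{k-1+C_n}}(\overline X^\perp_{k-1},\overline X^\perp_{k})} = \frac{\exp(-\beta V(\overline z_{k-1+C_n},\overline X^\perp_{k-1}))}{\exp(-\beta V(\overline z_{k-1+C_n},\overline X^\perp_{k}))},
\]
so that the whole product becomes a product of ratios of Boltzmann weights along the path. It remains to check that, together with the prefactor $\exp(-\beta V(\overline z_K,\overline X^\perp_K))/\exp(-\beta V(\overline z_0,\overline X^\perp_0))$, this telescopes into $\exp(-\beta\mathcal W^{C_n}_{n+1})$. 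One treats the cases $C_n=1$ and $C_n=0$ separately: for $C_n=1$ the kernel at step $k$ uses CV value $\overline z_k$, and the exponent assembled is $\sum_{k=1}^{K}\big(V(\overline z_k,\overline X^\perp_{k-1}) - V(\overline z_k,\overline X^\perp_k)\big)$ plus $V(\overline z_K,\overline X^\perp_K) - V(\overline z_0,\overline X^\perp_0)$; reindexing the sum and cancelling the boundary terms gives precisely $-\beta\sum_{k=0}^{K-1}\big(V(\overline z_{k+1},\overline X^\perp_k) - V(\overline z_k,\overline X^\perp_k)\big) = -\beta\mathcal W^1_{n+1}$. The case $C_n=0$ is symmetric, with the kernel at step $k$ using $\overline z_{k-1}$, yielding $-\beta\mathcal W^0_{n+1}$. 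This proves~\eqref{eq:work}.

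Finally, to show both works satisfy the Jarzynski equality~\eqref{eq:Jarz_consistency}, I would interpret each as the work along a discrete-time non-equilibrium protocol alternating CV updates (which change the energy at fixed $x^\perp$, contributing the increments in $\mathcal W^{C_n}$) and equilibration steps (which, being $\nu_\perp$-reversible, leave the relevant measures invariant), and invoke the standard discrete Jarzynski identity for such protocols — equivalently, one can verify~\eqref{eq:Jarz_consistency} directly by induction on $K$, using at each step that $P_{\overline z_{k}}(\cdot,\cdot)$ is reversible with respect to $\nu_\perp(\cdot|\overline z_k)$ together with the definition of $F$. The main obstacle I anticipate is purely bookkeeping: keeping the index shifts straight between the three schedules $\overline z_k(Z_n,\widetilde Z_{n+1})$, $\overline z_k(\widetilde Z_{n+1},Z_n)$, and the $C_n$-dependent kernel indices, and making sure the boundary terms of the telescoping sum match the start/end Boltzmann factors in each of the two cases $C_n \in \{0,1\}$; no genuine analytic difficulty arises once the reversibility~\eqref{eq:rev_second} is in hand.
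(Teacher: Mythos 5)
Your proposal is correct and follows essentially the same route as the paper's proof: rewrite the numerator's schedule via~\eqref{eq:rev_chemin_prime}, reindex the product, apply the detailed-balance relation~\eqref{eq:rev_second} factor by factor, and telescope the resulting Boltzmann ratios into $\exp(-\beta\mathcal W^{C_n}_{n+1})$ (the paper treats $C_n=0$ and $C_n=1$ separately where you unify them with the index $k-1+C_n$, a purely cosmetic difference). Your sketch of the Jarzynski identity~\eqref{eq:Jarz_consistency} by induction on $K$, integrating out the intermediate variables one at a time using the invariance of $\nu_\perp(\cdot\,|\,\overline z_k)$ under $P_{\overline z_k}$, is exactly the iterative argument carried out in the paper's appendix.
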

The proof of Proposition~\ref{eq:propchenrouxwork} can be found in \cref{proof:eq:propchenrouxwork}

\subsection{The symmetric path algorithm}
\label{sec:symm}

Let us now introduce a third algorithm which is symmetric for a reversible schedule in $z$ space, similarly to the second algorithm but without the need of sampling an additional Bernoulli random variable. This is reminiscent of what was proposed in \cite{athenesComputationChemicalPotential2002} in the setting of an alchemical transformation corresponding to a Widom insertion (see also \cref{rem:athenes}). The symmetry again enables us to show that, under an additional assumption, the algorithm can be seen as an example of the prototypical algorithm from \cref{sec:algo}.

At iteration $n$, let us consider the current state $X_n=(Z_n,X_n^\perp)$. Using the same ingredients as above, the iteration then proceeds in three steps:
\begin{itemize}
\item (Step 1) Propose a move in the collective variable space: $\widetilde{Z}_{n+1} \sim Q(Z_n,\widetilde{z}) \, {\rm d}\widetilde{z}$.
\item (Step 2)
  \begin{itemize}
  \item Introduce a time-discrete schedule $(\overline z_0=Z_n, \overline z_1, \ldots, \overline z_{K-1}, \overline z_K=\widetilde{Z}_{n+1})$ which is deterministically defined as a function of $(Z_n,\widetilde{Z}_{n+1})$. 
  \item Generate a sequence of states $(\overline X^\perp_{k})_{0 \le k \le K+1}$ as~$\overline X^\perp_0=X_n^\perp$ and~$\overline X^\perp_{k+1}$ sampled with respect to $P_{\overline z_{k}}(\overline X_k^\perp,x^\perp) \, {\rm d}x^\perp$ for~$k=0,\ldots,K$.
  \item Set $\widetilde X^\perp_{n+1}=\overline X^\perp_{K+1}$.
  \end{itemize}
\item (Step 3) Draw a random variable $U_{n+1}$ with uniform law over~$[0,1]$ and
\begin{itemize}
    \item If $\displaystyle U_{n+1} \le  \frac{\dps \exp(-\beta V(\widetilde Z_{n+1},\widetilde X_{n+1}^\perp)) Q(\widetilde{Z}_{n+1}, Z_n) \prod_{k=0}^{K} P_{\overline z_{k}(\widetilde Z_{n+1}, Z_n )}(\overline X^\perp_{K+1-k},\overline X^\perp_{K-k})}{\dps \exp(-\beta V(Z_n,X_n^\perp)) Q(Z_n,\widetilde{Z}_{n+1}) \prod_{k=0}^{K} P_{\overline z_{k} (Z_n,\widetilde Z_{n+1})}(\overline X^\perp_{k},\overline X^\perp_{k+1})}$,\\ accept the proposal: $(Z_{n+1},X^\perp_{n+1})=(\widetilde Z_{n+1},\widetilde X^\perp_{n+1})$.
    \item Otherwise, reject the proposal: $(Z_{n+1},X^\perp_{n+1})=(Z_{n}, X^\perp_{n})$.
\end{itemize}
\end{itemize}
We also prove the reversibility for this algorithm.

\begin{prop}
\label{eq:propstrangrev}
The Markov chain $(X_n)_{n \ge 0}$ generated by the above algorithm is reversible with respect to $\nu({\rm d}x) = \partition^{-1} \exp(-\beta V(z,x^\perp)) \, {\rm d}z \, {\rm d}x^\perp$.   
\end{prop}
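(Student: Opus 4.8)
The strategy mirrors the prototype proof of reversibility from \cref{sec:prototype_proof}: one writes out $\E(\varphi(X_n,X_{n+1}))$ for a test function $\varphi$, splits into the accept and reject contributions, and checks that the accept-term integrand is symmetric under swapping the roles of start and end state. Concretely, I would set $\overline z_k = \overline z_k(Z_n, \widetilde Z_{n+1})$, let $(\overline X^\perp_k)_{0\le k\le K+1}$ be the chain of equilibration steps (so $\overline X^\perp_0 = X_n^\perp$ and $\widetilde X^\perp_{n+1} = \overline X^\perp_{K+1}$), and recognize that the auxiliary variable $Y_{n+1}$ of the prototype algorithm is here the tuple of intermediate states $(\overline X^\perp_1, \ldots, \overline X^\perp_K)$. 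The forward proposal density for $(\widetilde X^\perp_{n+1}, Y_{n+1})$ is $P_{X_n \to \widetilde Z_{n+1}}(\widetilde x^\perp, y) = \prod_{k=0}^{K} P_{\overline z_k(z,\widetilde z)}(\overline x^\perp_k, \overline x^\perp_{k+1})$ with the identification $\overline x^\perp_0 = x^\perp$, $\overline x^\perp_{K+1} = \widetilde x^\perp$; and the acceptance ratio in Step~3 is then exactly $R(X_n, \widetilde X_{n+1}, Y_{n+1})\, Q(\widetilde Z_{n+1},Z_n)/Q(Z_n,\widetilde Z_{n+1})$ with $R$ as in \eqref{eq:R}, provided the backward proposal density satisfies $P_{\widetilde x \to z}(x^\perp, y)$ equals the product $\prod_{k=0}^K P_{\overline z_k(\widetilde z, z)}(\overline x^\perp_{K+1-k}, \overline x^\perp_{K-k})$ read off the denominator.

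\textbf{Key steps.} First I would verify that this choice of $P_{x\to\widetilde z}$ is a genuine probability density in $(\widetilde x^\perp, y)$ for each $(x,\widetilde z)$ — this is immediate since it is a product of transition densities of the $K+1$ successive equilibration kernels, and the "output" of the last one is $\widetilde x^\perp$ while the $K$ intermediate outputs form $y$. Second, and this is the only place where the specific structure matters, I would check that with the \emph{backward} proposal density defined by the denominator of the Step~3 ratio, the identity \eqref{eq:R} holds on the nose: substituting the two product formulas into $R(x,\widetilde x,y)$ and multiplying by the $Q$-ratio reproduces exactly the displayed acceptance ratio. Here the key combinatorial observation is that the symmetric algorithm uses the schedule $\overline z_k$ for \emph{both} the $z$-update leading into step $k$ and the equilibration of $x^\perp$ at step $k$ (no index shift), so that reversing time sends $\overline X^\perp_k \mapsto \overline X^\perp_{K+1-k}$ and the forward kernel $P_{\overline z_k}$ is matched against the backward kernel $P_{\overline z_{K-k}(\widetilde z, z)}$; when the schedule satisfies \eqref{eq:rev_chemin_prime} this becomes $P_{\overline z_k(z,\widetilde z)}$ again, but — as the surrounding text stresses for the other algorithms — reversibility does not actually require \eqref{eq:rev_chemin_prime}, so I would phrase the proof without assuming it. Third, once \eqref{eq:R} is in place, the conclusion follows verbatim from the proof of the proposition in \cref{sec:prototype_proof}: expand $\E(\varphi(X_n,X_{n+1}))$ into accept plus reject terms, multiply by $\partition$, rewrite the $1\wedge(\cdots)$ of the accept term as a minimum of two products, use \eqref{eq:R} to replace $\exp(-\beta V(x))\,Q(z,\widetilde z)\,\ldots\,R$ by $\exp(-\beta V(\widetilde x))\,Q(\widetilde z,z)\,P_{\widetilde x\to z}(x^\perp,y)$, and observe that the resulting integrand is manifestly invariant under $(x,\widetilde x)\leftrightarrow(\widetilde x, x)$ after relabeling the integration variable $y$ (the tuple of intermediates gets read in the reverse order, which is just a change of names). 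Together with the symmetry of the reject term this gives $\E(\varphi(X_n,X_{n+1})) = \E(\varphi(X_{n+1},X_n))$, i.e. reversibility with respect to $\nu$.

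\textbf{Main obstacle.} The only nontrivial point is the bookkeeping in the second step: one must be careful that the $K+1$ factors $P_{\overline z_k}(\overline X^\perp_k, \overline X^\perp_{k+1})$ in the forward density and the $K+1$ factors $P_{\overline z_k(\widetilde z,z)}(\overline X^\perp_{K+1-k}, \overline X^\perp_{K-k})$ in the backward density pair up correctly index by index, and that the extra equilibration step (which makes this algorithm "symmetric" rather than merely "asymmetric" — here $\overline X^\perp$ is updated $K+1$ times while the schedule has only $K$ transitions, so the last update uses $\overline z_K = \widetilde z$ and the symmetry under time reversal is exact for a symmetric schedule with \emph{no} index offset) is accounted for on both sides. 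This is a finite, routine verification of matching indices rather than a conceptual difficulty; all the analytic content is already contained in \cref{prop:Jarz-Crooks} and in the prototype reversibility argument, which I would simply invoke.
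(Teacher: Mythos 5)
Your proposal is correct and follows essentially the same route as the paper: the paper's proof writes out $\E(\varphi(X_n,X_{n+1}))$ as accept plus reject terms, expresses the accepted-move density as an integral over the intermediates $(\overline x^\perp_k)_{1\le k\le K}$ of a minimum of the forward and backward path densities, and symmetrizes via the change of variables $(\overline x^\perp_k)\mapsto(\overline x^\perp_{K+1-k})$ — which is exactly your ``relabel $y$ in reverse order'' step, and like you it assumes neither \eqref{eq:rev_chemin_prime} nor \eqref{eq:rev_second}. The only cosmetic slip is that the backward product $\prod_{k=0}^K P_{\overline z_k(\widetilde z,z)}(\overline x^\perp_{K+1-k},\overline x^\perp_{K-k})$ is read off the \emph{numerator} of the Step~3 ratio, not the denominator.
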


The proof is given in \cref{proof:eq:propstrangrev}. As for the two previous algorithms, it does not require any specific assumption on the kernels and on the schedule in $z$.

Assume next that the schedule satisfies the symmetry property~\eqref{eq:rev_chemin_prime}.
Then, the last factor in the acceptance ratio of Step~3 can be rewritten as
\begin{align*}
    \prod_{k=0}^{K}\frac{ P_{\overline z_{k} (Z_n,\widetilde Z_{n+1})}(\overline X^\perp_{k+1}, \overline X^\perp_{k})}{P_{\overline z_{k} (Z_n,\widetilde Z_{n+1})}(\overline X^\perp_{k},\overline X^\perp_{k+1})},
\end{align*}
justifying why we call this algorithm the symmetric algorithm since the numerator involves the likelihood of the forward path followed backwards in time, for the successive forward kernels (this is illustrated on \cref{fig:SketchAlgos}: the red path is exactly the black path followed backwards).

Assuming furthermore that the equilibration kernels are reversible with respect to the conditional target measure (namely that~\eqref{eq:rev_second} holds), one can again rewrite the acceptance ratio in terms of the Jarzynski--Crooks works. This shows that this algorithm can be seen as one instance of our prototype algorithm from \cref{sec:algo}. 

\begin{prop}\label{eq:propStrangWorks}
   Under Assumptions~\eqref{eq:rev_chemin_prime} and~\eqref{eq:rev_second}, it holds
   \begin{equation}
     \label{eq:work2}
     \frac{\dps \exp(-\beta V(\widetilde Z_{n+1},\widetilde X_{n+1}^\perp)) Q(\widetilde{Z}_{n+1}, Z_n) \prod_{k=0}^{K} P_{\overline z_{k}(\widetilde Z_{n+1}, Z_n )}(\overline X^\perp_{K+1-k},\overline X^\perp_{K-k})}{\dps \exp(-\beta V(Z_n,X_n^\perp)) Q(Z_n,\widetilde{Z}_{n+1}) \prod_{k=0}^{K} P_{\overline z_{k} (Z_n,\widetilde Z_{n+1})}(\overline X^\perp_{k},\overline X^\perp_{k+1})}= \frac{Q(\widetilde{Z}_{n+1}, Z_n)}{Q( Z_n,\widetilde{Z}_{n+1})} \exp(-\beta {\mathcal W}^{2}_{n+1}),
   \end{equation}
   where
   \[
   \mathcal W^2_{n+1}=\sum_{k=1}^{K} V\left(\overline z_{k}(Z_n, \widetilde Z_{n+1}),\overline X^\perp_k\right) - V\left(\overline z_{k-1}(Z_n, \widetilde Z_{n+1}),\overline X^\perp_k\right).
   \]
   This work satisfies the Jarzynsky equality~\eqref{eq:Jarz_consistency}.
\end{prop}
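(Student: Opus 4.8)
The proposition has two parts: the algebraic rewriting~\eqref{eq:work2} of the acceptance ratio, and the claim that the resulting work $\mathcal W^2_{n+1}$ satisfies the Jarzynski equality~\eqref{eq:Jarz_consistency}. Since the factor $Q(\widetilde Z_{n+1},Z_n)/Q(Z_n,\widetilde Z_{n+1})$ occurs unchanged on both sides of~\eqref{eq:work2}, I would only need to handle the remaining quotient; throughout I abbreviate $\overline z_k=\overline z_k(Z_n,\widetilde Z_{n+1})$ and recall that the $K+1$ equilibration moves produce the states $\overline X^\perp_0=X^\perp_n,\overline X^\perp_1,\dots,\overline X^\perp_{K+1}=\widetilde X^\perp_{n+1}$ along the $K+1$ nodes $\overline z_0,\dots,\overline z_K$.

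For~\eqref{eq:work2}, the plan is to first use the schedule symmetry~\eqref{eq:rev_chemin_prime}, which gives $\overline z_k(\widetilde Z_{n+1},Z_n)=\overline z_{K-k}$, so that after the index change $j=K-k$ the numerator product becomes $\prod_{j=0}^{K}P_{\overline z_j}(\overline X^\perp_{j+1},\overline X^\perp_j)$ — the forward kernels run along the time-reversed state sequence. Then I would divide by the denominator $\prod_{j=0}^{K}P_{\overline z_j}(\overline X^\perp_j,\overline X^\perp_{j+1})$ and apply the kernel reversibility~\eqref{eq:rev_second} factorwise (the $z$-dependent normalisation of $\nu_\perp(\cdot\mid z)$ drops out of the ratio), obtaining
\[
\prod_{j=0}^{K}\frac{P_{\overline z_j}(\overline X^\perp_{j+1},\overline X^\perp_j)}{P_{\overline z_j}(\overline X^\perp_j,\overline X^\perp_{j+1})}=\exp\!\left(-\beta\sum_{j=0}^{K}\left[V(\overline z_j,\overline X^\perp_j)-V(\overline z_j,\overline X^\perp_{j+1})\right]\right).
\]
Multiplying by the energy prefactor $\exp(-\beta V(\widetilde Z_{n+1},\widetilde X^\perp_{n+1}))/\exp(-\beta V(Z_n,X^\perp_n))=\exp(-\beta[V(\overline z_K,\overline X^\perp_{K+1})-V(\overline z_0,\overline X^\perp_0)])$ and collecting exponents, the two boundary terms $V(\overline z_0,\overline X^\perp_0)$ and $V(\overline z_K,\overline X^\perp_{K+1})$ cancel and the rest telescopes into $-\beta\sum_{k=1}^{K}\left[V(\overline z_k,\overline X^\perp_k)-V(\overline z_{k-1},\overline X^\perp_k)\right]=-\beta\mathcal W^2_{n+1}$. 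This proves~\eqref{eq:work2}, and the same computation yields the identity — valid under~\eqref{eq:rev_second} alone —
\[
\exp(-\beta\mathcal W^2_{n+1})=\frac{\exp(-\beta V(\widetilde Z_{n+1},\overline X^\perp_{K+1}))\,\prod_{j=0}^{K}P_{\overline z_j}(\overline X^\perp_{j+1},\overline X^\perp_j)}{\exp(-\beta V(Z_n,\overline X^\perp_0))\,\prod_{j=0}^{K}P_{\overline z_j}(\overline X^\perp_j,\overline X^\perp_{j+1})},
\]
which drives the second part.

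For the Jarzynski equality I would fix $(Z_n,\widetilde Z_{n+1})=(z,\widetilde z)$ (hence the schedule), take $X^\perp_n\sim\nu_\perp(\cdot\mid z)$, and expand the conditional expectation $\E(\varphi(\widetilde X_{n+1})\exp(-\beta\mathcal W^2_{n+1})\mid(z,\widetilde z))$ as
\[
\int \varphi(\widetilde z,\overline x^\perp_{K+1})\,\exp(-\beta\mathcal W^2)\,\nu_\perp({\rm d}\overline x^\perp_0\mid z)\prod_{k=0}^{K}P_{\overline z_k}(\overline x^\perp_k,\overline x^\perp_{k+1})\,{\rm d}\overline x^\perp_{k+1}.
\]
Inserting the second displayed identity above for $\exp(-\beta\mathcal W^2)$, the factor $\exp(-\beta V(z,\overline x^\perp_0))$ coming from $\nu_\perp(\cdot\mid z)$ and all the forward kernels $P_{\overline z_k}(\overline x^\perp_k,\overline x^\perp_{k+1})$ cancel, leaving $\bigl(\partition\exp(-\beta F(z))\bigr)^{-1}\int\varphi(\widetilde z,\overline x^\perp_{K+1})\exp(-\beta V(\widetilde z,\overline x^\perp_{K+1}))\prod_{j=0}^{K}P_{\overline z_j}(\overline x^\perp_{j+1},\overline x^\perp_j)\,{\rm d}\overline x^\perp_0\cdots{\rm d}\overline x^\perp_{K+1}$. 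Integrating successively over $\overline x^\perp_0,\overline x^\perp_1,\dots,\overline x^\perp_K$ — at each step the variable appears in a single factor $P_{\overline z_j}(\overline x^\perp_{j+1},\cdot)$, which is a probability density and hence integrates to $1$ — only the $\overline x^\perp_{K+1}$-integral survives, and using the definitions~\eqref{eq:freeEnergySimple} of $F$ and~\eqref{eq:condmeasure} of $\nu_\perp$ it equals $\exp(-\beta(F(\widetilde z)-F(z)))\int\varphi(\widetilde z,x^\perp)\,\nu_\perp({\rm d}x^\perp\mid\widetilde z)$, which is exactly~\eqref{eq:Jarz_consistency} (taking $\varphi\equiv1$ recovers~\eqref{eq:Jarz_FE}).

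I do not expect a genuine obstacle here: the argument is essentially bookkeeping and parallels the proof of Proposition~\ref{eq:propchenrouxwork}. The one point requiring care is exactly that bookkeeping — there are $K+1$ equilibration moves and $K+2$ states $\overline X^\perp_0,\dots,\overline X^\perp_{K+1}$ against only $K+1$ schedule nodes, and it is precisely this off-by-one, together with~\eqref{eq:rev_second}, that makes the boundary contributions cancel and the interior sum telescope to $\mathcal W^2_{n+1}$. One must also apply~\eqref{eq:rev_second} in the correct orientation, so that in each factor $V(\overline z_j,\cdot)$ is evaluated at the right state; this is the only place a sign slip is likely.
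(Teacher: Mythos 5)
Your proposal is correct, and the derivation of \eqref{eq:work2} coincides with the paper's: apply the schedule symmetry \eqref{eq:rev_chemin_prime}, reindex the numerator product by $j=K-k$, use the kernel reversibility \eqref{eq:rev_second} factor by factor, and let the boundary terms cancel against the energy prefactor while the interior telescopes to $\mathcal W^2_{n+1}$; your off-by-one bookkeeping ($K+2$ states, $K+1$ kernels, $K+1$ nodes) is exactly right. The only divergence is in the second half. The paper proves \eqref{eq:Jarz_consistency} by iterating forward: at each step it uses the invariance of $\exp(-\beta V(\overline z_k,x^\perp))\,{\rm d}x^\perp/\exp(-\beta F(\overline z_k))$ under $P_{\overline z_k}$ to integrate out one variable, absorbing one work factor $\exp(-\beta[V(\overline z_{k+1},\overline x^\perp_{k+1})-V(\overline z_k,\overline x^\perp_{k+1})])$ at a time. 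You instead substitute the global reversed-path identity $\exp(-\beta\mathcal W^2)\cdot[\text{forward path density}]=\exp(-\beta[V(\widetilde z,\overline x^\perp_{K+1})-V(z,\overline x^\perp_0)])\cdot[\text{backward path density}]$ (valid under \eqref{eq:rev_second} alone) and then marginalize $\overline x^\perp_0,\dots,\overline x^\perp_K$ in order, using only that each $P_{\overline z_j}(\overline x^\perp_{j+1},\cdot)$ is a probability density. Both arguments rest on the same hypothesis; yours makes the Crooks-type path-reversal structure explicit in one shot and is slightly shorter, whereas the paper's step-by-step version mirrors its proof of Proposition~\ref{eq:propchenrouxwork} and makes visible that each intermediate marginal is the equilibrium conditional at the current node. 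Either is acceptable.
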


The proof is provided in \cref{proof:eq:propStrangWorks}. The work takes the same form as the work $\mathcal W^0_{n+1}$ in \cref{eq:propchenrouxwork}, with the difference that the path in this symmetric algorithm comprises one additional equilibration step.

\begin{rem}
\label{rem:athenes}
The symmetric scheme we propose starts and ends with an equilibration step. We note that it is also possible to define a symmetric scheme starting and ending with an update of the CV as proposed in \cite{athenesComputationChemicalPotential2002}.
\end{rem}

\begin{rem}
  Assume here that the equilibration kernels are built using one step of the MALA algorithm (for a time step $\Delta t=T/K$) with target the conditional target measure, and that the time discrete schedules are given by $\overline z_k=z(k\Delta t)$, for some continuous-in-time schedule $(z(t))_{0 \le t \le T}$. Then one can check that the three acceptance ratios introduced in the three algorithms above converge, in the limit $\Delta t \to 0$, to the acceptance ratio of the continuous-in-time path algorithm of Section~\ref{sec:contalgo}. The discrete-in-time path algorithms can thus be seen as discretizations of the latter. However, as mentioned above, these discrete algorithms are also defined in a much more general setting, without referring to any underlying continuous-in-time path algorithm. It would be interesting to try to identify how to optimally choose these equilibration kernels. We intend to explore such optimization problems in future works.
\end{rem}

\section{Numerical Experiments}
\label{sec:numerics}
In all the implementations of the algorithms presented in \cref{sec:discretetime}, we use a simple linear interpolation schedule for the collective variable with
\[
\forall k\in\{0,\dots,K\}, \qquad \overline{z}_k = Z_n + \frac{ k}{K} (\widetilde{Z}_{n+1} - Z_n).
\]
Notice in particular that this schedule satisfies the symmetry property~\eqref{eq:rev_chemin_prime}. For the equilibration kernel $P_z(x^\perp, {\rm d}\widetilde{x}^\perp)$, we use discretized overdamped Langevin dynamics. More precisely, we use the unadjusted discretization (ULA) for all algorithms, and also test the reversible Metropolis-adjusted discretization (MALA) for the stochastically symmetric and the symmetric algorithm, since the acceptance ratios of these two simplify when using kernels reversible with respect to their target measure (see Propositions~\ref{eq:propchenrouxwork} and~\ref{eq:propStrangWorks}).  These variants give overall rise to five unbiased algorithms, which in shorthand will also refer to as Asym, CoinUla, SymUla, CoinMala, SymMala. We compare them in the following. 

To make ULA and MALA fully explicit, recall the Euler--Maruyama discretization of the overdamped Langevin dynamics with respect to the target measure $\nu$: At current state $x_k$, the new state is given by 
$$\widetilde{x}_{k+1} = x_k + \Delta t \nabla_x \log\nu(x_k) + \sqrt{2\Delta t}\xi_k,$$
for a real positive time step $\Delta t$ and where the random variables~$(\xi_k)_{k \geq 0}$ are independent and identically distributed according to a Gaussian distribution with zero mean and unit variance. For ULA, one simply has $x_{k+1} = \widetilde{x}_{k+1}$, whereas for MALA the new state is accepted with probability 
\begin{align}
\min \left(1, \frac{\nu(\widetilde{x}_{k+1})q(x_k|\widetilde{x}_{k+1})}{\nu(x_k)q(\widetilde{x}_{k+1}|x_k)}\right),
\end{align}
where $q$ is the transition density of one discretized step of the overdamped Langevin dynamics:
\begin{align}
q(\widetilde{x} |x) \propto \exp\left(-\frac{1}{4\Delta t} \left\|\widetilde{x}-x-\Delta t\nabla_x \log \nu(x)\right\|^2\right).
\end{align}

Since $\Delta t$ provides a time scale, it appears natural and consistent with the continuous time picture to prescribe for the corresponding CV schedule a velocity $v=|\widetilde{Z}_{n+1} - Z_n|/(K \Delta t)$, which means choosing the number of intermediate steps $K$ linearly with the distance in CV space. We choose this procedure for simplicity in all the experiments presented below.

\subsection{Moving in Collective Variable space}
As a first numerical experiment we demonstrate the advantage of utilizing the CV-guided moves in metastable systems by considering a bimodal target distribution in dimension $d=20$. The details of the model are described in \cref{subsec:fixedmodeswitch}.
For now it suffices to say that there is a `good' one-dimensional collective variable $z$, meaning that conditioning on~$z$, the target distributions on the remaining degrees of freedom~$x^\perp$ are unimodal. A short discussion on what constitutes good and bad collective variables can be found in \cref{sec:app_cv_choice_failure}.
As can be seen in \cref{fig:cvspaceillustration}, a naive approach with a MALA sampler on the full coordinate space fails to converge and never switches modes, even for very long chains. In contrast, a sampler that uses the symmetric algorithm with a Gaussian random walk as a proposal sampler $Q(Z_n, \widetilde{z})$ and MALA equilibration kernels switches modes regularly. This is in line with the general intuition that a random walk sampler is quickly bound to fail to sample a metastable system when the dimension grows. Meanwhile, the CV-guided samplers discussed in this paper have the potential to address metastable systems in relatively large dimensions.  

\begin{figure}
    \centering
    \includegraphics{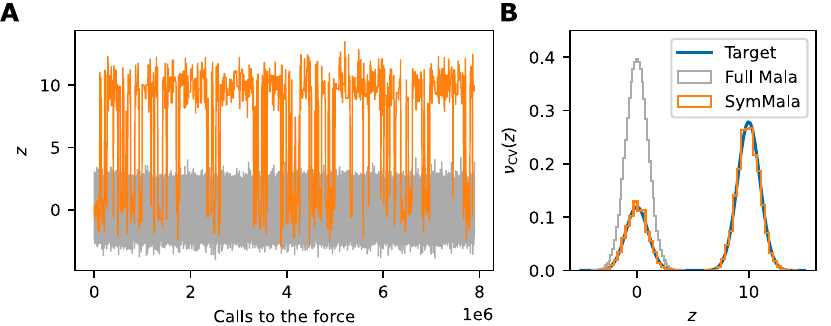}
    \caption{Sampling from a bimodal distribution in 20 dimensions with a naive MALA algorithm fails to switch modes. In contrast, SymMala with a random walk in the 1D CV space frequently switches modes, so that the mode ratios are accurately represented in the samples. The example is the same as in \cref{fig:dumbell_transition}, see \cref{subsec:fixedmodeswitch} for details. Both algorithms were tuned such that the $z$ moves get accepted $\sim$20\% of the time, corresponding, for SymMala, to a step size of $\Delta Z=6$ for the random walk (with $v=0.0075$, $\Delta t=0.4$) and, for the full MALA, to $\Delta t=0.65$. For a fair comparison, trajectories are shown against the number of calls to the force. For SymMala, the total number of $\sim 8$ million steps is the result of 5,000 steps in CV space.}
    \label{fig:cvspaceillustration}
\end{figure}

\subsection{Discretization of the dynamics}
\label{subsec:discreteizationexample}

To illustrate the point already made in \cref{sec:discretetime} that the discretization of the continuous-in-time algorithm from \cref{sec:contalgo} has to be handled with care, we consider the two-dimensional Gaussian target measure depicted in \cref{fig:simple2D_bias}. It has a probability density of $\nu(z,x^\perp)=(2\pi)^{-1}\exp(-0.5z^2 - 0.5(x^\perp-z)^2)$. We generate MCMC chains with the five different algorithms described above, which were proven to be unbiased in \cref{sec:discretetime}. We also consider the ULA and MALA versions of the stochastically symmetric algorithm where the coin flip necessary to determine whether the path starts with a CV or an $x^\perp$ update is omitted, so that the forward path always starts with a CV update. Finally, we add the symmetric algorithm where the non-reversible ULA kernel is used, but the acceptance ratio is based on the work that is appropriate for the reversible MALA kernel. In \cref{fig:simple2D_bias}, we compare the empirical marginal distributions produced by the different algorithms. As expected, for a large velocity, corresponding to CV schedules with just a handful of intermediate steps, a noticeable bias shows up in the last three algorithms. Going closer to the continuous limit by using a smaller velocity and smaller time step (corresponding to a larger number of intermediate steps, here by a factor of 65), the bias becomes unnoticeable in this simple example. 

\begin{figure}
    \centering
    \includegraphics{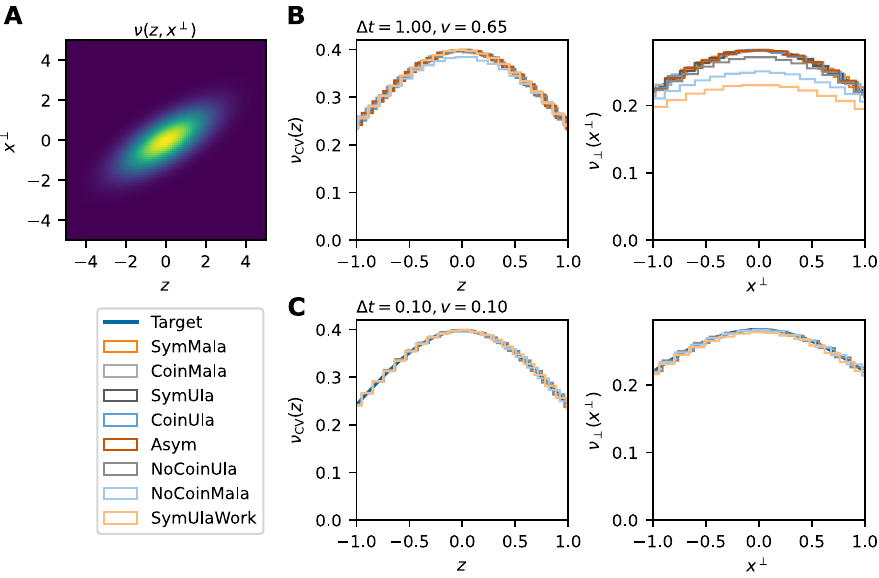}
    \caption{Discretizing the continuous-time schedule from \cref{sec:contalgo} requires special care to avoid introducing a bias, as illustrated here on a simple unimodal 2D example. The target probability density is visualized in Panel~A and described more precisely in \cref{subsec:discreteizationexample}. Panel~B: A single chain was run using 8 different versions of the algorithm for $10^7$ steps until the histograms converged. As expected, the first five algorithms (for which we proved the unbiasedness) do not show a bias and match the variance of the target, whereas the last three do not. Panel~C: Repeating the same experiments with smaller~$\Delta t$ and~$v$ towards the continuous limit, the bias is no longer visible.}
    \label{fig:simple2D_bias}
\end{figure}

\begin{figure}
    \centering    \includegraphics{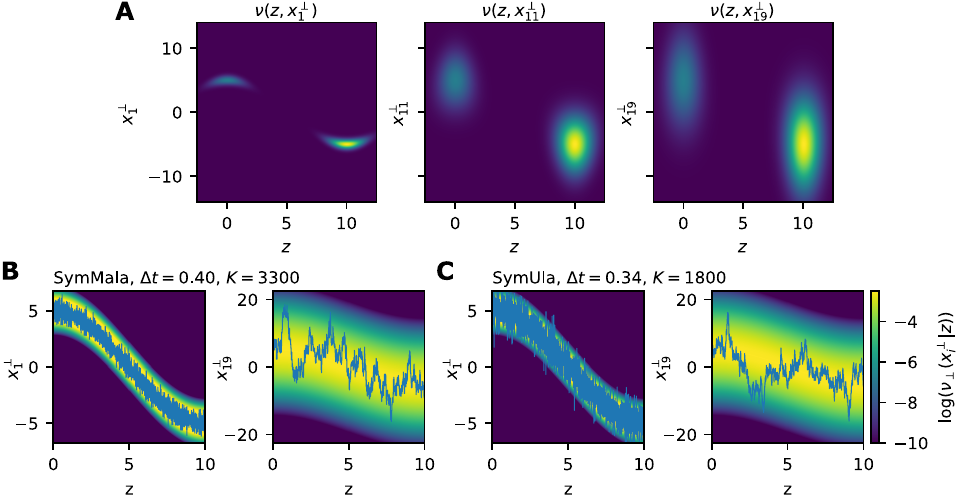}
    \caption{Panel~A: Illustration of the 20D toy model made precise in \cref{subsec:fixedmodeswitch} for a selection of coordinates $x_i^\perp$. Panels~B and~C: Representative mode switch trajectories (accepted with probability one in Step~3 of the algorithms) for two versions of the symmetric algorithm (with optimized parameters), shown for the two coordinates with smallest and largest variance (stiffest and least stiff degrees of freedom).}
    \label{fig:dumbell_transition}
\end{figure}
\begin{figure}
    \centering
\includegraphics{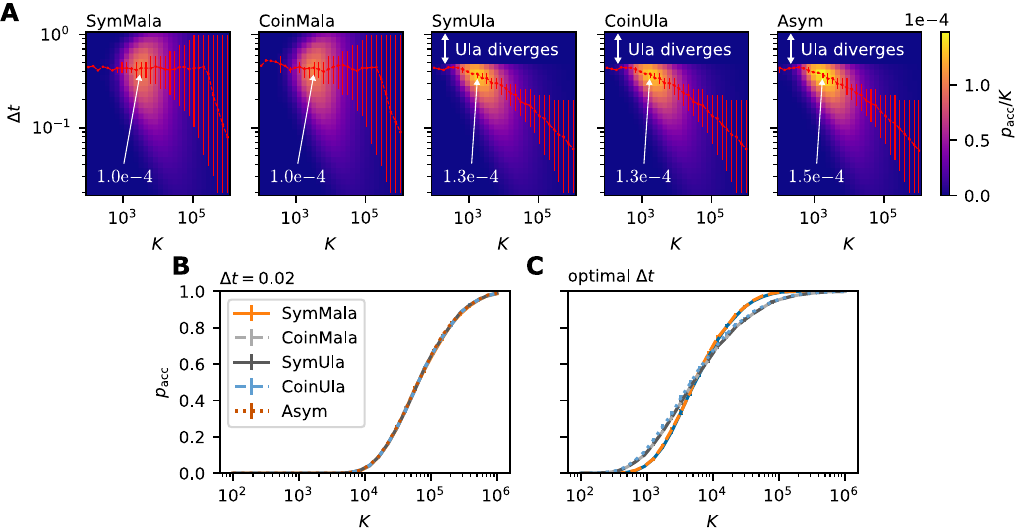}
    \caption{Targeted mode jump from $\overline z_0=0$ to $\overline z_K=b$, averaged over 10,000 realizations with $x^\perp_0 \sim p(x_0^\perp|z_0)$ for the 20D toy model described in \cref{subsec:fixedmodeswitch}. Panel~A: The red line indicates the optimal step size $\Delta t$ as a function of the number of intermediate steps $K$, averaged over 10,000 jumps. Bars indicate the range of $\Delta t$ where acceptance is at most 5\% worse than the optimum. The values leading to the optimal performance are explicitly stated and marked by an arrow. Panel~B: Acceptance rate as a function of~$K$ for fixed (small)~$\Delta t$. Panel~C: Acceptance rate as a function of~$K$ where $\Delta t$ is chosen optimally for each $K$.}
    \label{fig:jumpscatterplot}
\end{figure}

\subsection{Efficiency comparison on a fixed mode switch}
\label{subsec:fixedmodeswitch}

We now compare the performance of the five unbiased algorithms for a toy model in $d=20$ dimensions. It has a designated one-dimensional collective variable $z$ and $d-1$ additional coordinates $(x^\perp_i)_{1\leq i\leq d-1}$. The distribution of~$z$ is given by a Gaussian mixture and the conditional distribution of the remaining coordinates is a Gaussian with a mean depending on $z$. More precisely,
\begin{align}
    \nu(\mathrm{d}z,\mathrm{d}x^\perp) &= \nu_{\rm CV}(\mathrm{d}z) \nu_\perp(\mathrm{d}x^\perp|z), \notag\\
    \nu_{\rm CV}(\mathrm{d}z) & = \frac{1}{\sqrt{2\pi}} \left(w \, \mathrm{e}^{-z^2/2} + (1-w)\, \mathrm{e}^{-(z-b)^2/2}\right)\mathrm{d}z,
    \label{eq:z_marginal}\\
    \nu_\perp(\mathrm{d}x^\perp|z) & = (2\pi )^{-(d-1)/2} \mathrm{det}(\Sigma)^{-1/2} \exp \left(-{\frac {1}{2}}\left({x^\perp }-\mu(z)\right)^{\top}\Sigma ^{-1}\left({x^\perp }-\mu(z)\right)\right) \, \mathrm{d}x^\perp.\notag
\end{align}
Here, $\mu(z)=\cos(z/(b\pi)) \cdot b/2$ is the conditional mean function and $\Sigma$ a diagonal covariance matrix with ${{\Sigma }}_{ii}=\sigma_i^2$. The weight of the first mode is~$w=0.3$, while the distance between the modes is~$b=10$. The values of $\sigma_i$ were chosen from an evenly spaced grid in the interval $[0.5, 5]$. The target measure is illustrated in \cref{fig:dumbell_transition}, where a selection of marginalized measures over two coordinates is shown. Here and in some instances in the following, we use the notation $\nu(\mathrm{d}z,\mathrm{d}x^\perp_i)$ or $\nu_\perp(\mathrm{d}x^\perp_i|z)$ for a specific coordinate $x^\perp_i$, by which we always indicate the distribution that has been marginalized over the remaining degrees of freedom $\{x^\perp_j\}_{i\neq j}$.

In a first experiment, we target the specific mode transition from $z_0=0$ to $z_1=b$, where the initial orthogonal coordinates~$x^\perp$ are sampled from the conditional target distribution. A good performance criterion of the algorithm is the acceptance rate of this jump divided by the number of intermediate steps $K$, since the latter equals the number of calls to the force and hereby the cost. In panel A of \cref{fig:jumpscatterplot}, this normalized acceptance is shown for a grid of different values of~$\Delta t$ and~$K$ for the five unbiased algorithms. All the ULA-based algorithms perform equally well and have the same optimal $(\Delta t,K)$ pair. The same applies to the MALA-based algorithms. In this instance, ULA-based algorithms reached slightly higher values of normalized acceptance. 
Regarding optimal performance, the MALA-based algorithms are very robust in that they have a value of optimal $\Delta t$ which is almost independent of $K$ in the range of values considered here. For large values of $K$, the performance as a function of $\Delta t$ is very broadly peaked so that $\Delta t$ can be chosen from a large window with similar performance (see red vertical bars). In particular, this allows for an optimization of $\Delta t$ at small $K$ in an initial step, followed by optimization of $K$. In contrast, the optimal $\Delta t$ value decreases as a function of $K$ for the ULA-based algorithms and the peak of optimal performance (for fixed $K$) is less broad. Notably, these algorithms fail completely for a choice of $\Delta t$ that is too large, as the ULA dynamics will diverge. Especially for smaller values of~$K$, the optimal $\Delta t$ is close to this point and thus on the verge of stability.

For small values of $\Delta t$ close to the continuous limit, the performance of all five algorithms is on par, as can be seen from Panel~B of \cref{fig:jumpscatterplot}. In Panel~C, the acceptance rate is shown at fixed $K$ values for the respective optimal values of~$\Delta t$. For small $K$, the ULA-based algorithms perform slightly better than the MALA-based algorithms, even though the converse is true beyond $K\approx 10^4$.

Typical transition trajectories are shown in \cref{fig:dumbell_transition} for the symmetric algorithm, with MALA (Panel~B) and ULA (Panel~C) equilibration kernels, for their respective optimal values for~$(\Delta t,K)$. Since $\Delta t$ defines the step size in the Langevin dynamics, one would expect its optimal value to be constrained by the stiffest degree of freedom, i.e. the one with the smallest variance, and this appears indeed to be the case here.

\subsection{Efficiency comparison for full MCMC trajectories}

We also tested the performance of the five different samplers in generating full MCMC trajectories for the model considered in \cref{subsec:fixedmodeswitch}. We draw proposals in CV space independently with a proposal density that does not depend on the current value, $Q(z,\widetilde{z})=Q(\widetilde{z})$. The density of this proposal is of the same form as $\nu_{\rm CV}$ defined in \cref{eq:z_marginal}, but different from the target distribution, as both mixture components are proposed with the equal weight $w=0.5$. 

As mentioned before, the number of intermediate steps for each proposed move is chosen proportionally to its distance as a simple, consistent strategy, namely $K = |\widetilde{Z}_{n+1} - Z_n|/(v \Delta t)$ for a given velocity~$v>0$. As a performance criterion, we consider the inverse of the average number of calls to the force between observing mode switch. In \cref{fig:modeswitchcost}B, this performance is shown for the five algorithms for different pairs of $(v,\Delta t)$. The $x$-axis represents the number of steps for the jump from one mode center to the other to make the plots comparable with \cref{fig:jumpscatterplot}A and indeed the pictures are almost identical. The ULA and MALA based algorithms perform equally well among each other and there is a slight advantage for ULA, although the same issue as in \cref{subsec:fixedmodeswitch} applies here: if $\Delta t$ is chosen too large for ULA, the ULA-based algorithms become unstable.

As illustrated in Panel~A of \cref{fig:modeswitchcost} where marginal distributions are displayed, the samplers manage to accurately sample from the target distribution with correct weights and correct for the mismatch between the proposal and the target distribution. 

\begin{figure}
    \centering
    \includegraphics{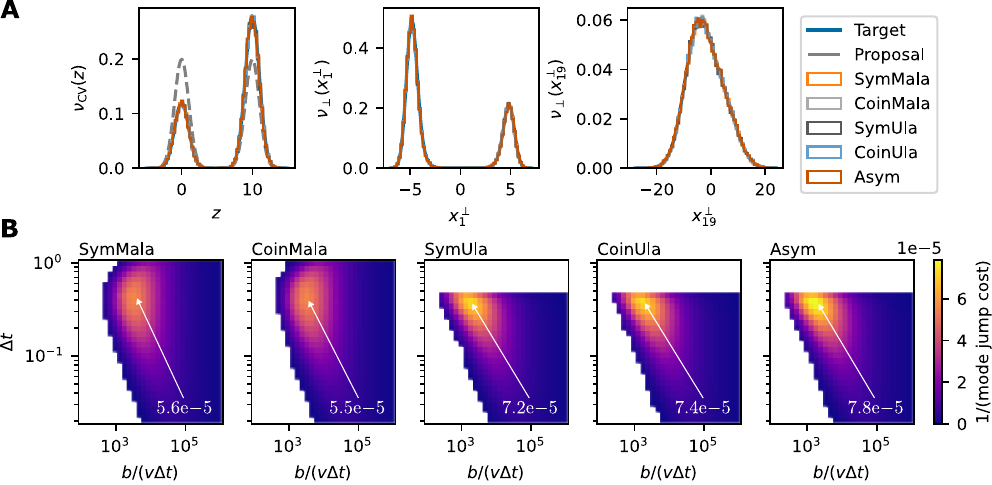}
    \caption{Comparison of the algorithms' performances for full MCMC chains on the 20D target from \cref{subsec:fixedmodeswitch}, always initialized at $z=0$ and $x^\perp$ given by the conditional mean function $\mu(z=0)$. Panel~A: Marginal distributions with optimal $\Delta t$ and $K$ for each algorithm. Histograms are averaged over 8 chains of length 20,000. Panel~B: Inverse cost of mode switch  (average number of MALA/ULA steps between mode switches) as $\Delta t$ and $v$ vary. For comparison with \cref{fig:jumpscatterplot}A, the variation of $v$ along the $x$-axis is converted to  $b/(v \Delta t)$, which represents the number of steps for a jump from one mode to the other for the marginal distribution~\cref{eq:z_marginal}.
   White space indicates that at least one of eight chains of length 10,000 did not change mode at all. The values leading to the optimal performance are explicitly stated and marked by an arrow.}
    \label{fig:modeswitchcost}
\end{figure}

\subsection[Phi4 Model]{$\phi^4$ Model}\label{sec:phi4}
Finally, we demonstrate the CV-guided MCMCs on a field model from statistical physics, the 1D-$\phi^4$ model. It consists of $N$ continuous degrees of freedom $\{\phi_i\}_{i=1}^N$ on a lattice and can be seen as a continuous version of the Ising model modeling ferromagnetism in solids. The target distribution is the Boltzmann distribution $\nu({\rm d}\phi)$ from~\eqref{eq:boltzmannmeasure} with the energy (using a simplified version of \cite{gabrieAdaptiveMonteCarlo2022})
\begin{align}
   \beta V(\phi) = \frac{a\beta N}{2} \sum_{i=1}^{N+1}(\phi_i - \phi_{i-1})^2 + \frac{\beta}{4 a N}\sum_{i=1}^N (1-\phi_i^2)^2,
    \label{eq:phi4}
\end{align}
with Dirichlet boundary conditions $\phi_0=\phi_{N+1} = 0$. For sufficiently low temperatures~$1/\beta$, this system has two well-separated modes characterized by positive and negative values of the magnetization~$M=\sum_{i=1}^N \phi_i / N$. We place ourselves in this bimodal phase by setting~$\beta=10$ and~$a=0.1$. In many practical settings, one will be interested in systems with an external field $h$ adding a term $h\sum_{i=1}^N \phi_i$ to the potential $V(\phi)$.
The special case of $h=0$ that we consider, however, is particularly insightful since it has sign-inversion symmetry, $\nu(\phi)=\nu(-\phi)$, so that the weights of the two modes are known to be exactly equal. This offers an easily accessible checkmark to judge the quality of samples obtained from any MCMC algorithm.

The magnetization is the obvious choice of collective variable for the algorithms presented here. A natural way to obtain it is using a Haar wavelet decomposition of the fields (see \cref{sec:wavelet}). We applied the symmetric algorithm with MALA equilibration kernels (SymMala) on the coordinates in the wavelet basis with $M$ as the collective variable. 

In \cref{fig:phi4}.A, accepted transitions from one mode to the other are shown by displaying their state at intermediary values of the magnetization $M$ along the Jarzynski--Crooks path. Note here that the intermediate target measures $\nu_\perp(x^\perp|M)$ are bimodal, so that two possible types of paths between the two modes coexist. Physically speaking, a domain wall between areas of positive and negative orientation is created and they can be arranged either as $(-|+)$ or as $(+|-)$, colored respectively in red and blue.
However, the magnetization is a good index of the two possible paths so that the bimodality does not pose a fundamental problem in effectuating transitions.

As a basic sanity check of the algorithm, we also simulate full MCMC trajectories using SymMala. As a `ground truth' for comparison, we run a simple MALA trajectory (staying within one mode) over 200,000 time steps with $\Delta t=0.003$ and symmetrize the resulting samples by hand, flipping the sign of each configuration with probability $0.5$. 
For the SymMala algorithm, we propose configurations in CV space in an independent fashion so that the density does not depend on the current value, $Q(z,\widetilde{z})=Q(\widetilde{z})$. We use a Gaussian mixture model with Gaussians centered at $\pm M^*$ with variance $\sigma^2$. Both parameters are empirically estimated using the simple MALA trajectory within one mode and we obtain $M^* \approx 0.74$ and $\sigma^2\approx 0.01$. Since we aim to show that the algorithm accurately corrects for any mismatch between proposal and target distribution, we choose the weights of the Gaussian mixture as $0.3$ and $0.7$. The resulting distributions are shown in Panel~B of \cref{fig:phi4} with the marginal distributions of the magnetization and the mid-point amplitude $\phi_{N/2}$, and indeed the obtained samples accurately attribute equal weight to the two modes.
\begin{figure}
    \centering
    \includegraphics{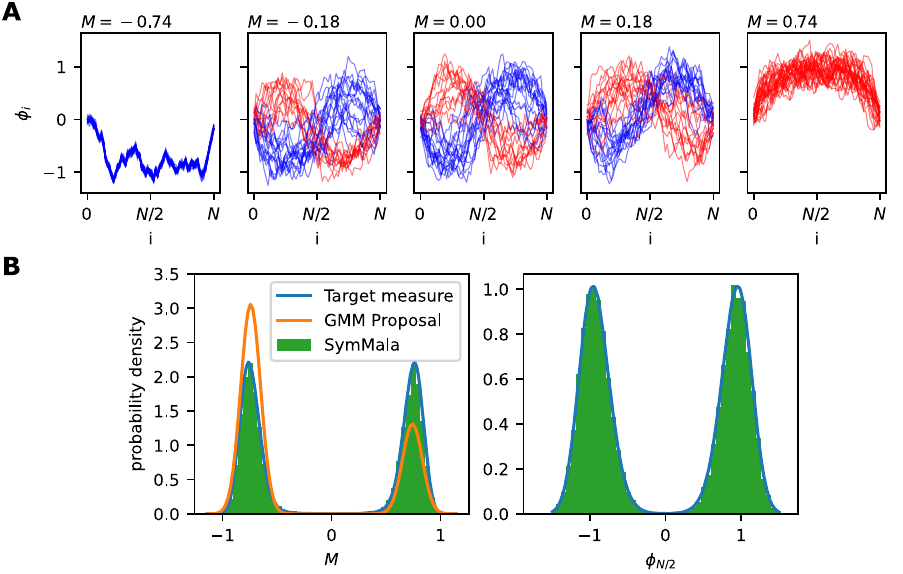}  
    \caption{1D $\phi^4$ model from \cref{eq:phi4} with $N=64$, $\beta=10$, $a = 0.1$. The magnetization $M$ is used as a collective variable. Panel~A: Transition trajectories between the mode centers with the SymMala algorithm with $K=20,000$, $\Delta t = 0.001$. A total of 1,000 proposal trajectories was simulated, all with equal initial field configuration $\phi$, which is shown in the leftmost plot. Of these constructed paths, the 30 with the highest acceptance probability are shown here.
    The coloring is chosen based on the sign of the average value in the left half, $\sum_{i=1}^{N/2}\phi_i / (N/2)$. Panel~B: Marginal distributions for SymMala with the same $\Delta t$ and $v=0.074$ (corresponding to the same number $K$ for a mode switch). The CV sampler is an iid Gaussian mixture model (GMM) with weights~0.3 and~0.7. The resulting histogram is averaged over 8 equally initialized chains with 10,000 steps. The target distribution shown for comparison was generated using samples from a simple MALA sampler symmetrized by hand.}
    \label{fig:phi4}
\end{figure}

\section{Conclusion and perspectives}

The present work investigates how to leverage a proposal Markov kernel in a CV space to sample a possibly metastable Boltzmann distribution. The general framework relies on Jarzynski--Crooks moves which combine a scheduled path in CV space and equilibration kernels for the remaining degrees of freedom to form a forward transition path towards a newly proposed CV value associated with a work value. We prove the reversibility of the resulting Markov chain by considering first continuous-in-time Jarzynski--Crooks moves. Moving on to discrete-in-time paths, we show that various backward paths can be considered in the acceptance criteria and describe the connections with previous work \cite{nilmeier2011nonequilibrium,chen2015generalized,athenesComputationChemicalPotential2002,neal2005taking}. 
We prove the reversibility of three discrete algorithms. 

Through numerical experiments, we evidence that lack of care in defining the relationship between the forward and backward paths can result in biased algorithms. Comparing different versions of the discrete algorithms on a high-dimensional metastable system, we find that all versions perform comparably. A slight advantage is found for using MALA as an equilibration kernel, as the performances of the algorithms are less sensitive to the choice of the discretization time step. We also demonstrate the ability of the Jarzynski--Crooks moves to power mode-to-mode jump on a bi-modal $\phi^4$ field theory where two channels of transitions co-exist.  

Throughout this work, we assumed that a good collective variable along with a good proposal kernel in CV space were available. 
Note that in general there will be a trade-off in the choice of dimensionality of the collective variable. For CV spaces of very small dimension, it is easy to build a good CV sampler (Step~1 of the presented algorithms could rely on a Gaussian random walk for instance). However, it will be hard to equilibrate the remaining coordinates and thus to build a path toward a new CV value with good acceptance. Conversely, for CV spaces of more than a few dimensions, the equilibration of the remaining coordinates $x^\perp$ conditioned on the CV will be easy, but building a good sampler in CV space will be hard.
In this case, one would need a more sophisticated strategy to explore the various metastable basins. As already mentioned, this is the focus of the parallel work \cite{samuel}, employing the framework of the present work with a normalizing flow to propose moves in CV space with an adaptive strategy allowing for the learning of the generative model along the sampling. It would also be interesting to explore other CV sampling techniques, based for example on free energy adaptive biasing, temperature accelerated Molecular Dynamics, multiproposal versions of the Metropolis Hastings algorithm, etc, see~\cite{athenesFreeEnergyReconstruction2010} for results in these directions.

Another question brought up by the present work is the optimization of the computational cost of the Jarzynski--Crooks paths.
We considered a linear interpolation schedule in CV space and Langevin-based equilibration kernels for the remaining degrees of freedom. These choices, made for simplicity, are not necessary for reversibility and others could be relevant. 
This is an interesting direction that we leave for future study. 

\paragraph{Acknowledgements.} We would like to thank Alessandro Laio (SISSA) and Samuel Tamagnone (SISSA) for insightful discussions. We would like to thank Andi Wang (University of Warwick) and Guanyang Wang (Rutgers University) for enlightening discussions concerning Section~\ref{sec:prototype_proof}. We would like also to thank the anonymous referee as well as Arnaud Doucet for pointing out the reference~\cite{neal2005taking} to us. The works of T.L. and G.S. are partially funded by the European Research Council (ERC) under the European Union’s Horizon 2020 research and innovation programme (project EMC2, grant agreement No 810367), and from the Agence Nationale de la Recherche
through the grants ANR-19-CE40-0010-01 (QuAMProcs) and ANR-21-CE40-0006 (SINEQ). The works of C.S. and M.G. are supported by Hi! Paris.

A minimal running implementation of the algorithms presented in this work is publicly available at\\ \url{https://github.com/cschoenle/mc_jarz_sampling}.

\bibliographystyle{alpha}
\bibliography{sample}

\appendix

%---------------------------------------------------------------
\section{Proofs}

\subsection{Proof of Theorem \ref{th:reversibility}}
\label{proof:th:reversibility}

Let us start by introducing some notation. When needed, we write $\mathcal{W}^{(z,\widetilde z)}$ instead of $\mathcal{W}$, the superscript indicating explicitly the dependence of the law of $\mathcal{W}_{n+1}$ on $(Z_n,\widetilde Z_{n+1})=(z,\widetilde z)$. Likewise, we denote below when needed $(\overline X^{(z,\widetilde z)}_t)$ and $z^{(z,\widetilde z)}(t)$ to explicitly indicate the dependence on~$(Z_n,\widetilde Z_{n+1})=(z,\widetilde z)$ of the associated process and schedule.

Assume that $X_n=(Z_n, X^\perp_n)$ is distributed according to $\nu$. The aim is to prove that $(X_n,X_{n+1})$ has the same law as $(X_{n+1},X_n)$. In order to study the law of $(X_n,X_{n+1})$, let us consider a bounded measurable function~$\varphi:\R^d \times \R^d \to \R$.
One has:
\begin{align*}
\E(\varphi(X_n,X_{n+1}))
&=\E\left(\varphi(X_n,\widetilde{X}_{n+1}) \one_{U_{n+1}\le \exp(-\beta \mathcal{W}_{n+1})\frac{Q(\widetilde{Z}_{n+1}, Z_n)}{Q(Z_n,\widetilde{Z}_{n+1})}}\right)\\
&\quad + \E\left(\varphi(X_{n},X_n)\one_{U_{n+1}> \exp(-\beta \mathcal{W}_{n+1})\frac{Q(\widetilde{Z}_{n+1}, Z_n)}{Q(Z_n,\widetilde{Z}_{n+1})}}\right) \\
&= \E\left(\varphi(X_n,\widetilde{X}_{n+1}) 1 \wedge \left[ \exp(-\beta \mathcal{W}_{n+1})\frac{Q(\widetilde{Z}_{n+1}, Z_n)}{Q(Z_n,\widetilde{Z}_{n+1})}\right]\right)\\
&\quad + \E\left(\varphi(X_n, X_{n}) \left[ 1-  1 \wedge \left(\exp(-\beta \mathcal{W}_{n+1})\frac{Q(\widetilde{Z}_{n+1}, Z_n)}{Q(Z_n,\widetilde{Z}_{n+1})}\right)\right]\right).
\end{align*}
By conditioning on $(Z_n,\widetilde{Z}_{n+1})$, one gets
\begin{align}
  &\E\left(\varphi(X_n,\widetilde{X}_{n+1}) 1 \wedge \left[ \exp(-\beta \mathcal{W}_{n+1})\frac{Q(\widetilde{Z}_{n+1}, Z_n)}{Q(Z_n,\widetilde{Z}_{n+1})}\right]\right) \label{eq:term_acceptance} \\
  &=\int_{z, \widetilde z} \E\left[\varphi(X_n,\widetilde{X}_{n+1}) 1 \wedge \left( \exp(-\beta \mathcal{W}_{n+1})\frac{Q(\widetilde{Z}_{n+1}, Z_n)}{Q(Z_n,\widetilde{Z}_{n+1})}\right)\middle| (Z_n,\widetilde Z_{n+1})=(z,\widetilde z) \right] \exp(-\beta F(z)) Q(z,\widetilde z) \, {\rm d}z \, {\rm d}\widetilde z \notag\\
  &=\int_{z, \widetilde z} \E\left(\varphi(X_n,\widetilde{X}_{n+1}) \left[ Q(Z_n,\widetilde{Z}_{n+1}) \wedge \left( \exp(-\beta \mathcal{W}_{n+1}) Q(\widetilde{Z}_{n+1},Z_n)\right)\right]\middle| (Z_n,\widetilde Z_{n+1})=(z,\widetilde z) \right) \exp(-\beta F(z))  \, {\rm d}z \, {\rm d}\widetilde z.\notag
\end{align}
Notice that in the previous expectation, $X_n=(Z_n,X_n^\perp)$, with $X_n^\perp$  distributed according to the conditional probability measure~$\nu_{\perp}({\rm d} x^\perp|Z_n)$. 

Using Proposition~\ref{prop:Jarz-Crooks}, one gets, using the forward process $(\overline X_t)_{0 \le t \le T}$ defined by~\eqref{eq:forward_process}--\eqref{eq:Jarz_IC_eq}, as well as the backward process $(\overline X^{\rm b}_t)_{0 \le t \le T}$ defined by~\eqref{eq:backward_process}-\eqref{eq:eq:Jarz_ICb_eq}:
\begin{align*}
  &\E\left(\varphi(X_n,\widetilde{X}_{n+1}) \left[ Q(Z_n,\widetilde{Z}_{n+1}) \wedge \left( \exp(-\beta \mathcal{W}_{n+1}) Q(\widetilde{Z}_{n+1},Z_n)\right)\right]\middle| (Z_n,\widetilde Z_{n+1})=(z,\widetilde z) \right)\\
  &=\E\left(\varphi(\overline X_0,\overline X_T) \left[ Q(z,\widetilde z) \wedge \left( \exp\left(-\beta \int_0^T \left\langle \nabla_z V(\overline X_t), z'(t)\right\rangle \, {\rm d}t\right) Q(\widetilde z, z)\right)\right]\right)\\
  &=\E\left(\varphi(\overline X_0,\overline X_T) \left[\left( \exp\left(\beta \int_0^T    \left\langle \nabla_z V(\overline X_t), z'(t)\right\rangle \, {\rm d}t\right)  Q(z,\widetilde z) \right)\wedge  Q(\widetilde z, z)\right] \exp\left(-\beta \int_0^T \left\langle \nabla_z V(\overline X_t), z'(t)\right\rangle \, {\rm d}t\right) \right)\\
  &=\E\left(\varphi(\overline X^{\rm b}_T,\overline X^{\rm b}_0) \left[ \left(\exp\left(\beta \int_0^T \left\langle \nabla_z V(\overline X^{\rm b}_{T-t}), z'(t)\right\rangle\, {\rm d}t\right) Q(z,\widetilde z) \right)\wedge   Q(\widetilde z, z)\right]\right) \exp(-\beta (F(\widetilde z)-F(z)))\\
  &=\E\left(\varphi(\overline X^{\rm b}_T,\overline X^{\rm b}_0) \left[ \left(\exp\left(\beta \int_0^T \left\langle \nabla_z V(\overline X^{\rm b}_{t}), z'(T-t)\right\rangle\, {\rm d}t\right) Q(z,\widetilde z) \right)\wedge   Q(\widetilde z, z)\right]\right) \exp(-\beta (F(\widetilde z)-F(z))).
\end{align*}
Under the assumption~\eqref{eq:hyp_rev_z}, the backward process~$(\overline X^{\rm b}_t)_{0 \le t \le T}$ in~\eqref{eq:backward_process} for the schedule associated with the end points~$(z,\widetilde z)$ has the same law as the forward process~$(\overline X_t)_{0 \le t \le T}$ in~\eqref{eq:forward_process} for the schedule associated with the end points~$(\widetilde z,z)$. This shows that (notice that $(z^{(z,\widetilde z)})'(T-t) =- (z^{(\widetilde z,z)})'(t)$ by deriving~\eqref{eq:hyp_rev_z})
\begin{align*}
  &    \E\left(\varphi(\overline X^{\rm b}_T,\overline X^{\rm b}_0) \left[\left( \exp\left(\beta \int_0^T \left\langle \nabla_z V(\overline X^{\rm b}_{t}), z'(T-t)\right\rangle\, {\rm d}t\right) Q(z,\widetilde z) \right) \wedge   Q(\widetilde z, z)\right]\right)\\
  &=\E\left(\varphi(\overline X^{\widetilde z,z}_T,\overline X^{\widetilde z,z}_0) \left[\left( \exp\left(- \beta \int_0^T \left\langle \nabla_z V(\overline X^{\widetilde z,z}_{t}), (z^{\widetilde z,z})'(t)\right\rangle\, {\rm d}t\right) Q(z,\widetilde z) \right)\wedge   Q(\widetilde z, z)\right]\right)\\
  &=\E\left(\varphi(\overline X^{\widetilde z,z}_T,\overline X^{\widetilde z,z}_0) \left[ \left( \exp\left(- \beta \mathcal W^{(\widetilde z,z)}\right) Q(z,\widetilde z) \right)\wedge  Q(\widetilde z, z)\right]\right).
\end{align*}
Therefore, one obtains
\begin{align*}
    &\E\left(\varphi(X_n,\widetilde{X}_{n+1}) \left[ Q(Z_n,\widetilde{Z}_{n+1}) \wedge \left( \exp(-\beta \mathcal{W}_{n+1}) Q(\widetilde{Z}_{n+1},Z_n)\right)\right]\middle| (Z_n,\widetilde Z_{n+1})=(z,\widetilde z) \right)\\
&= \E\left(\varphi(\widetilde{X}_{n+1},X_n) \left[ \left(\exp(-\beta \mathcal{W}^{(Z_n,\widetilde Z_{n+1})}) Q(\widetilde{Z}_{n+1},Z_n) \right) \wedge   Q(Z_n,\widetilde{Z}_{n+1})\right] \middle| (Z_n,\widetilde Z_{n+1})=(\widetilde z,z) \right)\\
& \quad \times \exp(-\beta [F(\widetilde z)-F(z)]).
\end{align*}
Using this equality in~\eqref{eq:term_acceptance}, one gets:
    \begin{align*}
      &\E\left(\varphi(X_n,\widetilde{X}_{n+1}) \left[ 1 \wedge \left( \exp(-\beta \mathcal{W}_{n+1})\frac{Q(\widetilde{Z}_{n+1}, Z_n)}{Q(Z_n,\widetilde{Z}_{n+1})}\right) \right] \right)  \\
&=\int_{z, \widetilde z} \E\left(\varphi(\widetilde{X}_{n+1},X_n) \left[ \left(\exp(-\beta \mathcal{W}^{(Z_n,\widetilde Z_{n+1})} )Q(\widetilde{Z}_{n+1},Z_n) \right) \wedge   Q(Z_n,\widetilde{Z}_{n+1})\right] \middle| (Z_n,\widetilde Z_{n+1})=(\widetilde z,z) \right) \exp(-\beta F(\widetilde z))  \, {\rm d}z \, {\rm d} \widetilde z\\
&=\int_{z, \widetilde z} \E\left(\varphi(\widetilde{X}_{n+1},X_n) \left[ \left(\exp(-\beta \mathcal{W}^{(Z_n,\widetilde Z_{n+1})} )\frac{Q(\widetilde{Z}_{n+1},Z_n)}{Q(Z_n,\widetilde{Z}_{n+1})} \right) \wedge 1   \right]\middle| (Z_n,\widetilde Z_{n+1})=(\widetilde z,z) \right) \exp(-\beta F(\widetilde z)) Q(\widetilde z,z) \, {\rm d}z \, {\rm d} \widetilde z\\
&=\E\left(\varphi(\widetilde{X}_{n+1},X_n) \left[ 1 \wedge \left(\exp(-\beta \mathcal{W}_{n+1} )\frac{Q(\widetilde{Z}_{n+1},Z_n)}{Q(Z_n,\widetilde{Z}_{n+1})} \right) \right]\right).
\end{align*}
Finally,
    \begin{align*}
\E(\varphi(X_n,X_{n+1}))
&= \E\left(\varphi(\widetilde{X}_{n+1},X_n) 1 \wedge \left( \exp(-\beta \mathcal{W}_{n+1}) \frac{Q(\widetilde{Z}_{n+1}, Z_n)}{Q(Z_n,\widetilde{Z}_{n+1})}\right)\right)\\
&\quad + \E\left(\varphi(X_n, X_{n}) \left( 1-  1 \wedge \left(\exp(-\beta \mathcal{W}_{n+1})\frac{Q(\widetilde{Z}_{n+1}, Z_n)}{Q(Z_n,\widetilde{Z}_{n+1})}\right)\right)\right)\\
&=\E(\varphi(X_{n+1},X_n)),
\end{align*}
which concludes the proof.

\subsection{Proof of Proposition~\ref{eq:PropNilmeier}}
\label{proof:eq:PropNilmeier}

Assume that $X_n$ is distributed according to $\nu$, and consider the law of~$(X_n,X_{n+1})$, where~$X_{n+1}$ is obtained from~$X_n$ by the algorithm of Section~\ref{sec:Nilmeier}. Recall the notation $X_{n+1}=(Z_{n+1},X^\perp_{n+1})$ and $\widetilde X_{n+1}=(\widetilde Z_{n+1},\widetilde X^\perp_{n+1})$. For any bounded measurable function $\varphi:\R^d \times \R^d \to \R$, one has
\begin{align*}
&  \E(\varphi(X_n,X_{n+1}))\\
&= \E\left[\varphi(X_n,\widetilde X_{n+1}) 1 \wedge \left( \frac{\exp(-\beta V(\widetilde Z_{n+1},\widetilde X_{n+1}^\perp)) Q(\widetilde{Z}_{n+1}, Z_n) \prod_{k=1}^K P_{\overline z_k(\widetilde Z_{n+1},Z_n)}(\overline X^\perp_{K-k+1},\overline X^\perp_{K-k})}{\exp(-\beta V(Z_n,X_n^\perp)) Q(Z_n,\widetilde{Z}_{n+1}) \prod_{k=1}^K P_{\overline z_k(Z_n,\widetilde Z_{n+1})}(\overline X^\perp_{k-1},\overline X^\perp_{k})}\right) \right]\\
&  + \E\left[\varphi(X_n, X_{n}) \left(1- 1 \wedge \left( \frac{\exp(-\beta V(\widetilde Z_{n+1},\widetilde X_{n+1}^\perp)) Q(\widetilde{Z}_{n+1}, Z_n) \prod_{k=1}^K P_{\overline z_k(\widetilde Z_{n+1},Z_n)}(\overline X^\perp_{K-k+1},\overline X^\perp_{K-k})}{\exp(-\beta V(Z_n,X_n^\perp)) Q(Z_n,\widetilde{Z}_{n+1}) \prod_{k=1}^K P_{\overline z_k(Z_n,\widetilde Z_{n+1})}(\overline X^\perp_{k-1},\overline X^\perp_{k})}\right) \right)\right] \\
&=  \int_{z_0,\overline x^\perp_0,z_K,\overline x^\perp_K} \varphi((z_0,\overline x^\perp_0),(z_K,\overline x^\perp_K)) \mathcal Q((z_0,\overline x^\perp_0),(z_K,\overline x^\perp_K)) \, {\rm d}z_0 \, {\rm d}\overline x^\perp_0 \, {\rm d}z_K \, {\rm d}\overline x^\perp_K \\
&\ \ +\int_{z_0,\overline x^\perp_0,z_K,\overline x^\perp_K} \varphi((z_0,\overline x^\perp_0),(z_0,\overline x^\perp_0)) \left( \nu(z_0,\overline x_0^\perp) - \mathcal Q((z_0,\overline x^\perp_0),(z_K,\overline x^\perp_K))\right) {\rm d}z_0 \, {\rm d}\overline x^\perp_0 \, {\rm d}z_K \, {\rm d}\overline x^\perp_K,
\end{align*}
where $\mathcal Q$ is the density of the couple $(X_n,X_{n+1})$ when the move is accepted. More precisely (notice the identification for the dummy variables $\overline x^\perp_0 \to X_n^\perp$, $\overline x^\perp_K \to \widetilde X_{n+1}^\perp$, $z_0 \to Z_n$, $z_K \to \widetilde Z_{n+1}$),
\begin{align*}
&\mathcal Q((z_0,\overline x^\perp_0),(z_K,\overline x^\perp_K))= \partition^{-1} \exp(-\beta V(z_0,\overline x_0^\perp)) Q(z_0,z_K) \\
&\times \int_{(\overline x^\perp _k)_{1 \le k \le K-1}} 1 \wedge \left(\frac{\exp(-\beta V(z_K,\overline x_K^\perp)) Q(z_K,z_0) \prod_{k=1}^{K} P_{\overline z_k(z_K,z_0)}(\overline x^\perp_{K-k+1}, \overline x^\perp_{K-k})}{\exp(-\beta V(z_0,\overline x_0^\perp)) Q(z_0,z_K) \prod_{k=1}^{K} P_{\overline z_k(z_0,z_K)}(\overline x^\perp_{k-1}, \overline x^\perp_k)} \right) \prod_{k=1}^{K} P_{\overline z_k(z_0,z_K)}(\overline x^\perp_{k-1}, \overline x^\perp_k) \prod_{k=1}^{K-1} {\rm d} \overline x^\perp_k\\
&=\partition^{-1}
\int_{(\overline x^\perp _k)_{1 \le k \le K-1}} \left( \exp(-\beta V(z_0,\overline x_0^\perp)) Q(z_0,z_K) \prod_{k=1}^{K} P_{\overline z_k(z_0,z_K)}(\overline x^\perp_{k-1}, \overline x^\perp_k)\right)  \\
&\qquad \qquad  \qquad \qquad  \wedge \left( \exp(-\beta V(z_K,\overline x_K^\perp)) Q(z_K,z_0) \prod_{k=1}^{K} P_{\overline z_k(z_K,z_0)}(\overline x^\perp_{K-k+1}, \overline x^\perp_{K-k}) \right) \prod_{k=1}^{K-1} {\rm d} \overline x^\perp_k.
\end{align*}
By changing the variables inside the integral as~$(\overline x^\perp _k)_{1 \le k \le K-1} \mapsto (\overline x^\perp _{K-k})_{1 \le k \le K-1}$ and exchanging places between the two term of the minimum, we see that
$$\mathcal Q((z_0,\overline x^\perp_0),(z_K,\overline x^\perp_K))=\mathcal Q((z_K,\overline x^\perp_K),(z_0,\overline x^\perp_0)),$$
which indeed proves the reversibility of the chain: for any bounded measurable function~$\varphi$,
$$\E(\varphi(X_n,X_{n+1}))=\E(\varphi(X_{n+1},X_n)).$$

\subsection{Proof of Proposition \ref{eq:PropChenRoux}}
\label{proof:eq:PropChenRoux}

Assume that $X_n$ is distributed according to $\nu$, and consider the law of $(X_n,X_{n+1})$, where $X_{n+1}$ is obtained from $X_n$ by the algorithm of Section~\ref{sec:ChenRoux}. Recall the notation $X_{n+1}=(Z_{n+1},X^\perp_{n+1})$ and $\widetilde X_{n+1}=(\widetilde Z_{n+1},\widetilde X^\perp_{n+1})$. For any bounded measurable function $\varphi:\R^d \times \R^d \to \R$, one has
\begin{align*}
&  \E(\varphi(X_n,X_{n+1}))\\
&= \E\left[\varphi(X_n,\widetilde X_{n+1}) 1 \wedge \left( \frac{\exp(-\beta V(\widetilde Z_{n+1},\widetilde X_{n+1}^\perp)) Q(\widetilde{Z}_{n+1}, Z_n) \prod_{k=0}^{K-1} P_{\overline z_{k+1-C_n}(\widetilde Z_{n+1}, Z_n )}(\overline X^\perp_{K-k},\overline X^\perp_{K-k-1})}{\exp(-\beta V(Z_n,X_n^\perp)) Q(Z_n,\widetilde{Z}_{n+1}) \prod_{k=1}^{K} P_{\overline z_{k+C_n-1} (Z_n,\widetilde Z_{n+1})}(\overline X^\perp_{k-1},\overline X^\perp_{k})}\right) \right]\\
&  + \E\left[\varphi(X_n, X_{n}) \left(1- 1 \wedge \left( \frac{\exp(-\beta V(\widetilde Z_{n+1},\widetilde X_{n+1}^\perp)) Q(\widetilde{Z}_{n+1}, Z_n) \prod_{k=0}^{K-1} P_{\overline z_{k+1-C_n}(\widetilde Z_{n+1}, Z_n )}(\overline X^\perp_{K-k},\overline X^\perp_{K-k-1})}{\exp(-\beta V(Z_n,X_n^\perp)) Q(Z_n,\widetilde{Z}_{n+1}) \prod_{k=1}^{K} P_{\overline z_{k+C_n-1} (Z_n,\widetilde Z_{n+1})}(\overline X^\perp_{k-1},\overline X^\perp_{k})}\right) \right)\right] \\
&=  \int_{z_0,\overline x^\perp_0,z_K,\overline x^\perp_K} \varphi((z_0,\overline x^\perp_0),(z_K,\overline x^\perp_K)) \mathcal Q((z_0,\overline x^\perp_0),(z_K,\overline x^\perp_K)) \, {\rm d}z_0 \, {\rm d}\overline x^\perp_0 \, {\rm d}z_K \, {\rm d}\overline x^\perp_K \\
& \ \ +\int_{z_0,\overline x^\perp_0,z_K,\overline x^\perp_K} \varphi((z_0,\overline x^\perp_0),(z_0,\overline x^\perp_0)) \left( \nu(z_0,\overline x_0^\perp) - \mathcal Q((z_0,\overline x^\perp_0),(z_K,\overline x^\perp_K))\right)  {\rm d}z_0 \, {\rm d}\overline x^\perp_0 \, {\rm d}z_K \, {\rm d}\overline x^\perp_K,
\end{align*}
where $\mathcal Q$ is the density of the couple $(X_n,X_{n+1})$ when the move is accepted. More precisely (notice the identification for the dummy variables $\overline x^\perp_0 \to X_n^\perp$, $\overline x^\perp_K \to \widetilde X_{n+1}^\perp$, $z_0 \to Z_n$, $z_K \to \widetilde Z_{n+1}$),
\begin{align*}
&\mathcal Q((z_0,\overline x^\perp_0),(z_K,\overline x^\perp_K))= \frac 1 2 \partition^{-1} \exp(-\beta V(z_0,\overline x_0^\perp)) Q(z_0,z_K) \\
&\times \int_{(\overline x^\perp _k)_{1 \le k \le K-1}} 1 \wedge \left(\frac{\exp(-\beta V(z_K,\overline x_K^\perp)) Q(z_K,z_0) \prod_{k=0}^{K-1} P_{\overline z_k(z_K,z_0)}(\overline x^\perp_{K-k}, \overline x^\perp_{K-k-1})}{\exp(-\beta V(z_0,\overline x_0^\perp)) Q(z_0,z_K) \prod_{k=1}^{K} P_{\overline z_k(z_0,z_K)}(\overline x^\perp_{k-1}, \overline x^\perp_k)} \right) \prod_{k=1}^{K} P_{\overline z_k(z_0,z_K)}(\overline x^\perp_{k-1}, \overline x^\perp_k) \prod_{k=1}^{K-1} {\rm d}\overline x^\perp_k\\
& \qquad + \frac 1 2 \partition^{-1} \exp(-\beta V(z_0,\overline x_0^\perp)) Q(z_0,z_K) \\
&\times \int_{(\overline x^\perp _k)_{1 \le k \le K-1}} 1 \wedge \left(\frac{\exp(-\beta V(z_K,\overline x_K^\perp)) Q(z_K,z_0) \prod_{k=0}^{K-1} P_{\overline z_{k+1}(z_K,z_0)}(\overline x^\perp_{K-k}, \overline x^\perp_{K-k-1})}{\exp(-\beta V(z_0,\overline x_0^\perp)) Q(z_0,z_K) \prod_{k=1}^{K} P_{\overline z_{k-1}(z_0,z_K)}(\overline x^\perp_{k-1}, \overline x^\perp_k)} \right) \prod_{k=1}^{K} P_{\overline z_{k-1}(z_0,z_K)}(\overline x^\perp_{k-1}, \overline x^\perp_k) \prod_{k=1}^{K-1} {\rm d}\overline x^\perp_k\\
&=\frac 1 2 \partition^{-1}
\int_{(\overline x^\perp _k)_{1 \le k \le K-1}} \left( \exp(-\beta V(z_0,\overline x_0^\perp)) Q(z_0,z_K) \prod_{k=1}^{K} P_{\overline z_k(z_0,z_K)}(\overline x^\perp_{k-1}, \overline x^\perp_k)\right)  \\
&\qquad \qquad  \qquad \qquad  \wedge \left( \exp(-\beta V(z_K,\overline x_K^\perp)) Q(z_K,z_0) \prod_{k=0}^{K-1} P_{\overline z_k(z_K,z_0)}(\overline x^\perp_{K-k}, \overline x^\perp_{K-k-1}) \right) \prod_{k=1}^{K-1} {\rm d}\overline x^\perp_k\\
&\quad +\frac 1 2 \partition^{-1}
\int_{(\overline x^\perp _k)_{1 \le k \le K-1}} \left( \exp(-\beta V(z_0,\overline x_0^\perp)) Q(z_0,z_K) \prod_{k=1}^{K} P_{\overline z_{k-1}(z_0,z_K)}(\overline x^\perp_{k-1}, \overline x^\perp_k)\right)  \\
&\qquad \qquad  \qquad \qquad  \wedge \left( \exp(-\beta V(z_K,\overline x_K^\perp)) Q(z_K,z_0) \prod_{k=0}^{K-1} P_{\overline z_{k+1}(z_K,z_0)}(\overline x^\perp_{K-k}, \overline x^\perp_{K-k-1}) \right) \prod_{k=1}^{K-1} {\rm d}\overline x^\perp_k.
\end{align*}
As in the previous proofs, one has
\[
\mathcal Q((z_0,\overline x^\perp_0),(z_K,\overline x^\perp_K))=\mathcal Q((z_K,\overline x^\perp_K),(z_0,\overline x^\perp_0)),
\]
To see this, consider $Q((z_K,\overline x^\perp_K),(z_0,\overline x^\perp_0))$ and change the integration variables as $(\overline x^\perp _k)_{1 \le k \le K-1} \mapsto (\overline x^\perp _{K-k})_{1 \le k \le K-1}$. Swapping the two terms in the sum and swapping the terms within each minimum, after an additional index shift in the products, one ends up with exactly the same expression as the one for $Q((z_0,\overline x^\perp_0),(z_K,\overline x^\perp_K))$ given above.
This indeed proves the reversibility of the chain: for any bounded measurable function~$\varphi$,
$$\E(\varphi(X_n,X_{n+1}))=\E(\varphi(X_{n+1},X_n)).$$

\subsection{Proof of Proposition \ref{eq:propchenrouxwork}} 
\label{proof:eq:propchenrouxwork}

We first consider the case $C_n=1$. One then has (forgetting the term $\frac{Q(\widetilde{Z}_{n+1}, Z_n)}{Q( Z_n,\widetilde{Z}_{n+1})}$ which appears in both the left and right-hand sides of~\eqref{eq:work}):
\begin{align*}
&\frac{\exp(-\beta V(\widetilde Z_{n+1},\widetilde X_{n+1}^\perp))  \prod_{k=0}^{K-1} P_{\overline z_k(\widetilde Z_{n+1}, Z_n )}(\overline X^\perp_{K-k},\overline X^\perp_{K-k-1})}{\exp(-\beta V(Z_n,X_n^\perp))  \prod_{k=1}^{K} P_{\overline z_k(Z_n,\widetilde Z_{n+1})}(\overline X^\perp_{k-1},\overline X^\perp_{k})}\\
&=   \frac{\exp(-\beta V(\widetilde Z_{n+1},\widetilde X_{n+1}^\perp))  \prod_{k=0}^{K-1} P_{\overline z_{K-k}( Z_n,\widetilde Z_{n+1} )}(\overline X^\perp_{K-k},\overline X^\perp_{K-k-1})}{\exp(-\beta V(Z_n,X_n^\perp))  \prod_{k=1}^{K} P_{\overline z_k(Z_n,\widetilde Z_{n+1})}(\overline X^\perp_{k-1},\overline X^\perp_{k})}\\
&=   \frac{\exp(-\beta V(\widetilde Z_{n+1},\widetilde X_{n+1}^\perp))  \prod_{k=1}^{K} P_{\overline z_{k}( Z_n,\widetilde Z_{n+1} )}(\overline X^\perp_{k},\overline X^\perp_{k-1})}{\exp(-\beta V(Z_n,X_n^\perp))  \prod_{k=1}^{K} P_{\overline z_k(Z_n,\widetilde Z_{n+1})}(\overline X^\perp_{k-1},\overline X^\perp_{k})}=   \frac{\exp(-\beta V(\widetilde Z_{n+1},\widetilde X_{n+1}^\perp))}{\exp(-\beta V(Z_n,X_n^\perp)) }  \prod_{k=1}^{K} \frac{P_{\overline z_{k}( Z_n,\widetilde Z_{n+1} )}(\overline X^\perp_{k},\overline X^\perp_{k-1})}{ P_{\overline z_k(Z_n,\widetilde Z_{n+1})}(\overline X^\perp_{k-1},\overline X^\perp_{k})}\\
&=   \frac{\exp(-\beta V(\overline z_K(Z_n,\widetilde Z_{n+1} ) ,\overline X_{K}^\perp))}{\exp(-\beta V(\overline z_0(Z_n,\widetilde Z_{n+1} ),\overline X_0^\perp)) }  \prod_{k=1}^{K} \frac{\exp(-\beta V(\overline z_{k}( Z_n,\widetilde Z_{n+1}),\overline X^\perp_{k-1}))}{ \exp(-\beta V(\overline z_{k}( Z_n,\widetilde Z_{n+1}),\overline X^\perp_{k}))} = \frac{\prod_{k=0}^{K-1}\exp(-\beta V(\overline z_{k+1}( Z_n,\widetilde Z_{n+1}),\overline X^\perp_{k}))}{ \prod_{k=0}^{K-1} \exp(-\beta V(\overline z_{k}( Z_n,\widetilde Z_{n+1} ),\overline X^\perp_{k}))}\\
&= \exp(-\beta {\mathcal W}^1_{n+1}),
\end{align*} 
where we used~\eqref{eq:rev_chemin_prime} in the first equality, a change of indices ($K-k \to k$) in the numerator in the second equality, and~\eqref{eq:rev_second} in the fourth one. 

Likewise, if $C_n=0$, one has
\begin{align*}
&\frac{\exp(-\beta V(\widetilde Z_{n+1},\widetilde X_{n+1}^\perp))  \prod_{k=0}^{K-1} P_{\overline z_{k+1}(\widetilde Z_{n+1}, Z_n )}(\overline X^\perp_{K-k},\overline X^\perp_{K-k-1})}{\exp(-\beta V(Z_n,X_n^\perp))  \prod_{k=1}^{K} P_{\overline z_{k-1}(Z_n,\widetilde Z_{n+1})}(\overline X^\perp_{k-1},\overline X^\perp_{k})}\\
&=   \frac{\exp(-\beta V(\widetilde Z_{n+1},\widetilde X_{n+1}^\perp))  \prod_{k=0}^{K-1} P_{\overline z_{K-k-1}( Z_n,\widetilde Z_{n+1} )}(\overline X^\perp_{K-k},\overline X^\perp_{K-k-1})}{\exp(-\beta V(Z_n,X_n^\perp))  \prod_{k=1}^{K} P_{\overline z_{k-1}(Z_n,\widetilde Z_{n+1})}(\overline X^\perp_{k-1},\overline X^\perp_{k})}\\
&=   \frac{\exp(-\beta V(\widetilde Z_{n+1},\widetilde X_{n+1}^\perp))  \prod_{k=1}^{K} P_{\overline z_{k-1}( Z_n,\widetilde Z_{n+1} )}(\overline X^\perp_{k},\overline X^\perp_{k-1})}{\exp(-\beta V(Z_n,X_n^\perp))  \prod_{k=1}^{K} P_{\overline z_{k-1}(Z_n,\widetilde Z_{n+1})}(\overline X^\perp_{k-1},\overline X^\perp_{k})}=  \frac{\exp(-\beta V(\widetilde Z_{n+1},\widetilde X_{n+1}^\perp))}{\exp(-\beta V(Z_n,X_n^\perp)) }  \prod_{k=1}^{K} \frac{P_{\overline z_{k-1}( Z_n,\widetilde Z_{n+1} )}(\overline X^\perp_{k},\overline X^\perp_{k-1})}{ P_{\overline z_{k-1}(Z_n,\widetilde Z_{n+1})}(\overline X^\perp_{k-1},\overline X^\perp_{k})}\\
&=   \frac{\exp(-\beta V(\overline z_K(Z_n,\widetilde Z_{n+1} ) ,\overline X_{K}^\perp))}{\exp(-\beta V(\overline z_0(Z_n,\widetilde Z_{n+1} ),\overline X_0^\perp)) }  \prod_{k=1}^{K} \frac{\exp(-\beta V(\overline z_{k-1}( Z_n,\widetilde Z_{n+1} ),\overline X^\perp_{k-1}))}{ \exp(-\beta V(\overline z_{k-1}( Z_n,\widetilde Z_{n+1} ),\overline X^\perp_{k}))}
=   \frac{\prod_{k=1}^{K}\exp(-\beta V(\overline z_{k}( Z_n,\widetilde Z_{n+1} ),\overline X^\perp_{k}))}{ \prod_{k=1}^{K} \exp(-\beta V(\overline z_{k-1}( Z_n,\widetilde Z_{n+1}),\overline X^\perp_{k}))}\\
&= \exp(-\beta {\mathcal W}^0_{n+1}).
\end{align*} 

To conclude the proof, let us prove~\eqref{eq:Jarz_consistency} in this setting (recall that we suppose $X^\perp_n$ at equilibrium). If $C_n=1$, for any bounded measurable function~$\varphi:\R^{d-1} \to \R$ (we omit here to indicate explicitly the dependence of $\overline z_k$ on ($Z_n,\widetilde Z_{n+1})$),
\begin{align*}
& \E\left(\varphi(\widetilde X^\perp_{n+1}) \exp(-\beta \mathcal W^1_{n+1}) \right) \\
& = \int_{(\overline x_k^\perp)_{0 \le k \le K}} \varphi(\overline x_K^\perp) \prod_{k=0}^{K-1} \exp(-\beta [V(\overline z_{k+1},\overline x_k^\perp)-V(\overline z_{k},\overline x_k^\perp)]) \frac{\exp(-\beta V(\overline z_0,\overline x_0^\perp))}{\partition \exp(-\beta F(\overline z_0))} \prod_{k=1}^{K}P_{\overline z_k}(\overline x^\perp_{k-1},{\rm d} \overline x^\perp_k) \, {\rm d} \overline x^\perp_0\\
&=\int_{(\overline x_k^\perp)_{1 \le k \le K}} \varphi(\overline x_K^\perp) \prod_{k=1}^{K-1} \exp(-\beta (V(\overline  z_{k+1},\overline x_k^\perp)-V(\overline z_{k},\overline x_k^\perp))) \int_{\overline x_0^\perp} \frac{\exp(-\beta V(\overline z_1,\overline x_0^\perp))}{\partition \exp(-\beta F(\overline z_0))}  P_{\overline z_1}(\overline x^\perp_{0},{\rm d}\overline x^\perp_1) \, {\rm d}\overline x^\perp_0\prod_{k=2}^{K}P_{\overline z_k}(\overline x^\perp_{k-1},{\rm d}\overline x^\perp_k) \\
&=\int_{(\overline x_k^\perp)_{1 \le k \le K}} \varphi(\overline x_K^\perp) \prod_{k=1}^{K-1} \exp(-\beta (V(\overline z_{k+1},\overline x_k^\perp)-V(\overline z_{k},\overline x_k^\perp))) \frac{\exp(-\beta V(\overline z_1,\overline x_1^\perp))}{\partition \exp(-\beta F(\overline z_0))} \, {\rm d} \overline x^\perp_1 \prod_{k=2}^{K}P_{\overline z_k}(\overline x^\perp_{k-1},{\rm d} \overline x^\perp_k), 
\end{align*}
where we used the invariance of $\exp(-\beta V(\overline z_1,x^\perp)) {\rm d}x^\perp / \exp(-\beta F(\overline z_1))$ by the kernel $P_{\overline z_1}(x^\perp,{\rm d}\overline x^\perp)$. By iterating the argument, one obtains:
\begin{align*}
  & \E\left(\varphi(\widetilde X^\perp_{n+1}) \exp(-\beta \mathcal W^1_{n+1}) \right) \\
  &=\int_{(\overline x_k^\perp)_{K-1 \le k \le K}} \varphi(\overline x_K^\perp)  \exp(-\beta [V(\overline z_{K},\overline x_{K-1}^\perp)-V(\overline z_{K-1},\overline x_{K-1}^\perp)]) \frac{\exp(-\beta V(\overline z_{K-1},\overline x_{K-1}^\perp))}{\partition \exp(-\beta F(\overline z_0))} \,{\rm d} \overline x^\perp_{K-1} P_{\overline z_K}(\overline x^\perp_{K-1},{\rm d} \overline x^\perp_K)\\
  &=\int_{(\overline x_k^\perp)_{K-1 \le k \le K}} \varphi(\overline x_K^\perp) \frac{\exp(-\beta V(\overline z_{K},\overline x_{K-1}^\perp))}{\partition \exp(-\beta F(\overline z_0))} \,{\rm d} \overline x^\perp_{K-1} P_{\overline z_K}(\overline x^\perp_{K-1},{\rm d} \overline x^\perp_K) \\
  &=\int_{\overline x_K^\perp} \varphi(\overline x_K^\perp) \frac{\exp(-\beta V(\overline z_{K},\overline x_{K}^\perp))}{\partition \exp(-\beta F(\overline z_0))} \, {\rm d}\overline x^\perp_{K} \\
  &= \exp(-\beta (F(\overline z_K)-F(\overline z_0))) \int_{\overline x_K^\perp} \varphi(\overline x_K^\perp) \frac{\exp(-\beta V(\overline z_{K},\overline x_{K}^\perp))}{\partition \exp(-\beta F(\overline z_K))} \, {\rm d} \overline x^\perp_{K},
\end{align*}
which is exactly~\eqref{eq:Jarz_consistency}. For~$C_n=0$, 
\begin{align*}
& \E\left(\varphi(\widetilde X^\perp_{n+1}) \exp(-\beta \mathcal W^1_{n+1}) \right) \\
&=\int_{(\overline x_k^\perp)_{0 \le k \le K}} \varphi(\overline x_K^\perp) \prod_{k=0}^{K-1} \exp(-\beta [V(\overline z_{k+1},\overline x_{k+1}^\perp)-V(\overline z_{k},\overline x_{k+1}^\perp)]) \frac{\exp(-\beta V(\overline z_0,\overline x_0^\perp))}{\partition \exp(-\beta F(\overline z_0))} \prod_{k=1}^{K}P_{\overline z_{k-1}}(\overline x^\perp_{k-1},{\rm d} \overline x^\perp_k) \, {\rm d} \overline x^\perp_0\\
&=\int_{(\overline x_k^\perp)_{1 \le k \le K}} \varphi(\overline x_K^\perp) \prod_{k=0}^{K-1} \exp(-\beta [V(\overline  z_{k+1},\overline x_{k+1}^\perp)-V(\overline z_{k},\overline x_{k+1}^\perp)]) \int_{\overline x_0^\perp} \frac{\exp(-\beta V(\overline z_0,\overline x_0^\perp))}{\partition \exp(-\beta F(\overline z_0))}  P_{\overline z_0}(\overline x^\perp_{0},{\rm d}\overline x^\perp_1) \, {\rm d}\overline x^\perp_0\prod_{k=2}^{K}P_{\overline z_{k-1}}(\overline x^\perp_{k-1},{\rm d} \overline x^\perp_k) \\
&=\int_{(\overline x_k^\perp)_{1 \le k \le K}} \varphi(\overline x_K^\perp) \prod_{k=0}^{K-1} \exp(-\beta [V(\overline z_{k+1},\overline x_{k+1}^\perp)-V(\overline z_{k},\overline x_{k+1}^\perp)]) \frac{\exp(-\beta V(\overline z_0,\overline x_1^\perp))}{\partition \exp(-\beta F(\overline z_0))} \, {\rm d}\overline x^\perp_1 \prod_{k=2}^{K}P_{\overline z_{k-1}}(\overline x^\perp_{k-1},{\rm d}\overline x^\perp_k) \\
&=\int_{(\overline x_k^\perp)_{1 \le k \le K}} \varphi(\overline x_K^\perp) \prod_{k=1}^{K-1} \exp(-\beta [V(\overline z_{k+1},\overline x_{k+1}^\perp)-V(\overline z_{k},\overline x_{k+1}^\perp)]) \frac{\exp(-\beta V(\overline z_1,\overline x_1^\perp))}{\partition \exp(-\beta F(\overline z_0))} \, {\rm d}\overline x^\perp_1 \prod_{k=2}^{K}P_{\overline z_{k-1}}(\overline x^\perp_{k-1},{\rm d} \overline x^\perp_k), 
\end{align*}
where we used the invariance of $\exp(-\beta V(\overline z_1,x^\perp)) {\rm d}x^\perp / \exp(-\beta F(\overline z_1))$ by the kernel $P_{\overline z_1}(x^\perp,{\rm d}\overline x^\perp)$. By iterating the argument, one obtains:
\begin{align*}
& \E\left(\varphi(\widetilde X^\perp_{n+1}) \exp(-\beta \mathcal W^1_{n+1}) \right) \\
&=\int_{(\overline x_k^\perp)_{K-1 \le k \le K}} \varphi(\overline x_K^\perp)  \exp(-\beta [V(\overline z_{K},\overline x_{K}^\perp)-V(\overline z_{K-1},\overline x_{K}^\perp)]) \frac{\exp(-\beta V(\overline z_{K-1},\overline x_{K-1}^\perp))}{\partition \exp(-\beta F(\overline z_0))} \, {\rm d}\overline x^\perp_{K-1} P_{\overline z_{K-1}}(\overline x^\perp_{K-1},{\rm d}\overline x^\perp_K) \\
&=\int_{\overline x_K^\perp} \varphi(\overline x_K^\perp)  \exp(-\beta [V(\overline z_{K},\overline x_{K}^\perp)-V(\overline z_{K-1},\overline x_{K}^\perp)]) \frac{\exp(-\beta V(\overline z_{K-1},\overline x_{K}^\perp))}{\partition \exp(-\beta F(\overline z_0))} \, {\rm d}\overline x^\perp_{K}  \\
&=\int_{\overline x_K^\perp} \varphi(\overline x_K^\perp) \frac{\exp(-\beta V(\overline z_{K},\overline x_{K}^\perp))}{\partition \exp(-\beta F(\overline z_0))} \, {\rm d}\overline x^\perp_{K} \\
&= \exp(-\beta [F(\overline z_K)-F(\overline z_0)]) \int_{\overline x_K^\perp} \varphi(\overline x_K^\perp) \frac{\exp(-\beta V(\overline z_{K},\overline x_{K}^\perp))}{\partition \exp(-\beta F(\overline z_K))} \, {\rm d}\overline x^\perp_{K},
\end{align*}
which is exactly~\eqref{eq:Jarz_consistency}.

\subsection{Proof of Proposition \ref{eq:propstrangrev}}
\label{proof:eq:propstrangrev}

Assume that~$X_n$ is distributed according to~$\nu$, and consider the law of~$(X_n,X_{n+1})$, where~$X_{n+1}$ is obtained from~$X_n$ by the algorithm of \cref{sec:symm}. Recall the notation~$X_{n+1}=(Z_{n+1},X^\perp_{n+1})$ and $\widetilde X_{n+1}=(\widetilde Z_{n+1},\widetilde X^\perp_{n+1})$. For any bounded measurable function $\varphi:\R^d \times \R^d \to \R$, one has
\begin{align*}
&  \E(\varphi(X_n,X_{n+1}))\\
&= \E\left[\varphi(X_n,\widetilde X_{n+1}) 1 \wedge \left( \frac{\exp(-\beta V(\widetilde Z_{n+1},\widetilde X_{n+1}^\perp)) Q(\widetilde{Z}_{n+1}, Z_n) \prod_{k=0}^K P_{\overline z_k(\widetilde Z_{n+1},Z_n)}(\overline X^\perp_{K+1-k},\overline X^\perp_{K-k})}{\exp(-\beta V(Z_n,X_n^\perp)) Q(Z_n,\widetilde{Z}_{n+1}) \prod_{k=0}^K P_{\overline z_k(Z_n,\widetilde Z_{n+1})}(\overline X^\perp_{k},\overline X^\perp_{k+1})}\right) \right]\\
&  + \E\left[\varphi(X_n, X_{n}) \left(1- 1 \wedge \left( \frac{\exp(-\beta V(\widetilde Z_{n+1},\widetilde X_{n+1}^\perp)) Q(\widetilde{Z}_{n+1}, Z_n) \prod_{k=0}^K P_{\overline z_k(\widetilde Z_{n+1},Z_n)}(\overline X^\perp_{K+1-k},\overline X^\perp_{K-k})}{\exp(-\beta V(Z_n,X_n^\perp)) Q(Z_n,\widetilde{Z}_{n+1}) \prod_{k=0}^K P_{\overline z_k(Z_n,\widetilde Z_{n+1})}(\overline X^\perp_{k},\overline X^\perp_{k+1})}\right) \right)\right] \\
&=  \int_{z_0,\overline x^\perp_0,z_K,\overline x^\perp_{K+1}} \varphi((z_0,\overline x^\perp_0),(z_K,\overline x^\perp_{K+1})) \mathcal Q((z_0,\overline x^\perp_0),(z_K,\overline x^\perp_{K+1})) \, {\rm d}z_0 \, {\rm d}\overline x^\perp_0 \, {\rm d} z_K \, {\rm d}\overline x^\perp_{K+1} \\
&\ \ +\int_{z_0,\overline x^\perp_0,z_K,\overline x^\perp_{K+1}} \varphi((z_0,\overline x^\perp_0),(z_0,\overline x^\perp_0)) \left( \nu(z_0,\overline x_0^\perp) - \mathcal Q((z_0,\overline x^\perp_0),(z_K,\overline x^\perp_{K+1}))\right) \, {\rm d}z_0 \, {\rm d}\overline x^\perp_0 \, {\rm d} z_K \, {\rm d}\overline x^\perp_{K+1} ,
\end{align*}
where $\mathcal Q$ is the density of the couple $(X_n,X_{n+1})$ when the move is accepted. More precisely (notice the identification for the dummy variables $\overline x^\perp_0 \to X_n^\perp$, $\overline x^\perp_{K+1} \to \widetilde X_{n+1}^\perp$, $z_0 \to Z_n$, $z_K \to \widetilde Z_{n+1}$),
\begin{align*}
&\mathcal Q((z_0,\overline x^\perp_0),(z_K,\overline x^\perp_{K+1}))= \partition^{-1} \exp(-\beta V(z_0,\overline x_0^\perp)) Q(z_0,z_K) \\
&\times \int_{(\overline x^\perp _k)_{1 \le k \le K}} \left[1 \wedge \left(\frac{\displaystyle \exp(-\beta V(z_K,\overline x^\perp_{K+1})) Q(z_K,z_0) \prod_{k=0}^{K} P_{\overline z_k(z_K,z_0)}(\overline x^\perp_{K+1-k}, \overline x^\perp_{K-k})}{\displaystyle \exp(-\beta V(z_0,\overline x_0^\perp)) Q(z_0,z_K) \prod_{k=0}^{K} P_{\overline z_k(z_0,z_K)}(\overline x^\perp_{k}, \overline x^\perp_{k+1})} \right)\right] \prod_{k=0}^{K} P_{\overline z_k(z_0,z_K)}(\overline x^\perp_{k}, \overline x^\perp_{k+1}) \, \prod_{k=1}^{K} {\rm d}\overline x^\perp_k\\
&=\partition^{-1}
\int_{(\overline x^\perp _k)_{1 \le k \le K}} \left( \exp(-\beta V(z_0,\overline x_0^\perp)) Q(z_0,z_K) \prod_{k=0}^{K} P_{\overline z_k(z_0,z_K)}(\overline x^\perp_{k}, \overline x^\perp_{k+1})\right)  \\
&\qquad \qquad  \qquad \qquad  \wedge \left( \exp(-\beta V(z_K,\overline x^\perp_{K+1})) Q(z_K,z_0) \prod_{k=0}^{K} P_{\overline z_k(z_K,z_0)}(\overline x^\perp_{K+1-k}, \overline x^\perp_{K-k}) \right) \, \prod_{k=1}^{K} {\rm d}\overline x^\perp_k.
\end{align*}
By renaming the variables inside the integral as $(\overline x^\perp _k)_{1 \le k \le K} \mapsto (\overline x^\perp _{K+1-k})_{1 \le k \le K}$, we see that
\[
\mathcal Q((z_0,\overline x^\perp_0),(z_K,\overline x^\perp_{K+1}))=\mathcal Q((z_K,\overline x^\perp_{K+1}),(z_0,\overline x^\perp_0)),
\]
which indeed proves the reversibility of the chain: for any bounded measurable function~$\varphi$,
\[
\E(\varphi(X_n,X_{n+1}))=\E(\varphi(X_{n+1},X_n)).
\]

\subsection{Proof of Proposition \ref{eq:propStrangWorks}}
\label{proof:eq:propStrangWorks}

One has (forgetting the term $\frac{Q(\widetilde{Z}_{n+1}, Z_n)}{Q( Z_n,\widetilde{Z}_{n+1})}$ which appears in both the left and right-hand sides of~\eqref{eq:work2}):
\begin{align*}
&\frac{\exp(-\beta V(\widetilde Z_{n+1},\widetilde X_{n+1}^\perp))  \prod_{k=0}^{K} P_{\overline z_k(\widetilde Z_{n+1}, Z_n )}(\overline X^\perp_{K+1-k},\overline X^\perp_{K-k})}{\exp(-\beta V(Z_n,X_n^\perp))  \prod_{k=0}^{K} P_{\overline z_k(Z_n,\widetilde Z_{n+1})}(\overline X^\perp_{k},\overline X^\perp_{k+1})}\\
&=   \frac{\exp(-\beta V(\widetilde Z_{n+1},\widetilde X_{n+1}^\perp))  \prod_{k=0}^{K} P_{\overline z_{K-k}( Z_n,\widetilde Z_{n+1} )}(\overline X^\perp_{K+1-k},\overline X^\perp_{K-k})}{\exp(-\beta V(Z_n,X_n^\perp))  \prod_{k=0}^{K} P_{\overline z_k(Z_n,\widetilde Z_{n+1})}(\overline X^\perp_{k},\overline X^\perp_{k+1})}\\
&=   \frac{\exp(-\beta V(\widetilde Z_{n+1},\widetilde X_{n+1}^\perp))  \prod_{k=0}^{K} P_{\overline z_{k}( Z_n,\widetilde Z_{n+1} )}(\overline X^\perp_{k+1},\overline X^\perp_{k})}{\exp(-\beta V(Z_n,X_n^\perp))  \prod_{k=0}^{K} P_{\overline z_k(Z_n,\widetilde Z_{n+1})}(\overline X^\perp_{k},\overline X^\perp_{k+1})}
=   \frac{\exp(-\beta V(\widetilde Z_{n+1},\widetilde X_{n+1}^\perp))}{\exp(-\beta V(Z_n,X_n^\perp)) }  \prod_{k=0}^{K} \frac{P_{\overline z_{k}( Z_n,\widetilde Z_{n+1} )}(\overline X^\perp_{k+1},\overline X^\perp_{k})}{ P_{\overline z_k(Z_n,\widetilde Z_{n+1})}(\overline X^\perp_{k},\overline X^\perp_{k+1})}\\
&=   \frac{\exp(-\beta V(\overline z_K(Z_n,\widetilde Z_{n+1} ) ,\overline X_{K+1}^\perp))}{\exp(-\beta V(\overline z_0(Z_n,\widetilde Z_{n+1} ),\overline X_0^\perp)) }  \prod_{k=0}^{K} \frac{\exp(-\beta V(\overline z_{k}( Z_n,\widetilde Z_{n+1}),\overline X^\perp_{k}))}{ \exp(-\beta V(\overline z_{k}( Z_n,\widetilde Z_{n+1}),\overline X^\perp_{k+1}))}\\
&=     \frac{\prod_{k=1}^{K}\exp(-\beta V(\overline z_{k}( Z_n,\widetilde Z_{n+1} ),\overline X^\perp_{k}))}{ \prod_{k=0}^{K-1} \exp(-\beta V(\overline z_{k}( Z_n,\widetilde Z_{n+1} ),\overline X^\perp_{k+1}))}
=     \frac{\prod_{k=1}^{K}\exp(-\beta V(\overline z_{k}( Z_n,\widetilde Z_{n+1}),\overline X^\perp_{k}))}{ \prod_{k=1}^{K} \exp(-\beta V(\overline z_{k-1}( Z_n,\widetilde Z_{n+1}),\overline X^\perp_{k}))}\\
&= \exp(-\beta {\mathcal W}^2_{n+1}),
\end{align*} 
where we used~\eqref{eq:rev_chemin_prime} in the first equality, a change of indices ($K-k \to k$) in the numerator in the second equality, and~\eqref{eq:rev_second} in the fourth one. 

To conclude the proof, let us prove~\eqref{eq:Jarz_consistency} in this setting (recall that we suppose $X^\perp_n$ at equilibrium). For any bounded measurable function~$\varphi$ (we omit here to indicate explicitly the dependence of $\overline z_k$ on ($Z_n,\widetilde Z_{n+1})$):
\begin{align*}
& \E\left(\varphi(\widetilde X^\perp_{n+1}) \exp(-\beta \mathcal W^2_{n+1}) \right) \\
&=\int_{(\overline x_k^\perp)_{0 \le k \le K+1}} \varphi(\overline x_{K+1}^\perp) \prod_{k=1}^{K} \exp(-\beta [V(\overline z_{k},\overline x_k^\perp)-V(\overline z_{k-1},\overline x_k^\perp)]) \frac{\exp(-\beta V(\overline z_0,\overline x_0^\perp))}{\partition \exp(-\beta F(\overline z_0))} \prod_{k=0}^{K}P_{\overline z_k}(\overline x^\perp_{k},{\rm d} \overline x^\perp_{k+1}) \, {\rm d}\overline x^\perp_0\\
&=\int_{(\overline x_k^\perp)_{1 \le k \le K+1}} \varphi(\overline x_{K+1}^\perp) \prod_{k=1}^{K} \exp(-\beta [V(\overline  z_{k},\overline x_k^\perp)-V(\overline z_{k-1},\overline x_k^\perp)]) \int_{\overline x_0^\perp} \frac{\exp(-\beta V(\overline z_0,\overline x_0^\perp))}{\partition \exp(-\beta F(\overline z_0))}  P_{\overline z_0}(\overline x^\perp_{0},{\rm d}\overline x^\perp_1) \, {\rm d}\overline x^\perp_0\prod_{k=1}^{K}P_{\overline z_k}(\overline x^\perp_{k},{\rm d}\overline x^\perp_{k+1}) \\
&=\int_{(\overline x_k^\perp)_{1 \le k \le K+1}} \varphi(\overline x_{K+1}^\perp) \prod_{k=1}^{K} \exp(-\beta [V(\overline z_{k},\overline x_k^\perp)-V(\overline z_{k-1},\overline x_k^\perp)]) \frac{\exp(-\beta V(\overline z_0,\overline x_1^\perp))}{\partition \exp(-\beta F(\overline z_0))} \, {\rm d}\overline x^\perp_1 \prod_{k=1}^{K}P_{\overline z_k}(\overline x^\perp_{k},{\rm d}\overline x^\perp_{k+1}) \\
&=\int_{(\overline x_k^\perp)_{1 \le k \le K+1}} \varphi(\overline x_{K+1}^\perp) \prod_{k=2}^{K} \exp(-\beta [V(\overline z_{k},\overline x_k^\perp)-V(\overline z_{k-1},\overline x_k^\perp)]) \frac{\exp(-\beta V(\overline z_1,\overline x_1^\perp))}{\partition \exp(-\beta F(\overline z_0))} \, {\rm d}\overline x^\perp_1 \prod_{k=1}^{K}P_{\overline z_k}(\overline x^\perp_{k},{\rm d}\overline x^\perp_{k+1}),
\end{align*}
where we used the invariance of $\exp(-\beta V(\overline z_0,x^\perp)) \, {\rm d}x^\perp / \exp(-\beta F(\overline z_0))$ by the kernel $P_{\overline z_0}(x^\perp,{\rm d}\overline x^\perp)$. By iterating the argument, one obtains:
\begin{align*}
& \E\left(\varphi(\widetilde X^\perp_{n+1}) \exp(-\beta \mathcal W^2_{n+1}) \right) \\
  &=\int_{(\overline x_{K-1}^\perp,\overline x_K^\perp,\overline x_{K+1}^\perp)}  \varphi(\overline x_{K+1}^\perp)  \exp(-\beta [V(\overline z_{K},\overline x_{K}^\perp)-V(\overline z_{K-1},\overline x_{K}^\perp)]) \frac{\exp(-\beta V(\overline z_{K-1},\overline x_{K-1}^\perp))}{\partition \exp(-\beta F(\overline z_0))} \, {\rm d}\overline x^\perp_{K-1} \\
  & \qquad\qquad\qquad\qquad\qquad\qquad\qquad\qquad\qquad\qquad\qquad\qquad \times P_{\overline z_{K-1}}(\overline x^\perp_{K-1},{\rm d}\overline x^\perp_K) P_{\overline z_K}(\overline x^\perp_{K},{\rm d}\overline x^\perp_{K+1}) \\
  &=\int_{(\overline x_K^\perp,\overline x_{K+1}^\perp)} \varphi(\overline x_{K+1}^\perp)  \exp(-\beta [V(\overline z_{K},\overline x_{K}^\perp)-V(\overline z_{K-1},\overline x_{K}^\perp)]) \int_{\overline x_{K-1}^\perp}\frac{\exp(-\beta V(\overline z_{K-1},\overline x_{K-1}^\perp))}{\partition \exp(-\beta F(\overline z_0))} \, {\rm d}\overline x^\perp_{K-1} \\
  & \qquad\qquad\qquad\qquad\qquad\qquad\qquad\qquad\qquad\qquad\qquad\qquad \times P_{\overline z_{K-1}}(\overline x^\perp_{K-1},{\rm d}\overline x^\perp_K) P_{\overline z_K}(\overline x^\perp_{K},{\rm d}\overline x^\perp_{K+1}) \\
&=\int_{(\overline x_K^\perp,\overline x_{K+1}^\perp)} \varphi(\overline x_{K+1}^\perp)  \exp(-\beta [V(\overline z_{K},\overline x_{K}^\perp)-V(\overline z_{K-1},\overline x_{K}^\perp)]) \frac{\exp(-\beta V(\overline z_{K-1},\overline x_{K}^\perp))}{\partition \exp(-\beta F(\overline z_0))} \, {\rm d}\overline x^\perp_{K}  P_{\overline z_K}(\overline x^\perp_{K},{\rm d}\overline x^\perp_{K+1}) \\
&=\int_{(\overline x_K^\perp,\overline x_{K+1}^\perp)} \varphi(\overline x_{K+1}^\perp) \frac{\exp(-\beta V(\overline z_{K},\overline x_{K}^\perp))}{\partition \exp(-\beta F(\overline z_0))} \, {\rm d}\overline x^\perp_{K}  P_{\overline z_K}(\overline x^\perp_{K},{\rm d}\overline x^\perp_{K+1}) \\
&=\int_{\overline x_{K+1}^\perp} \varphi(\overline x_{K+1}^\perp)  
 \frac{\exp(-\beta V(\overline z_{K},\overline x_{K+1}^\perp))}{\partition \exp(-\beta F(\overline z_0))} \, {\rm d}\overline x^\perp_{K+1} \\
& = \exp(-\beta [F(\overline z_K)-F(\overline z_0)]) \int_{\overline x_{K+1}^\perp} \varphi(\overline x_{K+1}^\perp)  
 \frac{\exp(-\beta V(\overline z_{K},\overline x_{K+1}^\perp))}{\partition \exp(-\beta F(\overline z_K))} \, {\rm d}\overline x^\perp_{K},
\end{align*}
which is exactly~\eqref{eq:Jarz_consistency}.

\clearpage

%---------------------------------------------------------------
\section{Choice of collective variable}
\label{sec:app_cv_choice_failure}

 The algorithms presented in this work rely on a suitable choice of the collective variable(s). Most of the examples we consider in the main text are such that the conditional distributions $\nu_\perp({\rm d}x^\perp|z)$ are unimodal.   
 In \cref{fig:app_cv_choice_failure}, we contrast this with an extreme negative example where the conditional measure is bimodal for some values of $z$, with a mode switching as the $z$ values evolve. Then, the equilibration steps fail to perform a switch to the other mode that appears when following the CV schedule (akin to a first order phase transition in the language of statistical mechanics). Due to lack of equilibration, the proposed moves following from these constructed paths will have vanishing acceptance probabilities and the algorithm will fail.
Note that metastability in the conditional measures does not necessarily prevent transitions as we demonstrate on the $\phi^4$ model of \cref{sec:phi4}.

\begin{figure}[h!]
    \centering
    \includegraphics{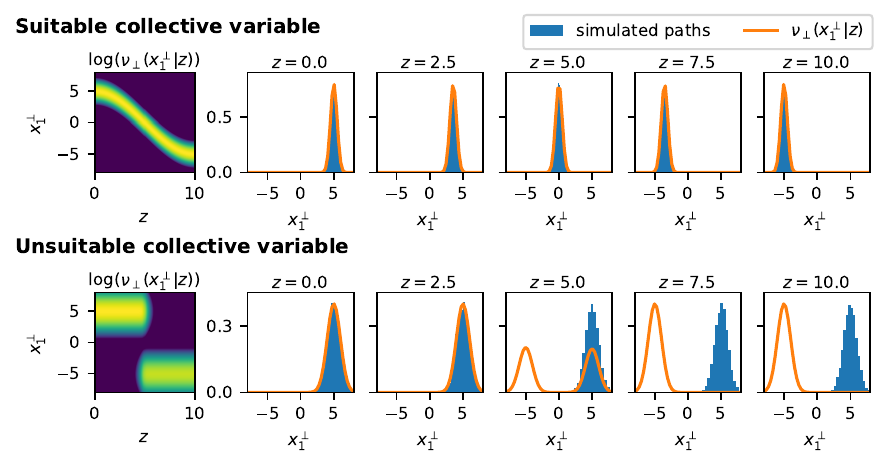}
    \caption{Positive and negative examples of constructed transitions paths for a suitable and an unsuitable collective variable. Both examples are for total dimension $d=20$. The example in the upper panel is the one described in \cref{subsec:fixedmodeswitch}, the lower panel example is a Gaussian mixture model with $\nu=\mathcal{N}(\mu_1,I)/2+\mathcal{N}(\mu_2,I)/2$, $\mu_1=(0,5,\dots,5)^T$ and $\mu_2=(10,-5,\dots,-5)^T$. With the collective variable $z=x_1$ used throughout the numerical section, paths from $\overline{z}_0=0$ to $\overline{z}_K=10$ were constructed using a Mala resampling kernel with $\Delta t = 0.4$ along a CV schedule with $K=3200$ steps. Results are shown for 10,000 chains, where the coordinate $x^\perp$ was randomly sampled from the conditional distribution at the initial point $\nu_\perp(x^\perp|\overline{z}_0)$. For the SymMala algorithm, and assuming a symmetric kernel $Q(z,
    \widetilde{z})$ such that is does not enter into the acceptance criterion, the constructed paths in the upper panel have an average acceptance rate of $\sim30\%$, whereas the paths in the lower panel are all rejected without exception.}
    \label{fig:app_cv_choice_failure}
\end{figure}

%---------------------------------------------------------------
\section{Details of the Wavelet decomposition}
\label{sec:wavelet}

We use a Haar wavelet decomposition and just a give a very practical definition here. For a more extensive introduction, we refer the reader to~\cite{chui1992}. For a one dimensional input signal $\{\phi_i\}_{i=1}^N$ of size $N=2^n$, set $\varphi^{(0)}_i = \phi_i$ and repeatedly apply the following transformation $n$ times:
\begin{align*}
    \varphi^{(j+1)}_i &= \frac{\varphi^{(j)}_{2i+1} + \varphi^{(j)}_{2i}}{\sqrt{2}},\\
    \overline{\varphi}^{(j+1)}_i &= \frac{\varphi^{(j)}_{2i+1} - \varphi^{(j)}_{2i}}{\sqrt{2}}.
\end{align*}
As a result one obtains the orthogonal wavelet components $\overline{\varphi}^{(j)}$ at all scales $j=1,\dots, n$. In particular, the coarsest-scale wavelet field has a single component entry
\begin{align*}
  \varphi^{(n)}= \frac{1}{(\sqrt{2})^n}\sum_{i=1}^N \phi_i = \frac{1}{\sqrt{N}}\sum_{i=1}^N \phi_i.
\end{align*}
For the $\phi^4$ model with $\varphi^{(n)}=\sqrt{N}\cdot M$, the coarsest-scale wavelet field is thus the magnetization up to a prefactor. In the numerical experiments, we used $\varphi^{(n)}$ as the collective variable $z$ and the $(\overline{\varphi}^{(j)})_{0 \leq j \leq n}$ as the variables $x^\perp$.
\end{document}